\documentclass[]{IEEEtran}
\usepackage[round]{natbib}
\usepackage{multicol}
\usepackage[bookmarks=false]{hyperref}
\hypersetup{
    a4paper=true
    colorlinks=true,
    citecolor=black,
    filecolor=black,
    linkcolor=black,
    urlcolor=black,
%     pdfstartview=FitW, %acrobat reader prints an error because of this!
    pdfauthor={author},
    pdfsubject={subject},
    pdftitle={title},
    bookmarksopen=false,
    plainpages=false,
    hidelinks
}
\usepackage{graphicx}
\usepackage{latexsym}
\usepackage{amsmath,amssymb,subfigure}
\usepackage{amsthm}
\usepackage{epstopdf}
\usepackage[ruled, vlined]{algorithm2e}
\usepackage{Cris}
\usepackage{color}
\usepackage{varwidth}
\usepackage{bbm}
\usepackage{empheq}
\usepackage{cases}

\usepackage{flushend}

\usepackage{tikz}
\usetikzlibrary{calc,decorations.pathreplacing}

\newcommand{\tikzmark}[1]{%
  \tikz[overlay,remember picture,baseline] \node [anchor=base] (#1) {};}

\newcommand{\DrawVerticalBrace}[3][]{%
    \begin{tikzpicture}[overlay,remember picture]
        \draw[decorate,decoration={brace,amplitude=1ex}, #1] 
            ($(#3)+(0.0em,-0.5ex)$) to
            ($(#2)+(0.0em,+1.7ex)$);%
    \end{tikzpicture}%
}

\newcommand{\mc}[1]{\mathcal{#1}}

\newcommand{\bea}{\begin{eqnarray}}
\newcommand{\eea}{\end{eqnarray}}
\newcommand{\beas}{\begin{eqnarray*}}
\newcommand{\eeas}{\end{eqnarray*}}
\newcommand{\leftm}{\left[\begin{array}}
\newcommand{\rightm}{\end{array}\right]}
\newcommand{\reals}{\mbox{$\mathbb R$}}
\newcommand{\ones}{\mathbbm{1}}

\def\diag{{\mbox{\bf diag}}}
\def\rk{{\mbox{\bf rk}}}

\newtheorem{thm}{Theorem}[section]
\newtheorem{cor}[thm]{Corollary}
\newtheorem{prop}[thm]{Proposition}
\newtheorem{lem}[thm]{Lemma}
\newtheorem{definition}[thm]{Definition}
\newtheorem{rem}[thm]{Remark}

\newtheorem{assumption}[thm]{Assumption}

\def\Vlambda{{\includegraphics[width=0.5\columnwidth]{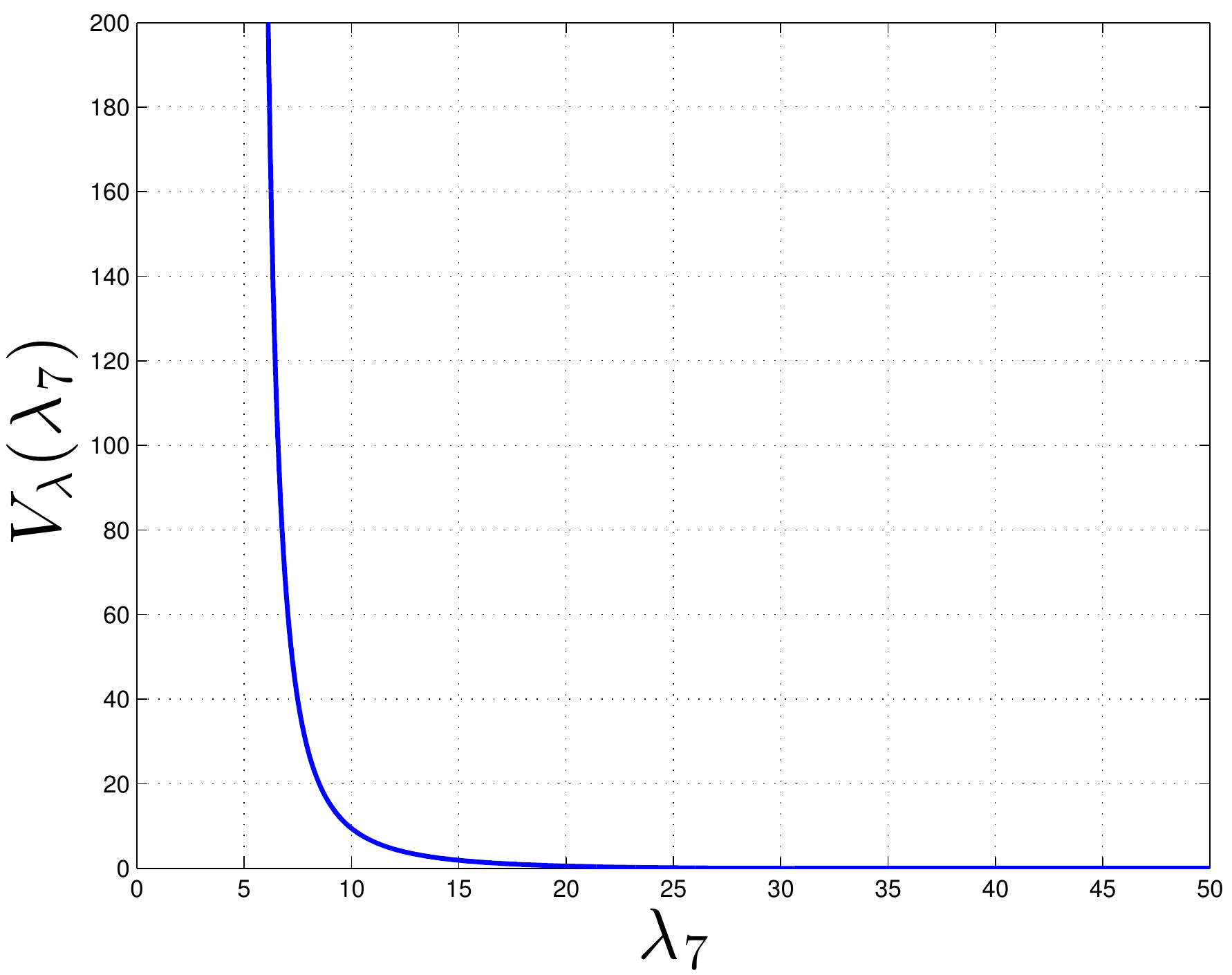}}}
\def\gammaa{{\includegraphics[scale=0.17]{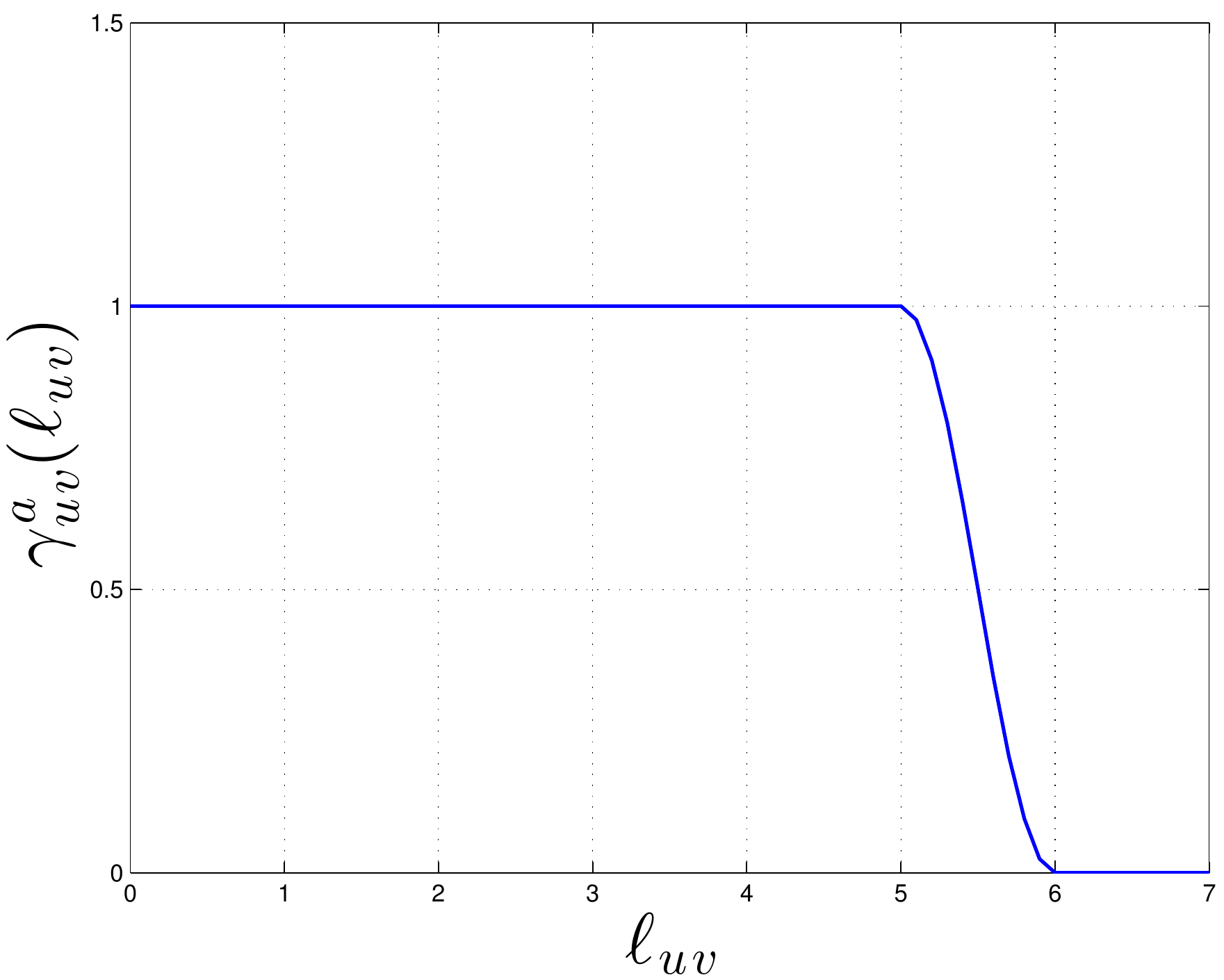}}}
\def\gammab{{\includegraphics[scale=0.17]{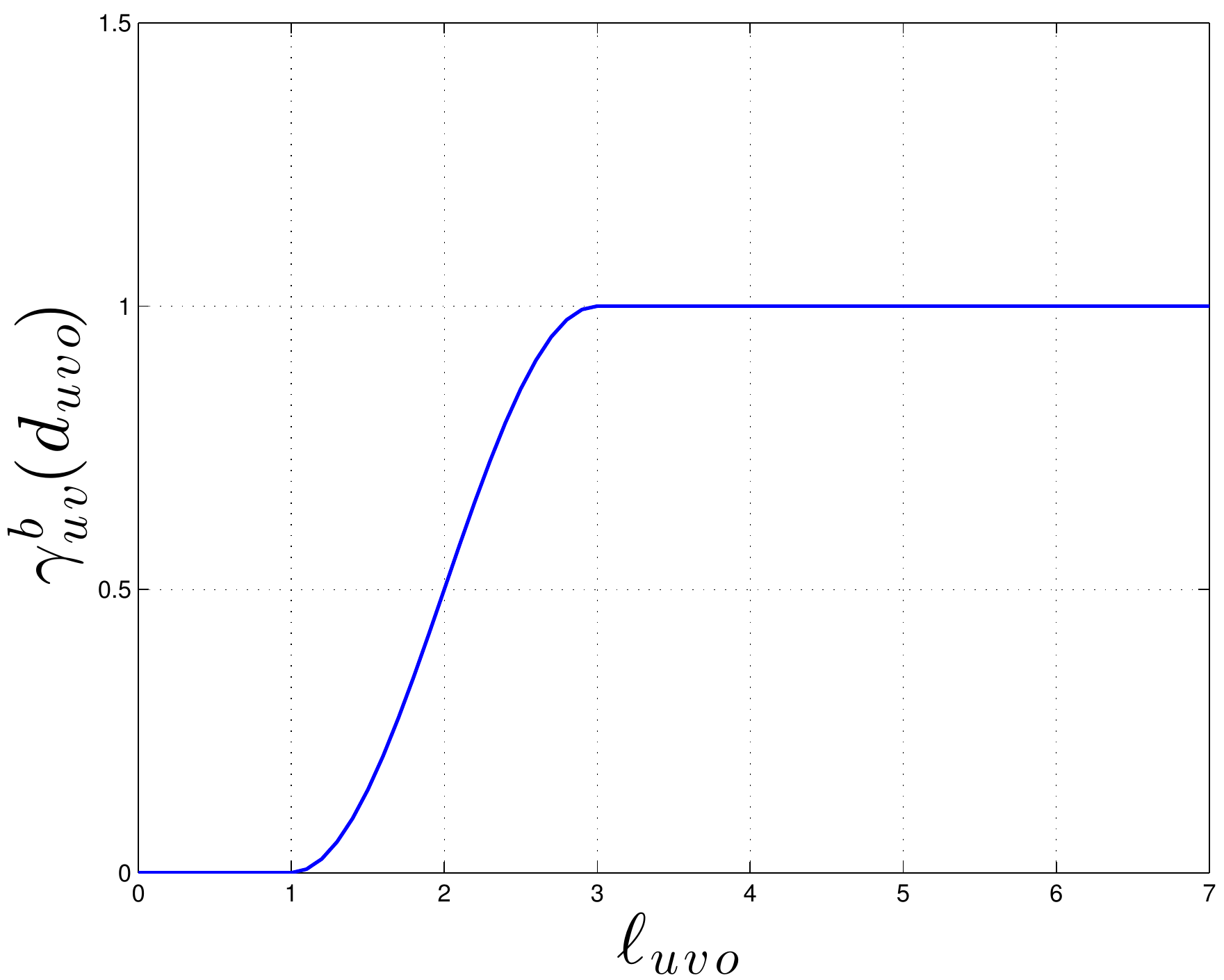}}}
\def\betaf{{\includegraphics[scale=0.17]{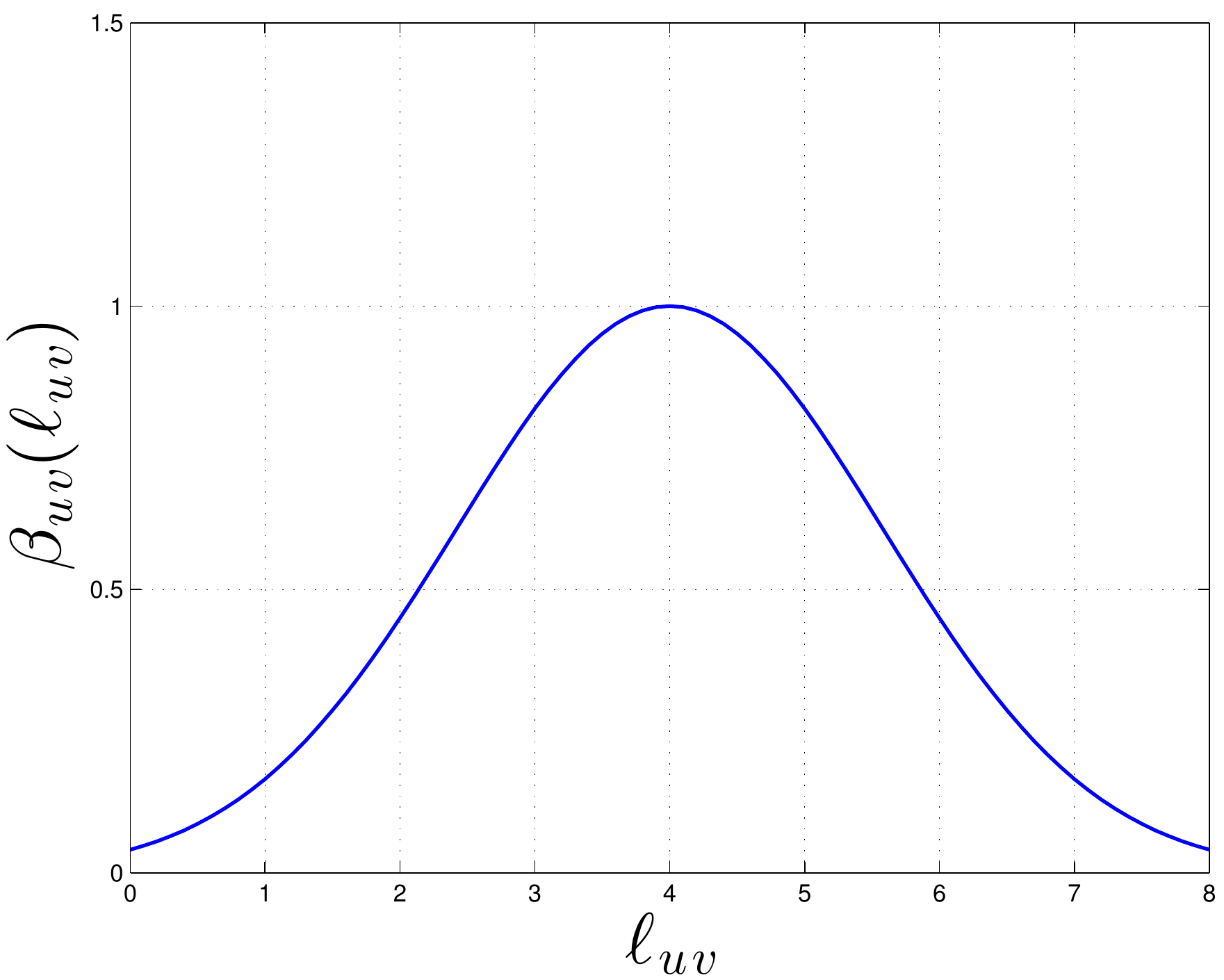}}}

\graphicspath{{final_figures/}}

\begin{document}

\title{Decentralized Rigidity Maintenance Control with Range Measurements for Multi-Robot Systems}

\author{Daniel Zelazo, Antonio Franchi, Heinrich H. B\"ulthoff, and Paolo Robuffo Giordano
\thanks{D.~Zelazo is with the Faculty of Aerospace Engineering, Technion - Israel Institute of Technology, Haifa 32000, Israel {\tt \scriptsize dzelazo@technion.ac.il}}%
\thanks{A~Franchi is with the Centre National de la Recherche Scientifique (CNRS),
Laboratoire d'Analyse et d'Architecture des Syst\`emes (LAAS),
7 Avenue du Colonel Roche, 31077 Toulouse CEDEX 4, France. {\tt \href{mailto:antonio.franchi@laas.fr}{antonio.franchi@laas.fr}}}
\thanks{H.~H.~B\"ulthoff  is with the Max Planck Institute for Biological Cybernetics, Spemannstra\ss{}e 38, 72076 T\"ubingen, Germany {\tt \scriptsize \href{mailto:hhb@tuebingen.mpg.de}{hhb}@tuebingen.mpg.de}. H.~H.~B\"ulthoff is additionally with the Department of Brain and Cognitive Engineering, Korea University,  Seoul, 136-713 Korea.}
\thanks{P.~Robuffo~Giordano is with the CNRS at Irisa and Inria Rennes Bretagne Atlantique, Campus de Beaulieu, 35042 Rennes Cedex, France {\tt \scriptsize  prg@irisa.fr}.}}

\markboth{}{}%IEEE Transactions on Robotics}{}%

\maketitle

\begin{abstract}
This work proposes a fully decentralized strategy for maintaining the formation rigidity of a multi-robot system using only range measurements, while still allowing the graph topology to change freely over time. 
In this direction, a first contribution of this work is an extension of rigidity theory to \emph{weighted frameworks} and the \emph{rigidity eigenvalue}, which when positive ensures the infinitesimal rigidity of the framework.  We then propose a distributed algorithm for estimating a common relative position reference frame amongst a team of robots with only range measurements {in addition to one agent endowed with the capability of measuring the bearing to two other agents}.   This first estimation step is embedded into a subsequent distributed algorithm for estimating the rigidity eigenvalue associated with the weighted framework.  The estimate of the rigidity eigenvalue is finally used to generate a local control action for each agent that both maintains the rigidity property and enforces additional constraints such as collision avoidance and sensing/communication range limits and occlusions.
As an additional feature of our approach, the communication and sensing links among the robots are also left free to change over time while preserving rigidity of the whole framework. The proposed scheme is then experimentally validated with a robotic testbed consisting of $6$ quadrotor UAVs operating in a cluttered environment.
\end{abstract}

\begin{IEEEkeywords}
graph rigidity, decentralized control, multi-robot, distributed algorithms, distributed estimation.
\end{IEEEkeywords}

\IEEEpeerreviewmaketitle

\section{Introduction}

The coordinated and decentralized control of multi-robot systems is an enabling technology for a variety of applications.  Multi-robot systems benefit from an increased robustness against system failures due to their ability to adapt to dynamic and uncertain environments.  There are also numerous economic benefits by considering the price of small and cost-effective autonomous systems as opposed to their more expensive monolithic counterparts.  Currently, there is a great interest in implementing these systems from deep space interferometry missions and distributed sensing and data collection, to civilian search and rescue operations, among others ~\citep{2002-AkySanCay, 2008-AndFidYu_Van, Bristow2000, 2011-LinMelKum,2010-MesEge, 2009-MicFinKum, 2006-Mur}.

The challenges associated with the design and implementation of multi-agent systems range from hardware and software considerations to the development of a solid theoretical foundation for their operation.  In particular, the \emph{sensing} and \emph{communication} capabilities of each agent will dictate the distributed protocols used to achieve team objectives.  For example, if each agent in a multi-robot system is equipped with a GPS-like sensor, then tasks such as formation keeping or localization can be trivially accomplished by communication between robots of their state information in a common world-frame.  However, in applications operating in harsher environments, i.e., indoors, underwater, or in deep-space, GPS is not a viable sensing option~\citep{2013-ScaAchDoiFraKosMarAchChlChaKneGurHenLeeLynMeiPolRenSieStuTanTroWei}.  Indeed, in these situations, agents must rely on sensing without knowledge of a common inertial reference frame~\citep{2012q-FraMasGraRylBueRob}.  In these scenarios, \emph{relative sensing} can provide accurate measurements of, for example, range or bearing, but without any common reference frame. 

A further challenge related to the sensing capabilities of multi-robot systems is the availability of these measurements.  Sensing constraints such as line-of-sight requirements, range, and power limitations introduce an important system-level requirement, and also lead to an inherently time-varying description of the sensing network.  Successful decentralized coordination protocols, therefore, must also be able to manage these constraints.

These issues lead to important architectural requirements for the sensing and communication topology in order to achieve the desired higher level tasks (i.e., formation keeping or localization).   The \emph{connectivity} of the sensing and communication topology is one such property that has received considerable attention in the multi-robot communities \citep{2011d-RobFraSecBue, RoFrSeBu:13, Ji2007}.  However, connectivity alone is not sufficient to perform certain tasks when only relative sensing is used.   For these systems, the concept of \emph{rigidity} provides the correct framework for defining an appropriate sensing and communication topology architecture.   Rigidity is a combinatorial theory for characterizing the ``stiffness" or ``flexibility" of structures formed by rigid bodies connected by flexible linkages or hinges.

The study of rigidity has a rich history with contributions from mathematics and engineering disciplines \citep{Connelly2009, Jacobs1997, 2009-KriBroFra, 1970-Lam, 2009-ShaFidAnd, 1985-TayWhi, Eren2004}.  Recently, rigidity theory has taken an outstanding role in the motion control of mobile robots.  The rigidity framework allows for applications, such as formation control, to employ control algorithms relying on only \emph{relative distance measurements}, as opposed to \emph{relative position measurements} from a global or relative internal frame~\citep{2008-AndFidYu_Van, 2008-AndYu_FidHen, Baillieul2007, 2009-KriBroFra, 2002-OlfMur,smith2007automatic}.   For example, in \citep{2009-KriBroFra} it was shown that formation stabilization using only distance measurements can be achieved only if rigidity of the formation is maintained.  
Moreover, rigidity represents also a necessary condition for estimating relative positions using only relative distance measurements \citep{2006-AspEreGolMorWhiYanAndBel,2010-CalCarWei}.

In a broader context, rigidity turns out to be an important architectural property of many multi-agent systems when a common inertial reference frame is unavailable.  Applications that rely on sensor fusion for localization, exploration, mapping and cooperative tracking of a target, all can benefit from notions in rigidity theory~\citep{2009-ShaFidAnd, 2006-AspEreGolMorWhiYanAndBel, Calafiore2010, Williams2014, Wu2010}.  
The concept of rigidity, therefore, provides the theoretical foundation for approaching \emph{decentralized} solutions to the aforementioned problems using \emph{distance measurement sensors}, and thus establishing an appropriate framework for relating system level architectural requirements to the sensing and communication capabilities of the system.

\subsection{Main Contributions} \label{contributions}

In general, rigidity as a property of a given formation {(i.e., of the robot spatial arrangement)} has been studied from either a purely combinatorial perspective~\citep{1970-Lam}, or by providing an algebraic characterization via the state-dependent \emph{rigidity matrix}~\citep{1985-TayWhi}.  In our previous work \citep{2012-ZelFraRob}, we introduced a 
related matrix termed the \emph{symmetric rigidity matrix}.  A main result of \citep{2012-ZelFraRob} was to provide a necessary and sufficient conditions for rigidity in the plane in terms of the positivity of a particular eigenvalue of the symmetric rigidity matrix; this eigenvalue we term the \emph{rigidity eigenvalue}.
This result is in the same spirit as the celebrated Fiedler eigenvalue\footnote{The second smallest eigenvalue of the graph Laplacian matrix.} and its relation to the connectivity of a graph \citep{2001-GodRoy}.   A first contribution of this work is the extension of the results on the rigidity eigenvalue provided in \citep{2012-ZelFraRob} to 3-dimensional frameworks, as well as the %An additional contribution in this direction is the 
introduction of the concept of \emph{weighted rigidity} and the corresponding \emph{weighted rigidity matrix}. This notion allows for the concept of rigidity to include state-dependent weight functions on the edges of the graph, weights which can then be exploited to take into account inter-agent sensing and communication constraints and/or requirements.

{A gradient-based \emph{rigidity maintenance action} aimed at `maximizing' the rigidity eigenvalue was also proposed in~\citep{2012-ZelFraRob}.
However, while this gradient control law was decentralized in structure, there was still a dependence on the availability of several global quantities, namely, of the robot relative positions in some \emph{common reference frame}, of the value of the rigidity eigenvalue, and of the \emph{rigidity eigenvector} associated with the rigidity eigenvalue.}
{A main contribution of this work is then the development of the machinery needed to distributedly estimate all these global quantities by resorting to only \emph{relative distance measurements} among neighbors, so as to ultimately allow for a \emph{fully distributed and range-based} implementation of the rigidity maintenance controller. To this end, we first show that if the formation is infinitesimally rigid, it is possible to \emph{distributedly} estimate the relative positions of neighboring robots in a common reference frame from only range-based measurements. {Our approach relies explicitly on the form of the symmetric rigidity matrix developed here, in contrast to other approaches focusing on distributed implementations of centralized estimation schemes, such as a Gauss-Newton approach used in \cite{Calafiore2010}.}This first step is then instrumental for the subsequent development of the distributed estimation of the rigidity eigenvalue and eigenvector needed by the rigidity gradient controller. This is obtained by exploiting an appropriate modification of the \emph{power iteration method} for eigenvalue estimation following from the works~\citep{2011d-RobFraSecBue, 2010-YanFreGorLynSriSuk} for the distributed estimation of the connectivity eigenvalue of the graph Laplacian and now applied to rigidity. Finally, we show how to exploit the weights on the graph edges to embed constraints and requirements such as inter-robot and obstacle avoidance, limited communication and sensing ranges, and line-of-sight occlusions, into a unified gradient-based \emph{rigidity maintenance control law}.}

Our approach, therefore, can be considered as a contribution to the general problem of distributed strategies for maintaining certain architectural features of a multi-robot system (i.e. connectivity or rigidity) {with minimal sensing requirements (only relative distance measurements).}
Additionally, we also provide a thorough experimental validation of the entire framework by employing a group of $6$ quadrotor UAVs as robotic platforms to demonstrate the feasibility of our approach in real-world conditions.

The organization of this paper is as follows. Section~\ref{prelim} provides a brief overview of some notation and fundamental theoretical properties of graphs. In Section~\ref{sec:rigidity}, the theory of rigidity is introduced, and our extension of the rigidity eigenvalue to 3-dimensional weighted frameworks is given. We then proceed to present a general strategy for a distributed rigidity maintenance controller in Section~\ref{sec:rigidity_control}.  This section will provide details on certain operational constraints of the multi-robot team and how these constraints can be embedded in the control law.  This section also highlights the need to develop distributed algorithms for estimating a common reference frame for the team, outlined in Section~\ref{sec:relpos_est}, and estimation of the rigidity eigenvalue and eigenvector, detailed in Section~\ref{sec:rigidity_est}.  The results of the previous sections are then summarized in Section~\ref{sec:final_control} where the full distributed rigidity maintenance controller is given.  The applicability of these results are then experimentally demonstrated on a robotic testbed consisting of $6$ quadrotor UAVs operating in a obstacle populated environment.  Details of the experimental setup and results are given in Section~\ref{sec:experiment}.  Finally, some concluding remarks are offered in Section~\ref{sec:conclusion}.

\subsection{Preliminaries and Notations} \label{prelim}
The notation employed is standard.  Matrices are denoted by capital letters (e.g.,~$A$), and vectors by lower case letters (e.g.,~$x$).  {The $ij$-th entry of a matrix $A$ is denoted $[A]_{ij}$.}  The rank of a matrix $A$ is denoted $\rk[A]$.  Diagonal matrices will be written as $D = \diag{\{d_1,\ldots,d_n\}}$; this notation will also be employed for block-diagonal matrices.  A matrix and/or a vector that consists of all zero entries will be denoted by ${\bf 0}$; whereas, `$0$' will simply denote the scalar zero. Similarly, the vector $\ones_{{n}}$ denotes the ${n \times 1}$ vector of all ones.  The $n \times n$ identity matrix is denoted as $I_n$.  The set of real numbers will be denoted as $\reals$, and $\| \, \cdot \,  \|$ denotes the standard Euclidean $2$-norm for vectors.  The Kronecker product of two matrices $A$ and $B$ is written as $A \otimes B$ \citep{Horn1991}.

Graphs and the matrices associated with them will be widely used in this work; see, e.g., \citep{2001-GodRoy}.  An undirected (simple) weighted graph $\mc{G}$ is specified by a vertex set $\mc{V}$, an edge set $\mc{E}$ whose elements characterize the incidence relation between distinct pairs of $\mc{V}$, and diagonal $|\mc{E}|\times |\mc{E}|$ weight-matrix ${W}$, with ${[}{W}{]}_{kk}\geq 0$ the weight on edge $e_k \in \mc{E}$. {In this work we consider only finite graphs and denote the cardinality of the node and edge sets as $|\mc{V}|=n$ and $|\mc{E}|=m$.}  %{The symbol $A$ is used later (and in general) to denote the adjacency matrix.} 
Two vertices $i$ and $j$ are called {\it adjacent} (or neighbors) when $\{i,j\}\in\mc{E}${.}%; we denote this by writing $i\sim j$.  
The \emph{neighborhood} of the vertex $i$ is the set $\mc{N}_i = \{j \in \mc{V} \, | \, \{i,j\} \in \mc{E}\}$.  An \emph{orientation} of an undirected graph $\mc{G}$ is the
assignment of directions to its edges, i.e., an edge $e_{k}$ is an ordered pair $(i,j)$ such that $i$ and $j$ are, respectively, the initial and the terminal nodes of $e_{k}$.

The incidence matrix $E(\mc{G}) \in \reals^{{n \times m}}$ is a $\{0,\pm 1\}$-matrix
with rows and columns indexed by the vertices and edges of $\mc{G}$ such that $[E(\mc{G})]_{ik}$ has the value `$+1$' if node $i$ is the initial node of edge $e_k$, `$-1$' if it is the terminal node, and `0' otherwise.  The degree of vertex $i$, $d_i$, is the cardinality of the set of vertices adjacent to it.  The degree matrix, $\Delta(\mc{G})$, and the adjacency matrix, $A(\mc{G})$, are defined in the usual way~\citep{2001-GodRoy}.  The (graph) Laplacian of $\mc{G}$, $L(\mc{G}) =  E(\mc{G})E(\mc{G})^T =  \Delta(\mc{G}) - A(\mc{G})$, is a positive-semidefinite matrix.  One of the most important results from algebraic graph theory in the context of 
collective motion control states that a graph is connected if and only if the second smallest eigenvalue of the Laplacian is positive~\citep{2001-GodRoy}.

Table \ref{tab:notation} provides a summary of the notations used throughout the document.

\begin{table}[b]
\caption{Notations}
\begin{center}
\begin{tabular}{|c||c|}\hline \hline
$\mc{G}=(\mc{V},\mc{E})$ & a graph defined by its vertex and edge sets \\ \hline
$\mc{N}_i(t)$ & \emph{time-varying} neighborhood of node $v_i \in \mc{V}$ \\ \hline
$p(i)$  & position vector in $\reals^3$ of the mapped node $v_i \in \mc{V}$; \\
$p_i^s$& $s \in \{x,y,z\}$ coordinate of position vector for node $i$ \\ \hline
$p(\mc{V})$ & stacked position {matrix of all nodes ($\reals^{{n} \times 3}$)} \\ \hline
$\xi(i)$ & velocity vector in $\reals^3$ of the node $v_i \in \mc{V}$ \\ \hline
$(\mc{G},p,\mc{W})$ & a weighted framework \\ \hline
$R(p,\mc{W})$ & rigidity matrix of a weighted framework \\ \hline
$\mc{R}$ & symmetric rigidity matrix of a weighted framework \\ \hline
$\lambda_7$, ${\bf v}_7$ (${\bf v}$) & rigidity eigenvalue and eigenvector \\ \hline
$ \ell_{ij} $ & distance between nodes $v_i,v_j \in \mc{V}$, i.e., $\|p(v_i)-p(v_j)\|$ \\ \hline
$\hat{\lambda}_7^i$ & agent $i$'s estimate of the rigidity eigenvalue \\ \hline
$\hat{{\bf v}}_i^s$ & $s$-coordinate of the agent $i$ estimation\\ &  of the rigidity eigenvector \\ \hline
$ \hat{p}_{i,c}$ & agent $i$ estimate of relative position vector $p_i-p_c$ \\ \hline
$\hat{p}$ & stacked vector of the relative  \\ 
& position vector estimate {$p_i-p_c$, $i=1\ldots {n}$}\\
\hline
{$\mbox{\bf{avg}}(x)$} & {the average of a vector $x \in \reals^n$, $\mbox{{\bf avg}}(x) = \frac{1}{n}\sum_{i=1}^nx_i$} \\ \hline
$ \overline{{\bf v}}_i^x$ & agent $i$ estimate of ${{\bf avg}}(\hat{{\bf v}}^x)$ \\ \hline
$ \overline{{\bf v}}_i^{2x}$ & agent $i$ estimate of ${{\bf avg}}(\hat{{\bf v}}^x\circ \hat{{\bf v}}^x)$ \\ \hline
$ z_i^{xy} $ & agent $i$ estimate of ${{\bf avg}}(\hat{ p}^{y,c}\circ \hat{{\bf v}}^x-\hat{p}^{x,c}\circ \hat{{\bf v}}^y)$ \\ \hline
$ z_i^{xz} $ & agent $i$ estimate of ${{\bf avg}}(\hat{ p}^{z,c}\circ \hat{{\bf v}}^x-\hat{p}^{x,c}\circ \hat{{\bf v}}^z)$ \\ \hline
$ z_i^{yz} $ & agent $i$ estimate of ${{\bf avg}}(\hat{p}^{y,c}\circ \hat{{\bf v}}^z-\hat{p}^{z,c}\circ \hat{{\bf v}}^y)$ \\ \hline
\end{tabular}
\end{center}
\label{tab:notation}
\end{table}%

\section{Rigidity and the Rigidity Eigenvalue}\label{sec:rigidity}

In this section we review the fundamental concepts of graph rigidity \citep{1993-GraSerSer, Jackson2007}.  A contribution of this work is an extension of our previous results on the concepts of the \emph{symmetric rigidity matrix} and \emph{rigidity eigenvalue} for 3-dimensional ambient spaces \citep{2012-ZelFraRob}, and the notion of \emph{weighted frameworks}.

\subsection{Graph Rigidity and the Rigidity Matrix}\label{sec:rigidity_matrix}

We consider graph rigidity from what is known as a $d$-\emph{dimensional bar-and-joint framework}.  A framework is the pair $(\mc{G}, p)$, where $\mc{G}=(\mc{V},\mc{E})$ is a graph, and $p\,:\,\mc{V} \rightarrow \reals^{d}$ maps each vertex to a point in $\reals^{d}$. In this work we consider frameworks in a three-dimensional ambient space, i.e., $d=3$.  Therefore, for node $u \in \mc{V}$, $p(u) = \leftm{ccc} p_u^x & p_u^y & p_u^z \rightm^T$ is the position vector in $\reals^3$ for the mapped node. We refer to the matrix $p(\mc{V}) = \leftm{ccc} p(v_1) & \cdots & p(v_{{n}}) \rightm^T \in \reals^{{n} \times 3}$ as the \emph{position matrix}. We now provide some basic definitions. 

\begin{definition}
Frameworks $(\mc{G},p_0)$ and $(\mc{G},p_1)$ are \emph{equivalent} if $\| p_0(u)-p_0(v)\| = \| p_1(u)-p_1(v)\|$ for all $\{u,v\} \in \mc{E}$, and are \emph{congruent} if $\| p_0(u)-p_0(v)\| = \| p_1(u)-p_1(v)\|$ for all $\{u,v\} \in \mc{V}$.
\end{definition}

\begin{definition}
 A framework $(\mc{G},p_0)$ is \emph{globally rigid} if every framework which is equivalent to $(\mc{G},p_0)$ is congruent to $(\mc{G},p_0)$.
\end{definition}
\begin{definition}
 A framework $(\mc{G},p_0)$ is \emph{rigid} if there exists an $\epsilon > 0$ such that every framework $(\mc{G},p_1)$ which is equivalent to $(\mc{G},p_0)$ and satisfies $\|p_0(v)-p_1(v)\| < \epsilon$ for all $v \in \mc{V}$, is congruent to $(\mc{G},p_0)$.
\end{definition}

\begin{definition}
A \emph{minimally rigid graph} is a rigid graph such that the removal of any edge results in a non-rigid graph.
\end{definition}

\begin{figure}[!t]
\begin{center}
	\subfigure[Two equivalent minimally rigid frameworks in $\reals^3$. The framework on the right side is obtained by the reflection of the position of $v_5$ with respect to the plane characterized by the positions of $v_1$, $v_2$, and $v_3$ (as illustrated in grey).]{\includegraphics[width=\columnwidth]{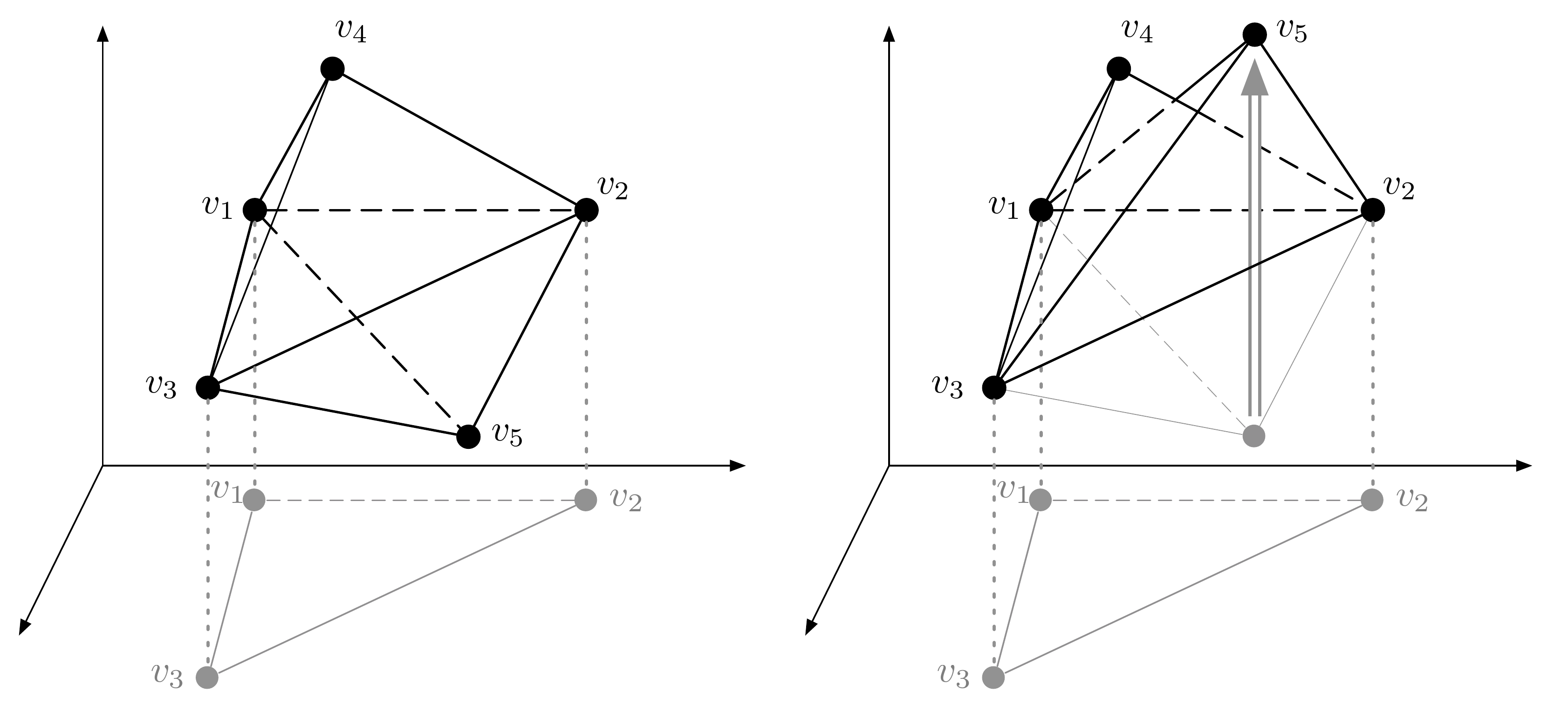}\label{fig:rigid1}}\hfill
	\subfigure[An infinitesimally and globally rigid framework in $\reals^3$.]{\includegraphics[width=0.46\columnwidth]{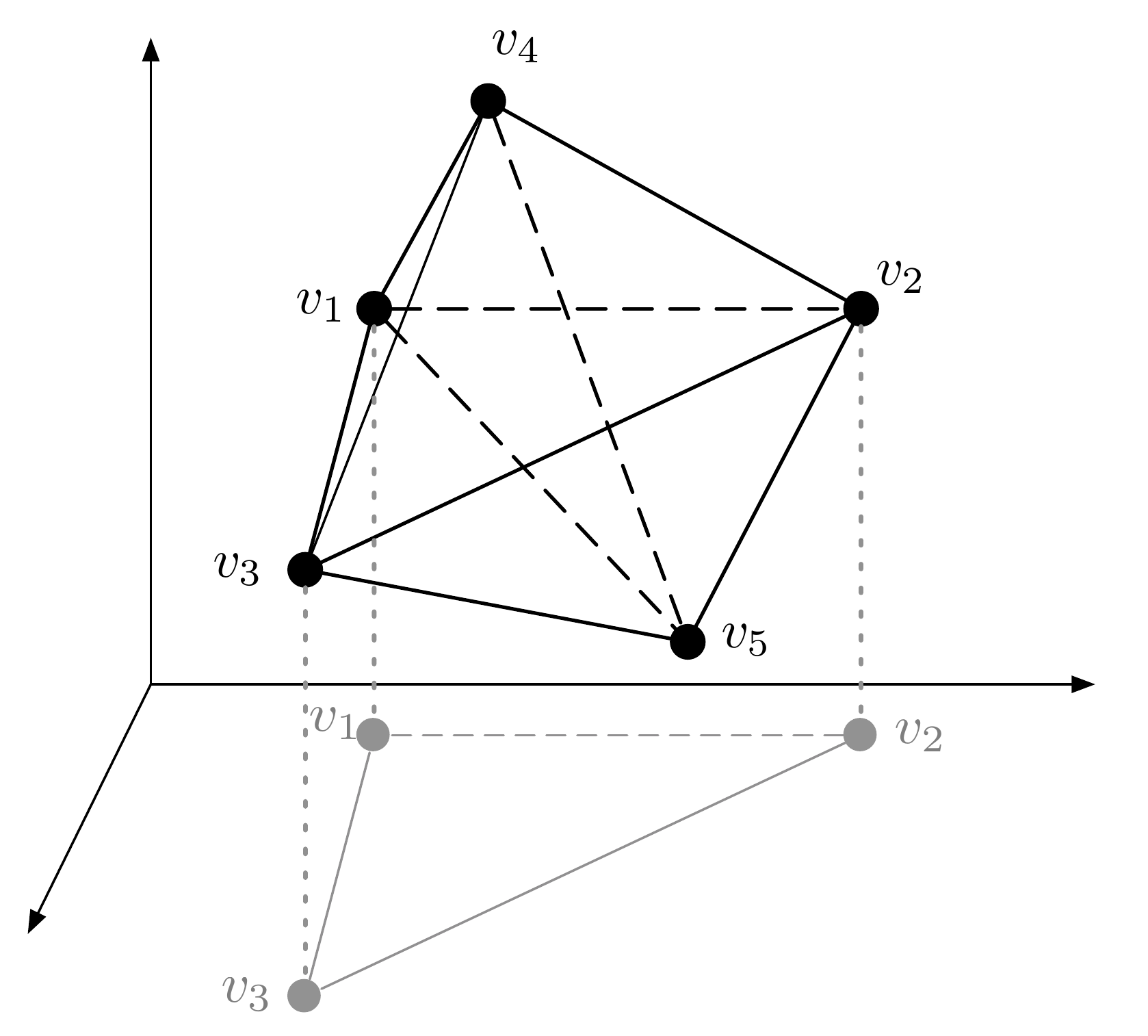}\label{fig:global_rigid}}\hfill
	\subfigure[A non-infinitesimally rigid framework (note that vertexes $v_1$ and $v_3$ are connected).]{
{\includegraphics[width=0.51\columnwidth]{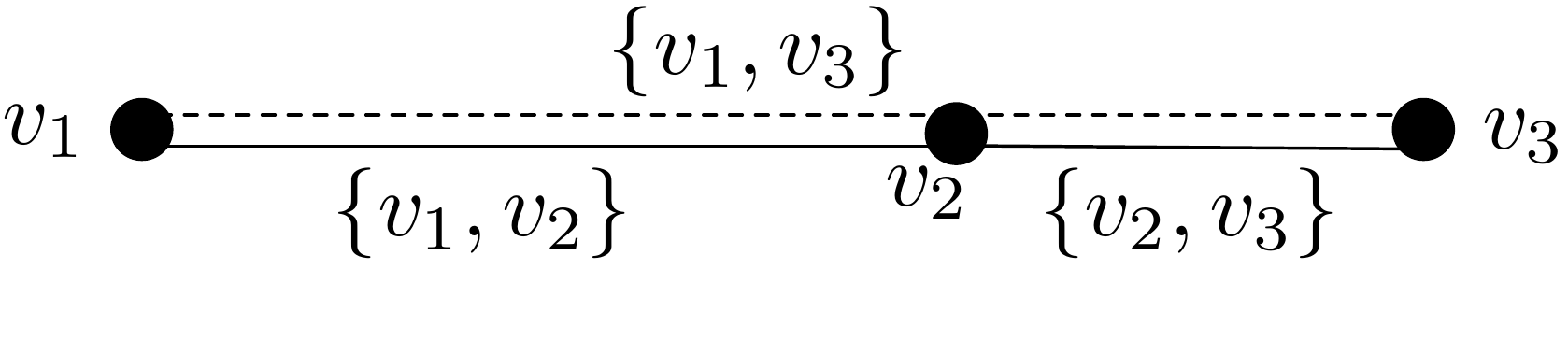}}\label{fig:noinf_rigid}}

  \caption{Examples of rigid and infinitesimally rigid frameworks in $\reals^3$. Notice that in Figs.~\subref{fig:rigid1} and~\subref{fig:global_rigid} the 3D points associated to each vertex do not lie on the same plane, while in Fig.~\subref{fig:noinf_rigid} the 3D points are aligned. } \label{fig:rigid_graphs}
\end{center}
\vspace{-20pt}
\end{figure}

Figure \ref{fig:rigid_graphs} shows three frameworks illustrating the above definitions.  The frameworks in Figure \ref{fig:rigid1} are both minimally rigid and are equivalent to each other, but are not congruent, and therefore not globally rigid.  By adding an additional edge, as in Figure \ref{fig:global_rigid} (the edge $\{v_4,v_5\}$), the framework becomes globally rigid.  The key feature of global rigidity, therefore, is that the distances between \emph{all} node pairs are maintained for different framework realizations, and not just those defined by the edge set.

 By parameterizing the position map by a positive scalar representing time, we can also consider \emph{trajectories} of a framework.  That is, the position map now becomes $p : \mc{V} \times \reals \rightarrow \reals^3$ and is assumed to be  continuously differentiable with respect to time.  We then explicitly write $(\mc{G},p,t)$ so as to represent a \emph{time-varying framework}.  In this direction, we can define a set of trajectories that are \emph{edge-length preserving}, in the sense that for each time $t \geq t_0$, the framework $(\mc{G},p,t)$ is equivalent to the framework $(\mc{G},p,{t_0})$.  More formally, an edge-length preserving framework must satisfy the constraint
\bea \label{framework_len}
\|p(v,t)-p(u,t)\| = \|p(v,t_0)-p(u,t_0)\| = \ell_{vu}, \, \forall t \geq t_0  
\eea
and for all $\{v,u\} \in \mc{E}$. 

One can similarly assign velocity vectors $\xi(u,t)\in\reals^3$ to each vertex $u \in \mc{V}$ for each point in the configuration space such that
\bea \label{inf_motion}
\hspace{-3pt}(\xi(u,t)-\xi(v,t))^T(p(u,t)-p(v,t)) &\hspace{-6pt}=&\hspace{-6pt} 0, \, \forall \, \{u,v\} \in \mc{E}.\;\;
\eea
Note that this relation can be obtained by time-differentiation of the length constraint described in (\ref{framework_len}).  These motions are referred to as \emph{infinitesimal motions} of the mapped vertices $p(u,t)$, and one has 
\begin{equation}\label{eq:vel_agent}
{\dot{p}(u,t) = \xi(u,t).}
\end{equation} 
For the remainder of this paper, we drop the explicit inclusion of time for frameworks and simply write $(\mc{G},p)$ and $p(u)$ and $\xi(u)$ for the time-varying positions and velocities.  The velocity vector {$\xi(u)$} will be treated as the \emph{agent velocity input} throughout the rest of the paper (see Section~\ref{sec:rigidity_control}).

Infinitesimal motions of a framework can be used to define a stronger notion of rigidity.

\begin{definition}\label{def:inf_rigidity}
A framework is called \emph{infinitesimally rigid} if every possible motion that satisfies (\ref{inf_motion}) is trivial (i.e., consists of only global rotations and translations of the whole set of points in the framework).
\end{definition}

An example of an infinitesimally rigid graph in $\reals^3$ is shown in Figure~\ref{fig:global_rigid}. Furthermore, note that infinitesimal rigidity implies rigidity, but the converse is not true~\citep{1985-TayWhi}, see Figure~\ref{fig:noinf_rigid} for a rigid graph in $\reals^3$ that is not infinitesimally rigid.   
 
The infinitesimal motions in (\ref{inf_motion}) define a system of ${m}$ linear equations in the vector of unknown velocities $\xi=[\xi^T(v_1) \; \ldots \; \xi^T(v_{{n}})]^T\in \reals^{3{n}}$.  This system can be equivalently written as the linear matrix equation
\[
R(p)\xi={\bf 0},
\] 
where $R(p) \in \reals^{{m} \times 3{n}}$ is called \emph{rigidity matrix} \citep{1985-TayWhi}. Each row of $R(p)$ corresponds to an edge $e=\{u,v\}$ and the quantity $(p(u)-p(v))$ represents the nonzero coefficients for that row. For example, the row corresponding to edge $e$ has the form
\beas
\leftm{ccccc} -{\bf 0}- & \underbrace{(p(u)-p(v))^T}_{\mbox{vertex } u} & -{\bf 0}- & \underbrace{(p(v)-p(u))^T}_{\mbox{vertex }v} &-{\bf 0}- \rightm .
\eeas
The definition of infinitesimal rigidity can then be restated in the following form:
\begin{lem}[\cite{1985-TayWhi}]\label{thm:rigidity}
A framework $(\mc{G},p)$ in $\mathbb{R}^3$ is infinitesimally rigid if and only if $\rk [R(p)] = 3{n}-6$.
\end{lem}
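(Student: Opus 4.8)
The plan is to prove Lemma~\ref{thm:rigidity} by relating the rank of the rigidity matrix $R(p)$ to the dimension of the space of infinitesimal motions, and then showing that the space of \emph{trivial} motions (rigid translations and rotations) has dimension exactly $6$ in $\reals^3$. By the rank-nullity theorem applied to $R(p) \in \reals^{m \times 3n}$, we have $\dim \ker R(p) = 3n - \rk[R(p)]$. Since the kernel of $R(p)$ is precisely the set of infinitesimal motions $\xi$ satisfying (\ref{inf_motion}), the framework is infinitesimally rigid (Definition~\ref{def:inf_rigidity}) exactly when this kernel coincides with the space of trivial motions. Thus the lemma reduces to showing that the trivial motions always form a $6$-dimensional subspace of $\ker R(p)$, so that $\rk[R(p)] = 3n - 6$ is equivalent to having \emph{no} nontrivial infinitesimal motions.

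\textbf{Key steps.} First I would verify that every trivial motion lies in $\ker R(p)$: a translation assigns the same velocity $\xi(u) = c$ to all vertices, so $\xi(u) - \xi(v) = {\bf 0}$ and (\ref{inf_motion}) holds trivially; a rotation assigns $\xi(u) = \omega \times p(u)$ for a fixed angular velocity $\omega \in \reals^3$, and then $(\xi(u)-\xi(v))^T(p(u)-p(v)) = (\omega \times (p(u)-p(v)))^T(p(u)-p(v)) = 0$ since a cross product is orthogonal to its arguments. Second, I would exhibit an explicit basis for the trivial motions: the three independent translations along the coordinate axes, together with the three independent rotations about the coordinate axes, giving six generators. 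Third, and this is the crucial point, I would show these six motions are \emph{linearly independent as vectors in} $\reals^{3n}$ — which holds precisely when the points $p(v_1),\ldots,p(v_n)$ do \emph{not} all lie on a single line (the affine span of the configuration must be all of $\reals^3$, i.e. the points are not collinear and indeed span the full ambient dimension). This independence pins the dimension of the trivial-motion space at exactly $6$.

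\textbf{The main obstacle} is precisely this dimension count for the trivial motions, because it is where the geometric nondegeneracy hypothesis enters. If the configuration were degenerate (for instance, all points collinear, as in Figure~\ref{fig:noinf_rigid}), some rotational generators would collapse or coincide and the trivial-motion space would have dimension less than $6$, breaking the clean $3n-6$ count. The honest version of this lemma therefore carries the standing assumption that the points affinely span $\reals^3$ (equivalently, $n \geq 4$ with the points in general position), and I would state this explicitly rather than gloss over it. Given that nondegeneracy, the argument closes: the trivial motions span a $6$-dimensional subspace of $\ker R(p)$, infinitesimal rigidity means $\ker R(p)$ equals this subspace, and hence $\dim \ker R(p) = 6$, which by rank-nullity gives $\rk[R(p)] = 3n - 6$.

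\textbf{A complementary route} worth noting is to bound $\rk[R(p)] \le 3n - 6$ universally (it can never exceed this, since the six trivial motions always annihilate $R(p)$ whenever the configuration spans $\reals^3$), so that the rank attains its maximal possible value $3n-6$ exactly when no further independent vectors survive in the kernel — that is, when the only infinitesimal motions are the trivial ones. This framing makes the equivalence in the lemma transparent and is the version I would write up.
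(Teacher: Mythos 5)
Your argument is sound and is the standard proof of this fact; note that there is nothing in the paper to compare it against, since the authors state Lemma~\ref{thm:rigidity} as a quoted result from \cite{1985-TayWhi} and give no proof of their own. The chain you set up --- $\ker R(p)$ is exactly the solution set of (\ref{inf_motion}), the trivial motions (three independent translations, three independent rotations) lie in this kernel, their span is $6$-dimensional under a nondegeneracy hypothesis, and rank--nullity then converts ``kernel equals trivial motions'' into $\rk[R(p)] = 3n-6$ --- is precisely how this lemma is proved in the rigidity literature (Asimow--Roth style), so your write-up supplies the proof the authors delegated to the citation.

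One point needs correction, though it does not break the proof. You state the nondegeneracy condition in two inequivalent ways: first as ``the points do not all lie on a single line'' (correct), then as ``the affine span of the configuration must be all of $\reals^3$'', with your ``honest version'' requiring $n \geq 4$ in general position (too strong). The trivial-motion space has dimension exactly $6$ if and only if the configuration is non-collinear; a full-dimensional affine span is not needed. Indeed, a triangle in $\reals^3$ is planar yet infinitesimally rigid, with $\rk[R(p)] = 3 = 3n-6$, so your stronger hypothesis would needlessly exclude it. On the other hand, some hypothesis is genuinely required, as you rightly sense: for $n = 2$ a single bar is infinitesimally rigid in the sense of Definition~\ref{def:inf_rigidity} (every admissible motion is trivial), yet $\rk[R(p)] = 1 \neq 3n-6 = 0$, so the lemma is false as literally stated in that degenerate case. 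State the standing assumption as non-collinearity (the tacit convention in the paper, cf.\ the collinear counterexample in Figure~\ref{fig:noinf_rigid}), and your proof closes cleanly.
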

{Note that, as expected from Definition~\ref{def:inf_rigidity}, the six-dimensional kernel of $R(p)$ for an infinitesimally rigid graph only allows for six independent \emph{feasible} framework motions, that is, the above-mentioned collective roto-translations in $\reals^3$ space. Note also that, despite its name, the rigidity matrix is actually characterizing \emph{infinitesimal rigidity} rather than \emph{rigidity} of a framework.}

\subsection{Rigidity of Weighted Frameworks}\label{sec:sym_rigidity_matrix}

We now introduce an important generalization to the concept of rigidity and the rigidity matrix by introducing weights to the framework.  Indeed, as discussed in the introduction, our aim is to propose a control law able to not only maintain \emph{infinitesimal rigidity} of the formation as per Definition~\ref{def:inf_rigidity}, but to also concurrently manage additional constraints typical of multi-robot applications such as collision avoidance and limited sensing and communication.This latter objective will be accomplished via the introduction of suitable state-dependent weights, thus requiring an extension of the traditional results on rigidity to a weighted case.

\begin{definition}\label{def:weighted_framework}
A $d$-dimensional \emph{weighted} framework is the triple $(\mc{G},p,\mc{W})$, where $\mc{G}=(\mc{V},\mc{E})$ is a graph, $p : \mc{V} \rightarrow \reals^d$ is a function mapping each vertex to a point in $\reals^d$, and $\mc{W} : (\mc{G},p) \rightarrow \reals^{{m}}$ is a function {of the framework that assigns a scalar value to each edge in the graph.} 
\end{definition}

Using this definition, we can also define the corresponding \emph{weighted rigidity matrix}, $R(p,\mc{W})$ as
\bea\label{weighted_rigidity_matrix}
R(p,\mc{W}) &=& W(\mc{G},p) R(p),
\eea
where $W(\mc{G},p) \in \reals^{{m} \times {m}}$ is a diagonal matrix containing the elements of the vector $\mc{W}(\mc{G},p)$ on the diagonal.  Often we will simply refer to the weight matrix $W(\mc{G},p)$ as $W$ when the underlying graph and map $p$ is understood.

\begin{rem}\label{rem:weighted_rigidity}
Note that the rigidity matrix $R(p)$ can also be considered as a weighted rigidity matrix with $W(\mc{G},p) = I$.  Another useful observation is that the unweighted framework $(\mc{G},p)$ can also be cast as a weighted framework $(K_{{n}},p,\mc{W})$, where $K_{{n}}$ is the complete graph on ${n}$ nodes and $[W(\mc{G},p)]_{ii}$ is 1 whenever $e_i \in \mc{E}(K_{{n}})$ is also an edge in $\mc{G}$, and 0 otherwise.  
\end{rem}

Weighted rigidity can lead to a slightly different interpretation of infinitesimal rigidity, where the introduced weights might cause the rigidity matrix to lose rank.  That is, an unweighted framework might be infinitesimally rigid, whereas a weighted version might not.  This observation is trivially observed by considering a minimally infinitesimally rigid framework $(\mc{G},p)$ and introducing a weight with a 0 entry on any edge. We formalize this with the following definitions.

\begin{definition}\label{def:unweighted}
The \emph{unweigted counterpart} of a weighted framework $(\mc{G},p,\mc{W})$ is the framework $(\hat{\mc{G}},p)$ where the graph $\hat{\mc{G}}=(\mc{V},\hat{\mc{E}})$ is such that $\hat{\mc{E}}\subset \mc{E}$ and the edge $e_i \in \mc{E}$ is also an edge in $\hat{\mc{G}}$ if and only if the corresponding weight is non-zero (i.e. $[W(\mc{G},p)]_{ii} \neq 0$).
\end{definition}

\begin{definition}\label{def:weighted_framework_rigid}
A weighted framework is called infinitesimally rigid if its unweighted counterpart is infinitesimally rigid.
\end{definition}

We now present a corollary to Lemma \ref{thm:rigidity} for weighted frameworks.

\begin{cor}\label{cor:weighted_ridgidity}
A weighted framework $(\mc{G},p,\mc{W})$ in $\reals^3$ is infinitesimally rigid if and only if $\rk [R(p,\mc{W})] = 3{n}-6$.
\end{cor}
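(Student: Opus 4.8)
The plan is to reduce the weighted statement to the unweighted Lemma~\ref{thm:rigidity} by showing that left-multiplying the rigidity matrix by the diagonal weight matrix $W$ changes nothing about its rank except zeroing out the rows indexed by zero-weight edges. Concretely, I would prove the rank identity $\rk[R(p,\mc{W})] = \rk[R_{\hat{\mc{G}}}(p)]$, where $R_{\hat{\mc{G}}}(p)$ denotes the ordinary (unweighted) rigidity matrix of the unweighted counterpart $(\hat{\mc{G}},p)$ from Definition~\ref{def:unweighted}. Once this identity is in hand, the corollary follows by invoking the definition of infinitesimal rigidity for weighted frameworks (Definition~\ref{def:weighted_framework_rigid}) together with Lemma~\ref{thm:rigidity} applied to $\hat{\mc{G}}$.

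The core step is elementary linear algebra. Writing $W = \diag\{w_1,\ldots,w_m\}$ and using $R(p,\mc{W}) = W R(p)$ from~(\ref{weighted_rigidity_matrix}), the $k$-th row of $R(p,\mc{W})$ is exactly $w_k$ times the $k$-th row of $R(p)$. For every edge with $w_k = 0$ this row vanishes, while for every edge with $w_k \neq 0$ the row is a nonzero rescaling of the corresponding row of $R(p)$. By Definition~\ref{def:unweighted}, the edges with nonzero weight are precisely those forming the edge set $\hat{\mc{E}}$ of $\hat{\mc{G}}$, so the nonzero rows of $R(p,\mc{W})$ are the rows of $R_{\hat{\mc{G}}}(p)$ each scaled by a nonzero factor. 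Since rescaling rows by nonzero scalars and deleting identically-zero rows preserves the row space, and hence the rank, we obtain $\rk[R(p,\mc{W})] = \rk[R_{\hat{\mc{G}}}(p)]$. Equivalently, restricting $W$ to the support of its nonzero entries gives an invertible diagonal block that acts as an isomorphism on the relevant row space, which is the clean way to phrase the same fact.

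Chaining the pieces then gives the result: the weighted framework $(\mc{G},p,\mc{W})$ is infinitesimally rigid $\iff$ $(\hat{\mc{G}},p)$ is infinitesimally rigid (Definition~\ref{def:weighted_framework_rigid}) $\iff$ $\rk[R_{\hat{\mc{G}}}(p)] = 3n-6$ (Lemma~\ref{thm:rigidity}, noting that $\hat{\mc{G}}$ shares the vertex set $\mc{V}$ so the target rank $3n-6$ is computed with the same $n = |\mc{V}|$) $\iff$ $\rk[R(p,\mc{W})] = 3n-6$ by the rank identity above. This proof is essentially bookkeeping, so I do not expect a genuine obstacle; the one point demanding care is justifying the rank identity rigorously rather than waving at it, i.e.\ making explicit that nonzero row-scaling and removal of zero rows leave the rank invariant, and confirming that the vertex count (and therefore the trivial six-dimensional roto-translation kernel) is identical for $\mc{G}$ and $\hat{\mc{G}}$.
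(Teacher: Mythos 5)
Your proposal is correct and takes essentially the same route as the paper, whose entire proof is the rank identity $\rk[R(p,\mc{W})] = \rk[\hat{R}(p)]$ for the unweighted counterpart, combined implicitly with Definition~\ref{def:weighted_framework_rigid} and Lemma~\ref{thm:rigidity}. You merely spell out the row-scaling and zero-row bookkeeping that the paper leaves tacit, which strengthens rather than changes the argument.
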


\begin{proof}
The statement follows from the fact that $\rk [R(p,\mc{W})] = \rk [\hat{R}(p)]$, 
where $\hat{R}(p)$ is the rigidity matrix for the unweighted counterpart of $(\mc{G},p,\mc{W})$.
\end{proof}

\subsection{The Rigidity Eigenvalue}\label{subsec:rigidity_eig}

In our previous work \citep{2012-ZelFraRob}, we introduced an alternative representation of the rigidity matrix that transparently separates the underlying graph from the positions of each vertex.  Here we recall the presentation and  extend it to the case of 3-dimensional frameworks.

\begin{definition}[\cite{2012-ZelFraRob}]\label{def:localgraph}
Consider a graph $\mc{G}=(\mc{V},\mc{E})$ and its associated incidence matrix with arbitrary orientation $E(\mc{G})$. The \emph{directed local graph} at node $v_j$ is the sub-graph $\mc{G}_j=(\mc{V},\mc{E}_j)$ induced by node $v_j$ such that
$$\mc{E}_j = \{(v_j,v_i) \, | \, e_k=\{v_i,v_j\} \in \mc{E}\}.$$
The \emph{local incidence matrix} at node $v_j$ is the matrix
$$E_l(\mc{G}_j) = E(\mc{G}) \diag{\{s_1, \ldots, s_{{m}}}\}\in \reals^{{n} \times {m}}$$
where $s_k = 1$ if $e_k \in \mc{E}_j$ and $s_k=0$ otherwise. %
\end{definition}

Note, therefore, that the local incidence matrix will contain columns of all zeros in correspondence to those edges not adjacent to $v_j$.  This also implicitly assumes a predetermined labeling of the edges.

\begin{prop}[{\cite{2012-ZelFraRob}}]\label{prop:alt_rigidity}
Let $p(\mc{V}) \in \reals^{{n} \times 3}$ be the position matrix for the framework $(\mc{G},p)$.  The rigidity matrix $R(p)$ can be defined as
\bea \label{new_rigiditymatrix}
\hspace{-5pt}R(p) &\hspace{-7pt}=&\hspace{-7pt} \leftm{ccc} E_l(\mc{G}_1)^T & \cdots & E_l(\mc{G}_{{n}})^T\rightm \left(I_{{n}} \otimes p(\mc{V})\right),\eea
where $E_l(\mc{G}_i)$ is the local incidence matrix for node $v_i$.
\end{prop}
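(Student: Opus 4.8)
The plan is to establish the factorization by a direct entrywise verification that exploits the block structure of the Kronecker product. First I would note that $I_{n}\otimes p(\mc{V})$ is block diagonal with each of its $n$ diagonal blocks equal to $p(\mc{V})$, so that multiplying the stacked transposes of the local incidence matrices against it simply applies $p(\mc{V})$ to each block separately, i.e.
\[
\begin{bmatrix} E_l(\mc{G}_1)^T & \cdots & E_l(\mc{G}_{n})^T\end{bmatrix}\left(I_{n}\otimes p(\mc{V})\right) = \begin{bmatrix} E_l(\mc{G}_1)^T p(\mc{V}) & \cdots & E_l(\mc{G}_{n})^T p(\mc{V})\end{bmatrix}.
\]
Since the velocity vector $\xi$ stacks the per-vertex velocities in vertex order, the three columns of $R(p)$ associated with the coordinates of $v_j$ form exactly the $j$-th block above, so it suffices to show that $E_l(\mc{G}_j)^T p(\mc{V}) \in \reals^{m\times 3}$ coincides with the $v_j$-block of $R(p)$ for every $j$.

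Next I would fix $v_j$ and read off $E_l(\mc{G}_j)^T p(\mc{V})$ row by row, each row being indexed by an edge $e_k$. By Definition~\ref{def:localgraph} the $k$-th column of $E_l(\mc{G}_j)$ is zero unless $e_k$ is incident to $v_j$; consequently row $k$ of $E_l(\mc{G}_j)^T p(\mc{V})$ vanishes whenever $e_k$ is not incident to $v_j$, which already reproduces the sparsity pattern of $R(p)$, whose row for $e_k$ is nonzero only in the two blocks indexed by the endpoints of $e_k$. For an incident edge $e_k = \{v_j, v_i\}$, the $k$-th column of $E_l(\mc{G}_j)$ is the column of the local incidence matrix orienting that edge outward from $v_j$, namely the vector carrying $+1$ in row $j$ and $-1$ in row $i$; multiplying its transpose by $p(\mc{V})$ then returns $(p(v_j)-p(v_i))^T$, which is precisely the coefficient block that the rigidity matrix assigns to $v_j$ on the row of $e_k$. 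Assembling these two observations over all $j$ and all $k$ rebuilds $R(p)$ block by block and row by row, matching the explicit form recalled before the statement and thereby proving~(\ref{new_rigiditymatrix}).

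The step requiring the most care---and the one I expect to be the main obstacle---is the orientation and sign bookkeeping. A single edge $\{v_j,v_i\}$ must contribute $(p(v_j)-p(v_i))^T$ to the $v_j$-block but $(p(v_i)-p(v_j))^T$ to the $v_i$-block of $R(p)$; these opposite signs have to emerge from the two \emph{distinct} local incidence matrices $E_l(\mc{G}_j)$ and $E_l(\mc{G}_i)$, each of which orients the shared edge away from its own base vertex. Keeping this local, vertex-dependent orientation carefully separate from the single arbitrary global orientation used to define $E(\mc{G})$ is exactly where a hasty computation would drop or flip a sign, so this is the point I would check most scrupulously.
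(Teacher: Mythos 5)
Your proof is correct and follows essentially the same route as the paper, which supports this proposition (imported from the authors' earlier work) only through the worked $K_3$ example in Appendix~B: expand $I_{{n}} \otimes p(\mc{V})$ into per-vertex blocks and verify, row by row over the edges, that $E_l(\mc{G}_j)^T p(\mc{V})$ reproduces exactly the three columns of $R(p)$ belonging to $v_j$. Your emphasis on the sign bookkeeping is also exactly right, and is the one point the paper's example glosses over: the factorization holds only when each $E_l(\mc{G}_j)$ orients every incident edge with $v_j$ as tail, as in the directed local graph of Definition~\ref{def:localgraph}, whereas reading the masking formula $E_l(\mc{G}_j)=E(\mc{G})\,\diag{\{s_1,\ldots,s_{{m}}\}}$ literally with the arbitrary global orientation would flip the sign of the block assigned to the terminal vertex of every edge.
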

\noindent{A more detailed discussion and example of these definitions are provided in Appendix B.}

Lemma~\ref{thm:rigidity} and Corollary \ref{cor:weighted_ridgidity} relate the property of {infinitesimal} rigidity for a given (weighted) framework to the rank of a corresponding matrix.  A contribution of this work is the translation of the rank condition to that of a condition on the spectrum of a corresponding matrix that we term the \emph{symmetric rigidity matrix}.  For the remainder of this work, we will only consider weighted frameworks, since from the discussion in Remark \ref{rem:weighted_rigidity}, any framework can be considered as a weighted framework with appropriately defined weights.

The symmetric rigidity matrix for a weighted framework $(\mc{G},p,\mc{W})$ is a symmetric and positive-semidefinite matrix defined as
\bea \label{symmetric_rigidity_matrix}
\mc{R} &:=& R(p,\mc{W})^TR(p, \mc{W}) \in \reals^{3{n} \times 3{n}}.
\eea
An immediate consequence of the construction of the symmetric rigidity matrix is that $\rk[\mc{R}] = \rk[R(p, \mc{W})]$ \citep{Horn1985}, leading to the following corollary.
\begin{cor}\label{cor:sym_rigid_matrix}
A weighted framework $(\mc{G},p, \mc{W})$ is infinitesimally rigid if and only if $\rk[\mc{R}] = 3{n}-6$.
\end{cor}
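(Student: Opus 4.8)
The plan is to reduce the spectral condition on $\mc{R}$ to the rank condition on the weighted rigidity matrix $R(p,\mc{W})$, which is already fully characterized by Corollary~\ref{cor:weighted_ridgidity}. The key algebraic fact, already invoked in the text immediately preceding the statement, is that for any real matrix $A$ one has $\rk[A^TA] = \rk[A]$; applying this with $A = R(p,\mc{W})$ gives $\rk[\mc{R}] = \rk[R(p,\mc{W})]$. This is a standard consequence of the fact that $A$ and $A^TA$ share the same null space: if $Ax = {\bf 0}$ then certainly $A^TAx = {\bf 0}$, and conversely if $A^TAx = {\bf 0}$ then $x^TA^TAx = \|Ax\|^2 = 0$ forces $Ax = {\bf 0}$. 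Equality of null spaces in $\reals^{3n}$, together with the rank-nullity theorem, yields equality of ranks.

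With this identity in hand the corollary is essentially a restatement of Corollary~\ref{cor:weighted_ridgidity}. First I would recall that $\mc{R} = R(p,\mc{W})^T R(p,\mc{W})$ is by construction symmetric and positive semidefinite, so its spectrum is real and nonnegative and its rank equals the number of strictly positive eigenvalues. Then I would chain the equivalences: by the rank identity, $\rk[\mc{R}] = 3n-6$ if and only if $\rk[R(p,\mc{W})] = 3n-6$, and by Corollary~\ref{cor:weighted_ridgidity} the latter holds if and only if the weighted framework $(\mc{G},p,\mc{W})$ is infinitesimally rigid. Composing these two equivalences proves the claim.

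There is no real obstacle here; the work is entirely bookkeeping, and the only substantive ingredient, the rank-preservation under $A \mapsto A^TA$, is a textbook result \citep{Horn1985} that the paper has already flagged. If anything, the step that most deserves an explicit word is the passage from ``rank of $\mc{R}$'' to a genuine \emph{spectral} statement: since $\mc{R}$ is symmetric positive semidefinite of size $3n\times 3n$, infinitesimal rigidity is equivalent to $\mc{R}$ having exactly six zero eigenvalues, i.e.\ to the seventh smallest eigenvalue being strictly positive. This observation is what motivates naming $\lambda_7$ the rigidity eigenvalue in the sequel, and it is worth noting that the six-dimensional kernel corresponds exactly to the trivial roto-translational motions identified after Lemma~\ref{thm:rigidity}.
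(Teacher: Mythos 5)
Your proposal is correct and follows exactly the paper's own route: the paper proves this corollary precisely by noting that $\rk[\mc{R}] = \rk[R(p,\mc{W})]$ (citing \citep{Horn1985}) and then invoking Corollary~\ref{cor:weighted_ridgidity}. Your only addition is spelling out the standard null-space argument for $\rk[A^TA]=\rk[A]$, which the paper leaves as a citation, so the two arguments are essentially identical.
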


The rank condition of Corollary \ref{cor:sym_rigid_matrix} can be equivalently stated in terms of the eigenvalues of $\mc{R}$.  Denoting the eigenvalues of $\mc{R}$ as $\lambda_1 \leq \lambda_2 \leq \ldots \leq \lambda_{3{n}}$, note that infinitesimal rigidity is equivalent to $\lambda_i=0$ for $i=1,\ldots, 6$ and $\lambda_7 > 0$.  Consequently, we term $\lambda_7$ the \emph{Rigidity Eigenvalue}.  We will now show that, in fact, for any connected graph,\footnote{If the graph is not connected, there will be additional eigenvalues at the origin corresponding to the number of connected components of the graph, see \citep{2001-GodRoy}.} the first six eigenvalues are always $0$. 

The first result in this direction shows that the symmetric rigidity matrix is similar to a weighted Laplacian matrix.

\begin{prop}\label{prop:rigidity_weighted_lap}
The symmetric rigidity matrix is similar to the weighted Laplacian matrix via a permutation of the rows and columns as
\bea \label{rigidity_weightedlap}
\hspace{-10pt}P\mc{R}P^T &=&\left(I_3 \otimes E(\mc{G})W\right)Q(p(\mc{V})) \left(I_3 \otimes WE(\mc{G})^T\right),
\eea
with
\bea \label{rigidity_weighted_bis}
\hspace{-7pt} Q(p(\mc{V})) \hspace{-7pt}&=\hspace{-7pt}&  \leftm{ccc} Q_x^2 & Q_{x}Q_y & Q_xQ_{z}  \\ Q_yQ_{x} & Q_y^2 & Q_yQ_{z}\\Q_z Q_{x} & Q_zQ_{y} & Q_{z}^2 \rightm {\in \reals^{3m \times 3m}},
\eea
where $Q_x$, $Q_y$, and $Q_z$ are {$m \times m$} diagonal weighting matrices for each edge in $\mc{G}$ such that for the edge $e_k = (v_i,v_j)$,
\beas
[Q_s]_{kk} &=  (p^s_i-p^s_j), \;  s \in \{x, y, z\}
\eeas
and $p_{i}^x$ ($p_i^y$, $p_i^z$) represents the $x$-coordinate ($y$-coordinate, $z$-coordinate) of the position of agent $i$.
\end{prop}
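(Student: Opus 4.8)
The plan is to pass to the ``coordinate-major'' ordering, in which the $3n$ columns (and rows) are grouped first by coordinate ($x$, then $y$, then $z$) and only then by vertex, rather than the ``vertex-major'' ordering implicit in the factor $I_n \otimes p(\mc{V})$ of Proposition~\ref{prop:alt_rigidity}. The permutation $P$ in the statement is exactly the perfect-shuffle permutation that carries one ordering into the other; since conjugation $P\mc{R}P^T$ reorders rows and columns of $\mc{R}$ consistently, it suffices to prove the claimed identity for the coordinate-major representation of $\mc{R}$, which is precisely the right-hand side of~(\ref{rigidity_weightedlap}).

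The key structural step is to show that, in the coordinate-major ordering, $R(p) = [\,Q_x \; Q_y \; Q_z\,]\,(I_3 \otimes E(\mc{G})^T)$. To establish this I would partition $R(p)$ column-wise into three $m \times n$ blocks $R_x, R_y, R_z$, one per coordinate. From the row description of $R(p)$ — the row for edge $e_k=(v_i,v_j)$ carries $(p(v_i)-p(v_j))^T$ in the vertex-$i$ slot and its negative in the vertex-$j$ slot — the block $R_s$ has in its $k$-th row the entry $p_i^s-p_j^s$ in column $i$ and $p_j^s-p_i^s$ in column $j$. This is exactly the $k$-th row of $Q_s E(\mc{G})^T$, because column $k$ of $E(\mc{G})$ equals the signed pair $e_i-e_j$ and $[Q_s]_{kk}=p_i^s-p_j^s$; note the product is independent of the chosen orientation, since reversing an edge flips the sign of both $[Q_s]_{kk}$ and the corresponding column of $E(\mc{G})$. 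Hence $R_s = Q_s E(\mc{G})^T$ for each $s\in\{x,y,z\}$, giving the stated factorization (the same factorization can alternatively be read directly off the local-incidence form of Proposition~\ref{prop:alt_rigidity}).

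With this factorization the remainder is algebra that exploits the fact that $W$, $Q_x$, $Q_y$, $Q_z$ are \emph{all diagonal} and therefore commute. Writing $R(p,\mc{W}) = W[\,Q_x \; Q_y \; Q_z\,](I_3 \otimes E(\mc{G})^T)$ in coordinate-major form, I would compute
\[
\mc{R}_{\mathrm{cm}} = R(p,\mc{W})^T R(p,\mc{W}) = (I_3 \otimes E(\mc{G}))\,[\,Q_x \; Q_y \; Q_z\,]^T W^2 [\,Q_x \; Q_y \; Q_z\,]\,(I_3 \otimes E(\mc{G})^T).
\]
The inner $3m\times 3m$ factor has $(s,t)$ block $Q_s W^2 Q_t = W^2 Q_s Q_t$, so it equals $(I_3\otimes W^2)\,Q(p(\mc{V}))$ with $Q(p(\mc{V}))$ the block matrix in~(\ref{rigidity_weighted_bis}). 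Since the middle factor is a product of diagonal matrices, each block $E(\mc{G})\,W^2 Q_s Q_t\,E(\mc{G})^T$ can be regrouped as $(E(\mc{G})W)\,Q_sQ_t\,(WE(\mc{G})^T)$, and reassembling the blocks yields $(I_3\otimes E(\mc{G})W)\,Q(p(\mc{V}))\,(I_3\otimes WE(\mc{G})^T)$, the desired right-hand side.

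The only genuine obstacle is bookkeeping: keeping the permutation $P$ and the vertex-versus-coordinate index orderings exactly consistent, and verifying the orientation-independence of $Q_s E(\mc{G})^T$ so that the factorization is well-defined regardless of the arbitrary orientation chosen for $E(\mc{G})$. Once the identity $R_s = Q_s E(\mc{G})^T$ is in place, the remaining manipulations are routine precisely because every edge-weighting matrix involved is diagonal and commutes freely.
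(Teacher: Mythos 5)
Your proposal is correct and follows essentially the same route as the paper: both conjugate by the permutation $P$ into the coordinate-major ordering, establish the coordinate-wise factorization of the rigidity matrix through the incidence matrix and the diagonal matrices $Q_s$ (your identity $R_s = Q_s E(\mc{G})^T$ is just the transpose of the paper's key identity $(I_n \otimes (p^x)^T)\hat{E}^T W = E(\mc{G})W Q_x$, modulo commuting diagonal factors), and then multiply out the blocks. The only cosmetic difference is that you verify the factorization directly from the row description of $R(p)$ rather than from the local-incidence representation of Proposition~\ref{prop:alt_rigidity}, an equivalence you yourself note.
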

\begin{proof}
The proof is by direct construction using Proposition \ref{prop:alt_rigidity} and (\ref{symmetric_rigidity_matrix}).  Consider the permutation matrix $P$ as
\bea\label{permutation}
 P = \leftm{c} I_{{n} }\otimes \leftm{ccc} 1 & 0 & 0 \rightm \\  I_{{n}} \otimes \leftm{ccc} 0 & 1 & 0   \rightm \\  I_{{n}} \otimes \leftm{ccc} 0 & 0 & 1   \rightm \rightm.
 \eea
and let $\hat{E} =  \leftm{ccc} E_l(\mc{G}_1)^T & \cdots & E_l(\mc{G}_{{n}})^T\rightm$.  It is straightforward to verify that 
$$  (I_{{n}} \otimes (p^x)^T)\hat{E}^TW = E(\mc{G})W\hspace{-10pt}\underbrace{\leftm{ccc} \ddots && \\ & (p^x_i-p^x_j) & \\ && \ddots \rightm}_{\mbox{diagonal matrix of size }{m} \times {m}},$$
where $p^x$ represents the first column of the position vector.
The structure of the matrix in (\ref{rigidity_weightedlap}) then follows directly.\footnote{A more detailed proof for the two-dimensional case is provided in \citep{2012-ZelFraRob}.}
\end{proof}

The representation of the symmetric rigidity matrix as a weighted Laplacian allows for a more transparent understanding of certain eigenvalues related to this matrix.  The next result shows that the first six eigenvalues of $\mc{R}$ must equal zero for any connected graph $\mc{G}$. 
\begin{thm}\label{thm:eigs_symrigid}
Assume that a weighted framework $(\mc{G},p,\mc{W})$ has weights such that the weight matrix $W(\mc{G},p)$ is invertible and the underlying graph $\mc{G}$ is connected.  Then the symmetric rigidity matrix has at least six eigenvalues at the origin;  that is, $\lambda_i = 0$ for $i \in \{1,\ldots,6\}$.  Furthermore, a possible set of linearly independent eigenvectors associated with each 0 eigenvalue is, 
\beas
 \left\{P^T \leftm{c} \ones_{{n}} \\ {\bf 0} \\ {\bf 0} \rightm,  P^T \leftm{c} {\bf 0}\\ \ones_{{n}}\\ {\bf 0} \rightm, P^T \leftm{c} {\bf 0}\\ {\bf 0}\\ \ones_{{n}}  \rightm,  \right.&& \\
 \left. P^T \leftm{c} (p^y)\\ -(p^x)\\ {\bf 0} \rightm, P^T \leftm{c} (p^z)\\ {\bf 0}\\ -(p^x) \rightm, P^T \leftm{c} {\bf 0}\\ (p^z)\\ -(p^y) \rightm \right\}, &&
\eeas
where $P$ is defined in (\ref{permutation}).

\end{thm}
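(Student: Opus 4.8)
The plan is to verify the claim directly from the weighted-Laplacian representation of Proposition~\ref{prop:rigidity_weighted_lap}. Since $P$ is a permutation matrix (hence orthogonal), $\mc{R}v=\zeros$ if and only if $(P\mc{R}P^T)(Pv)=\zeros$; it therefore suffices to show that each of the six listed vectors, written in the permuted (coordinate-grouped) coordinates as $u:=Pv$, lies in the kernel of $M:=P\mc{R}P^T=(I_3\otimes E(\mc{G})W)\,Q(p(\mc{V}))\,(I_3\otimes WE(\mc{G})^T)$. I would split the six into the three \emph{translations} (the blocks built from $\ones_n$) and the three \emph{rotations} (the blocks built from $p^x,p^y,p^z$), and attack each family with a different factor of $M$.

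For the translations the rightmost factor already does the job: applied to such a vector, $(I_3\otimes WE(\mc{G})^T)$ produces blocks of the form $WE(\mc{G})^T\ones_n$, and since every column of the incidence matrix has exactly one $+1$ and one $-1$ we have $E(\mc{G})^T\ones_n=\zeros$ for \emph{any} graph; hence $Mu=\zeros$. For the rotations I would first note that $E(\mc{G})^Tp^s$ is exactly the vector of edge-differences $(p_i^s-p_j^s)$, i.e. the vector $q_s$ carrying the diagonal entries of $Q_s$. Applying the rightmost factor to, say, the block vector built from $p^y,-p^x,\zeros$ gives blocks $Wq_y,-Wq_x,\zeros$, and applying $Q(p(\mc{V}))$ to this yields block expressions such as $W(Q_x^2q_y-Q_xQ_yq_x)$. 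Because $Q_x,Q_y,Q_z,W$ are diagonal they commute, and $Q_sq_t=Q_tq_s$ (both equal the entrywise product of $q_s$ and $q_t$), so every block cancels and $Q(p(\mc{V}))(I_3\otimes WE(\mc{G})^T)u=\zeros$, giving $Mu=\zeros$; the same cancellation handles the other two rotations. As a sanity check, in the original coordinates $P^Tu$ these six vectors are precisely the infinitesimal translations ($\xi(i)$ constant) and rotations ($\xi(i)=\omega\times p(i)$), both of which satisfy $(\xi(u)-\xi(v))^T(p(u)-p(v))=0$ on every edge and hence lie in $\ker R(p)\subseteq\ker R(p,\mc{W})=\ker\mc{R}$, independently confirming the computation.

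It remains to show the six vectors are linearly independent, which combined with the above yields at least six eigenvalues at the origin. Since $P^T$ is invertible I would work with the coordinate-grouped vectors and set a general combination to zero; reading off the $x$-, $y$-, and $z$-blocks produces the three relations $a_1\ones_n+b_1p^y+b_2p^z=\zeros$, $a_2\ones_n-b_1p^x+b_3p^z=\zeros$, and $a_3\ones_n-b_2p^x-b_3p^y=\zeros$. I expect this linear-independence step to be the main obstacle, because it is the only place where a genuine hypothesis on the \emph{positions} enters: these relations force all six coefficients to vanish precisely when the points are not collinear (equivalently, when $\ones_n,p^x,p^y,p^z$ span a space of dimension at least three), and they admit a nontrivial solution otherwise — for collinear points the rotation about the common axis produces an identically zero velocity field. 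I would therefore flag explicitly that the independence of the listed eigenvectors rests on this non-degeneracy of $p$, whereas the connectedness of $\mc{G}$ and the invertibility of $W$ are the ingredients controlling the \emph{absence of additional} null directions, and thus underpin the sharper ``exactly six'' characterization (and the rigidity-eigenvalue $\lambda_7$) exploited in the sequel.
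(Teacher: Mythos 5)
Your kernel computation follows the paper's own route essentially verbatim: the paper also works with the permuted matrix $P\mc{R}P^T$ from Proposition~\ref{prop:rigidity_weighted_lap}, kills the three translation vectors with $E(\mc{G})^T\ones_{{n}}={\bf 0}$, and disposes of the three rotation vectors by computing $(I_3\otimes WE(\mc{G})^T)u_4$ and asserting that $Q(p(\mc{V}))$ annihilates the result --- the cancellation it leaves to the reader (``it can now be verified that\dots'') is exactly your identity $Q_sq_t=Q_tq_s$ together with commutativity of diagonal matrices, so your write-up is the same argument with the key step made explicit. Where you genuinely depart from the paper is the linear-independence step: the paper dismisses it as ``straightforward to verify,'' whereas your analysis of the three block relations is correct and uncovers a real subtlety that the paper misses --- the six listed vectors are linearly independent \emph{if and only if} the points are not collinear (for a collinear configuration the infinitesimal rotation about the common axis, suitably combined with a translation, produces the zero velocity field, so the vectors are dependent, even though the symmetric rigidity matrix then has \emph{more} than six zero eigenvalues). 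The theorem as stated carries no hypothesis on $p$, so flagging this non-degeneracy requirement is a genuine refinement, not pedantry. One small caution on your closing remark: connectedness of $\mc{G}$ and invertibility of $W$ do \emph{not} by themselves rule out additional null directions --- a connected framework with all weights nonzero can still fail to be infinitesimally rigid (the paper's Figure~\ref{fig:noinf_rigid} is such a case) --- so the ``exactly six'' characterization rests on infinitesimal rigidity itself, via Corollary~\ref{cor:sym_rigid_matrix}, rather than on those two hypotheses.
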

\begin{proof}
Recall that for any connected graph, one has $E(\mc{G})^{T} \ones_{{n}} = 0$ \citep{2001-GodRoy}.  Therefore, $P\mc{R}P^T$ must have three eigenvalues at the origin, with eigenvectors $u_1= \leftm{ccc} \ones_{{n}}^T & {\bf 0}^T& {\bf 0}^T \rightm^T$, $u_2= \leftm{ccc} {\bf 0}^T&\ones_{{n}}^T& {\bf 0}^T   \rightm^T$, and $u_3= \leftm{ccc} {\bf 0}^T& {\bf 0}^T&\ones_{{n}}^T   \rightm^T$.   We now demonstrate that the remaining three eigenvectors proposed in the theorem are indeed in the null-space of the symmetric rigidity matrix.

Let $u_4 = \leftm{ccc} (p^y)^T & -(p^x)^T & {\bf 0}^T \rightm^T$. Observe that $(I_3 \otimes WE(\mc{G})^T)u_4 = \leftm{ccc} b_1^T & b_2^T & {\bf 0}^T \rightm^T$ is such that $b_1$ is $\pm [W]_{kk}(p^y_i-p^y_j)$ only for edges $e_k = \{v_i,v_j\} \in \mc{E}$, and 0 otherwise.  Similarly, $b_2$ is $\pm [W]_{kk}(p^x_j-p^x_i)$ only for edges $e_k = \{v_i,v_j\} \in \mc{E}$.  The invertibility assumption of the weight matrix also guarantees that $[W]_{kk} \neq 0$.  It can now be verified that from this construction one has
$$  \leftm{ccc} Q_x^2 & Q_{x}Q_y & Q_xQ_{z}  \\ Q_yQ_{x} & Q_y^2 & Q_yQ_{z}\\Q_z Q_{x} & Q_zQ_{y} & Q_{z}^2 \rightm 
 (I_3 \otimes WE(\mc{G})^T)u_4 = 0.$$
The remaining two eigenvectors follow the same argument as above.  It is also straightforward to verify that $u_4$, $u_5$, and $u_6$ are linearly independent of the first 3 eigenvectors.
\end{proof}

Theorem~\ref{thm:eigs_symrigid} provides a precise characterization of the eigenvectors associated with the null-space of the symmetric rigidity matrix for an infinitesimally rigid framework.

\begin{rem}\label{rem:rigmtx_eigenvector}
It is important to note that the chosen eigenvectors associated with the null-space of the symmetric rigidity matrix are expressed in terms of the \emph{absolute positions} of the nodes in the framework. We note that these eigenvectors can also be expressed in terms of the \emph{relative position} of each node to any arbitrary reference point  $p_c = \leftm{ccc} p^x_c & p_c^y & p_c^z\rightm^T \in \reals^3$.  For example, vector $u_4$ could be replaced by
$$u_4^{p_c} = P^T\leftm{c} p^y - p_c^y \ones_{{n}}\\ p_c^x\ones_{{n}}-p^x \\ {\bf 0} \rightm, $$ that is a linear combination of the null-space eigenvectors $u_1,u_2$ and $u_4$.  The use of eigenvectors defined on relative positions, in fact, will be necessary for the implementation of a distributed estimator for the rigidity eigenvector and eigenvalue based on only \emph{relative measurements} available from onboard sensing.
\end{rem}

Theorem~\ref{thm:eigs_symrigid} can be used to arrive at the main result relating infinitesimal rigidity to the rigidity eigenvalue.

\begin{thm}\label{thm:rigidity_eig}
A weighted framework $(\mc{G}, p, \mc{W})$ is infinitesimally rigid if and only if the rigidity eigenvalue is strictly positive, i.e., $\lambda_7 > 0$.
\end{thm}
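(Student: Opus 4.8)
The plan is to read off Theorem~\ref{thm:rigidity_eig} as a direct consequence of the two facts already assembled: the rank characterization of infinitesimal rigidity in Corollary~\ref{cor:sym_rigid_matrix}, and the guaranteed six-dimensional kernel from Theorem~\ref{thm:eigs_symrigid}. The bridge between them is the spectral structure of $\mc{R}$. First I would record that $\mc{R} = R(p,\mc{W})^T R(p,\mc{W})$ is real symmetric and positive semidefinite, so that all of its eigenvalues are real and nonnegative and may be ordered $0 \le \lambda_1 \le \cdots \le \lambda_{3{n}}$; moreover, for such a matrix the algebraic multiplicity of the zero eigenvalue coincides with $\dim \ker \mc{R} = 3{n} - \rk[\mc{R}]$. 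This last identity is precisely what lets me trade the rank condition for an eigenvalue-counting condition.

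Next I would invoke Theorem~\ref{thm:eigs_symrigid}: under the standing hypotheses (connected $\mc{G}$, invertible $W$), the matrix $\mc{R}$ always carries at least six eigenvalues at the origin, so that $\lambda_1 = \cdots = \lambda_6 = 0$ and, equivalently, $\rk[\mc{R}] \le 3{n} - 6$ with $\dim \ker \mc{R} \ge 6$. With this one-sided bound in hand, the equivalence follows in both directions. For the forward direction, infinitesimal rigidity gives $\rk[\mc{R}] = 3{n}-6$ by Corollary~\ref{cor:sym_rigid_matrix}, hence $\dim \ker \mc{R} = 6$; since the six smallest eigenvalues already vanish, the seventh cannot, i.e.\ $\lambda_7 > 0$. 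For the converse, $\lambda_7 > 0$ forces at most six zero eigenvalues, which combined with the lower bound yields exactly six, so $\dim\ker\mc{R} = 6$, i.e.\ $\rk[\mc{R}] = 3{n}-6$, and Corollary~\ref{cor:sym_rigid_matrix} then returns infinitesimal rigidity.

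Because essentially all of the work has been pushed into the preceding results, I do not expect a genuine obstacle in the main argument; the statement is really a repackaging of Corollary~\ref{cor:sym_rigid_matrix} in spectral language. The one point I would be careful about is the hypothesis guaranteeing the six trivial zeros. Theorem~\ref{thm:eigs_symrigid} supplies six explicit null vectors (three translational, three rotational), but these are only \emph{linearly independent}---and hence genuinely account for a six-dimensional kernel---when the configuration is non-degenerate, e.g.\ not collinear, as the example of Figure~\ref{fig:noinf_rigid} warns. In the infinitesimally rigid regime this degeneracy is automatically excluded, and more generally it is covered by the standing assumptions inherited from Theorem~\ref{thm:eigs_symrigid}; I would simply note that the equivalence is asserted within that regime, so that the lower bound $\dim\ker\mc{R}\ge 6$ is valid and the rank-to-eigenvalue translation goes through cleanly.
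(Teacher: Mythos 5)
Your proof is correct and follows exactly the paper's route: the paper proves Theorem~\ref{thm:rigidity_eig} as ``a direct consequence of Corollary~\ref{cor:sym_rigid_matrix} and Theorem~\ref{thm:eigs_symrigid},'' and your argument is precisely that consequence spelled out (PSD spectral ordering, $\dim\ker\mc{R} = 3n - \rk[\mc{R}]$, the lower bound of six zero eigenvalues, and the rank characterization applied in both directions). Your closing caveat about degenerate configurations is a reasonable extra observation but changes nothing, since the proof only uses the lower bound $\dim\ker\mc{R}\ge 6$ asserted by Theorem~\ref{thm:eigs_symrigid}.
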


\begin{proof}
The proof is a direct consequence of Corollary \ref{cor:sym_rigid_matrix} and Theorem \ref{thm:eigs_symrigid}.
\end{proof}

Another useful observation relates infinitesimal rigidity of a framework to connectedness of the underlying graph.
\begin{cor}\label{cor:rigid_connectedness}
Rigidity of the weighted framework $(\mc{G},p, \mc{W})$ implies connectedness of the graph $\mc{G}$.
\end{cor}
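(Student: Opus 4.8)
The plan is to argue by contraposition: assuming the underlying graph $\mc{G}$ is disconnected, I will show that the framework fails to be infinitesimally rigid, i.e.\ that $\rk[\mc{R}] < 3n-6$, which by Corollary~\ref{cor:sym_rigid_matrix} contradicts rigidity. Since the hypotheses in play take $W$ invertible, I would first recall, exactly as in the proof of Corollary~\ref{cor:weighted_ridgidity}, that $\ker R(p,\mc{W}) = \ker R(p) = \ker\mc{R}$, so it suffices to exhibit more than six linearly independent infinitesimal motions, i.e.\ to show $\dim\ker\mc{R} > 6$.

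The core of the argument would exploit the weighted-Laplacian form of Proposition~\ref{prop:rigidity_weighted_lap}. Suppose $\mc{G}$ has $c \geq 2$ connected components. The standard identity $E(\mc{G})^T\ones_n = 0$ generalizes to disconnected graphs: $\ker\big(E(\mc{G})^T\big)$ is spanned by the $c$ indicator vectors of the connected components and hence has dimension $c$. Because $W$ is invertible, $\ker\big(WE(\mc{G})^T\big)$ also has dimension $c$, and tensoring with $I_3$ gives $\dim\ker\big(I_3 \otimes WE(\mc{G})^T\big) = 3c$. By the factorization in (\ref{rigidity_weightedlap}), every such vector lies in $\ker(P\mc{R}P^T)$, and since $P$ is a permutation this transfers to $3c$ independent vectors in $\ker\mc{R}$. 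These are precisely the ``translational'' zero modes obtained by independently translating each component along each coordinate axis. For $c \geq 3$ we already have $\dim\ker\mc{R} \geq 3c \geq 9 > 6$ and are done.

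For the remaining case $c = 2$ the translational modes give only dimension $6$, so I would adjoin the three rotational zero modes $u_4,u_5,u_6$ constructed in the proof of Theorem~\ref{thm:eigs_symrigid}; crucially, the verification there that these belong to $\ker\mc{R}$ is edge-local and never invokes connectedness, so they remain kernel vectors here. As long as at least one component contains two vertices at distinct positions, at least one of $u_4,u_5,u_6$ is linearly independent of the $3c$ component-indicator translations (its relevant block involves a coordinate $p^x$, $p^y$, or $p^z$ that is not constant on that component), yielding $\dim\ker\mc{R} \geq 3c+1 = 7 > 6$. In all cases $\rk[\mc{R}] \leq 3n-7 < 3n-6$, so the framework is not infinitesimally rigid.

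The main obstacle is the linear-independence bookkeeping between the translational and rotational modes, together with the genuinely degenerate corner case in which every component is a single isolated vertex (then $n=c$, $m=0$, and the rank condition $\rk[\mc{R}]=3n-6$ is vacuously satisfied even though $\mc{G}$ is disconnected). I would either rule this out under the standing assumption that the framework actually carries edges and non-collocated, non-coplanar points, or, to establish the literal statement for \emph{rigidity} rather than infinitesimal rigidity, fall back on an elementary geometric argument that sidesteps the spectral machinery entirely: translating a single component by an arbitrarily small vector preserves all edge lengths, since inter-component distances are unconstrained, yet alters pairwise distances \emph{between} components, producing a nearby framework that is equivalent but not congruent. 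Hence a disconnected framework cannot be rigid, which is the desired contrapositive.
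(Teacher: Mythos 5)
The paper never actually proves this corollary---it is stated bare, as an immediate consequence of Theorem~\ref{thm:rigidity_eig} together with the footnote remark that a disconnected graph contributes \emph{additional} eigenvalues at the origin of the Laplacian-structured symmetric rigidity matrix, forcing $\lambda_7=0$. Your main argument is precisely that implicit reasoning carried out in detail: the $3c$ component-indicator translations coming from $\ker\bigl(I_3\otimes WE(\mc{G})^T\bigr)$ are exactly the additional zero eigenvalues the footnote alludes to, and your $c=2$ bookkeeping with $u_4,u_5,u_6$ (whose kernel membership is indeed verified edge-by-edge in the proof of Theorem~\ref{thm:eigs_symrigid} and never uses connectedness) supplies the seventh kernel vector the paper takes for granted. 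Beyond that, you add two things the paper does not have. First, you correctly identify that the rank characterization genuinely fails for degenerate configurations (e.g.\ $n=c=2$: two isolated vertices satisfy $\rk[\mc{R}]=3n-6=0$ yet the framework is disconnected and not rigid), so any purely spectral proof needs a nondegeneracy caveat; the same care is needed in your $c=2$ case when all vertices of each component are collocated, though there the edge rows of $R(p)$ vanish and the rank drops to $0<3n-6$ for $n\geq 3$ anyway. Second, your fallback geometric argument---translate one component by a small vector, preserving every edge length but changing some inter-component distance, hence producing an equivalent, arbitrarily close, non-congruent framework---is a more elementary route that proves the literal statement for rigidity (and hence for infinitesimal rigidity, which implies it) with no degenerate exceptions; this is arguably the cleaner proof of the corollary as stated. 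One small tightening: the corollary does not assume $W$ invertible, so when some weights vanish you should first pass to the unweighted counterpart $(\hat{\mc{G}},p)$ of Definition~\ref{def:unweighted}, apply your argument there, and note that connectedness of $\hat{\mc{G}}$ implies connectedness of $\mc{G}$ because $\hat{\mc{E}}\subset\mc{E}$.
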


The connection between infinitesimal rigidity of a framework and the spectral properties of the symmetric rigidity matrix inherits many similarities between the well studied relationship between graph connectivity and the graph Laplacian matrix \citep{2010-MesEge}. 

In the next section, we exploit this similarity and propose a \emph{rigidity maintenance} control law that aims to ensure the rigidity eigenvalue is always positive. Such a control action will be shown to depend on the rigidity eigenvalue, on its eigenvector, and on relative positions among neighboring pairs expressed in a common frame. The issue of how every agent in the group can distributedly estimate these quantities will be addressed in Sections~\ref{sec:relpos_est} {and} \ref{sec:rigidity_est}.

\section{A Decentralized Control Strategy\\ for Rigidity Maintenance}\label{sec:rigidity_control}

The results of Section \ref{sec:rigidity} highlight the role of the rigidity eigenvalue $\lambda_7$ as a measure of the ``degree of infinitesimal rigidity" of a weighted framework $(\mc{G}, p, \mc{W})$.  It provides a linear algebraic condition to test the infinitesimal rigidity of a framework and, especially in the case of weighted frameworks, provides a means of quantifying ``how rigid" a weighted framework is.  Moreover, the symmetric rigidity matrix was shown to have a structure reminiscent of a weighted graph Laplacian matrix, and thus can be considered as a naturally \emph{distributed} operator.

\begin{figure}[!t]
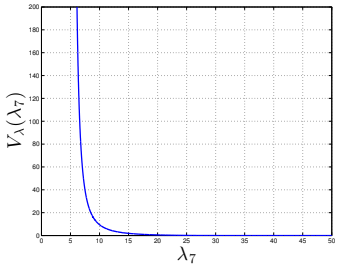

\begin{center}
\Vlambda
  \caption{A possible shape for the rigidity potential function $V_\lambda(\lambda_7)$ with $\lambda^{\mathrm{min}}_7=5$.}\label{fig:Vlambda7}
\end{center}
\end{figure}

The basic approach we consider for the maintenance of rigidity is to define a scalar potential function of the rigidity eigenvalue, $V_\lambda(\lambda_7)>0$, with the properties of growing unbounded as $\lambda_7\to \lambda^{\mathrm{min}}_7>0$ and vanishing (with vanishing derivative) as $\lambda_7\to \infty$ (see Fig.~\ref{fig:Vlambda7} for one possible shape or $V_\lambda$ with $\lambda^{\mathrm{min}}_7=5$). Here, $\lambda^{\mathrm{min}}_7$ represents some predetermined minimum allowable value for the rigidity eigenvalue determined by the needs of the application.  {In addition to maintaining rigidity, the potential function should also capture additional constraints in the system, such as collision avoidance or formation maintenance.}
Each agent should then follow the anti-gradient of this potential function, that is
\begin{equation}\label{eq:gradient_final}
\xi(u)=\dot{p}_u(t) =-\dfrac{\partial V_\lambda}{\partial p_u(t)} = -\dfrac{\partial V_\lambda}{\partial \lambda_7}\dfrac{\partial \lambda_7}{\partial p_u(t)},
\end{equation}
where $\xi(u)$ is the velocity input of agent $u$, as defined in~(\ref{eq:vel_agent}), and $p_u = \leftm{ccc} p_u^x & p_u^y&p_u^z\rightm^T$ is the position vector of the $u$-th agent.
This strategy will ensure that the formation maintains a ``minimum" level of rigidity (i.e., $\lambda_7^{\mathrm{min}}$) at all times.
Of course, this strategy is an inherently \emph{centralized} one, as the computation of the rigidity eigenvalue and of its gradient require full knowledge of the symmetric rigidity matrix. Nevertheless, we will proceed with this strategy and demonstrate that it can be implemented in a fully decentralized manner.

In the sequel, we examine in more detail the structure of the control scheme (\ref{eq:gradient_final}).  First, we show how the formalization of weighted frameworks allows to embed additional weights within the rigidity property that enforce explicit inter-agent sensing and communication constraints and group requirements such as collision avoidance and formation control. {For instance, the weighting machinery will be exploited so as to induce the agents to keep a desired inter-agent distance $\ell_0$ and to ensure a minimum safety distance $\ell_\mathrm{min}$ from neighboring agents and obstacles.} With these constraints, the controller will simultaneously maintain a minimum level of rigidity while also respecting the additional inter-agent constraints.  We then provide an explicit characterization of the gradient of the rigidity eigenvalue with respect to the agent positions, and highlight its distributed structure.  Finally, we present the general control architecture for implementing (\ref{eq:gradient_final}) in a fully decentralized way.

\subsection{Embedding Constraints in a Weighted Framework}\label{subsec:weights}

In real-world applications 
a team of mobile robots may not be able to maintain the same interaction graph throughout the duration of a mission because of various sensing and communication constraints preventing mutual information exchange and relative sensing.   %and sensing range limitations can cause links between agents to drop out. 
Furthermore, additional requirements such as collision avoidance with obstacles and among robots, as well as some degree of formation control, must be typically satisfied during the mission execution. Building on the design guidelines proposed in~\citep{RoFrSeBu:13} for dealing with \emph{connectivity} maintenance, %under such constraints and requirements,
we briefly discuss here a possible design of weights $\mc{W}$ 
aimed at taking into account the above-mentioned sensing and communication constraints and group requirements within the rigidity maintenance action.

To this end, we start with the following definition of \emph{neighboring agents}:
\begin{definition}\label{def:neigh}
Two agents $u$ and $v$ are considered \emph{neighbors} if and only if $(i)$ their relative distance $\ell_{uv}=\|p(u)-p(v)\|$ is smaller than $D\in\mathbb{R}^+$ (\emph{the sensing range}), %and larger than $d_{\mathrm{min}}\in[0,\,D)$ (\emph{the safety range}), 
{$(ii)$ the distance $\ell_{uvo}$ between the segment joining $u$ and $v$ and the closest obstacle point $o$ is larger than $\ell_{\mathrm{min}}$ (the minimum \emph{line-of-sight visibility}),}
and $(iii)$ neither $u$ nor $v$ are closer than $\ell_{\mathrm{min}}$ to any other agent or obstacle.
\end{definition}
Conditions $(i)$ and $(ii)$ are meant to take into account two typical sensing constraints in multi-robot applications: maximum communication and sensing ranges and line-of-sight occlusions. The purpose of condition $(iii)$, which will be better detailed later on, is to force disconnection from the group if an agent is colliding with any other agent or obstacle in the environment. {In the following we will denote with $\calS_u$ the set of neighbors of agent $u$ induced by Definition~\ref{def:neigh}.}

\begin{figure}[t]
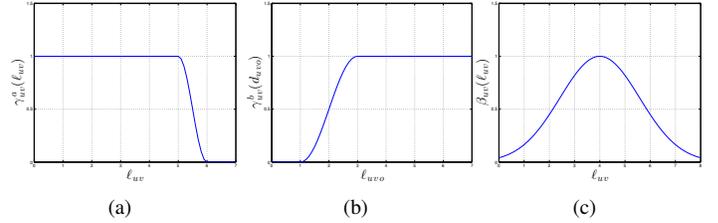

\centering
\subfigure[\label{fig:a_ij}]{\gammaa}\;%\hfill
\subfigure[\label{fig:b_ij}]{\gammab}%\hfill
\subfigure[\label{fig:c_ij}]{\betaf}%\hfill
\caption{The shape of $\gamma^a_{uv}(\ell_{uv})$ for $D=6$ (a), $\gamma^b_{uv}(\ell_{uvo})$ for $\ell_\mathrm{min}=1$ (b), and $\beta_{uv}(\ell_{uv})$ for $\ell_0=4$ (c).}\label{fig:ab}
\end{figure}

This neighboring definition can be conveniently taken into account by designing the inter-agent weights $\mc{W}_{uv}$ as state-dependent functions smoothly vanishing as any of the above constraints and requirements are not met by the pair $(u,\,v)$ with the desired accuracy.  %letting $\ell_{uv}$ be the inter-agent relative distance and 
Indeed, the use of state-dependent weights allows us to consider the ensemble of robots in the context of weighted frameworks, as introduced in Definition \ref{def:weighted_framework}.  In particular, we take the underlying graph to be the complete graph $K_{{n}}$ and the map $p$ corresponds to the physical position state of each agent in a common global frame. The weights are the maps $\mc{W}_{uv}$, and the weighted framework is the triple $(K_{{n}},p,\mc{W})$ {with, therefore, $\calN_u=\{v\in\calV|\;\mc{W}_{uv}\neq 0\}$.}

Following what was proposed in~\citep{RoFrSeBu:13}, {and recalling that $\ell_{uvo}$ represents the distance between the segment joining agents $u$ and $v$ and the closest obstacle point $o$, we then take}
\begin{equation}\label{eq:weights}
\mc{W}_{uv}=\alpha_{uv}\beta_{uv}\gamma^a_{uv}\gamma^b_{uv},
\end{equation}
with $\alpha_{uv}=\alpha_{uv}(\ell_{uk}|_{k\in\calS_u},\,\ell_{vk}|_{k\in\calS_v})$, $\beta_{uv}=\beta_{uv}(\ell_{uv})$, $\gamma^a_{uv}=\gamma^a_{uv}(\ell_{uv})$, $\gamma^b_{uv}=\gamma^b_{uv}(\ell_{uvo})$ and such that
\begin{itemize}
	\item 
	\begin{itemize}
	\item	$\lim_{\ell_{uk}\to \ell_{\mathrm{min}}} \alpha_{uv}=0$, $\forall k\in\calS_u$, 
	\item $\lim_{\ell_{vk}\to \ell_{\mathrm{min}}} \alpha_{uv}=0$, $\forall k\in\calS_v$, and
	\item  $\alpha_{uv}\equiv 0$ if $\ell_{uk}\leq \ell_{\mathrm{min}}$ or $\ell_{vk}\leq \ell_{\mathrm{min}}$, for any $k\in\calS_u$, $k\in\calS_v$;
	\end{itemize}
	\item $\lim_{|\ell_{uv}-\ell_0|\rightarrow \infty} \beta_{uv} = 0$ with $ \beta(\ell_{uv}) < \beta(\ell_0) \, \forall \, \ell_{uv} \neq \ell_{0}$;
	\item $ \lim_{\ell_{uv}\to D} \gamma^a_{uv} = 0$ with $\gamma^a_{uv}\equiv 0 \, \forall \, \ell_{uv}\geq D$;
	\item $ \lim_{\ell_{uvo}\to \ell_{\mathrm{min}}} \gamma^b_{uv} = 0$ with $\gamma^b_{uv}\equiv 0 \, \forall \,\ell_{uvo}\leq \ell_{\mathrm{min}}$.
\end{itemize}
{As explained}, $\ell_{\mathrm{min}}$ is a predetermined minimum safety distance for avoiding collisions and line-of-sight occlusions. Figures~\ref{fig:ab}\subref{fig:a_ij}--\subref{fig:c_ij} show an illustrative shape of weights $\gamma^a_{uv}$, $\gamma^b_{uv}$ and $\beta_{uv}$. The shape of the weights $\alpha_{uv}$ is conceptually equivalent to that of weights $\gamma^b_{uv}$ in Fig.~\ref{fig:b_ij}.

This weight design results in the following properties: for a given pair of agents $(u,\,v)$, the weight $\mc{W}_{uv}$ will vanish (because of the term $\gamma^a_{uv}\gamma^b_{uv}$) whenever the sensing and communication constraints of Definition~\ref{def:neigh} are violated (maximum range, obstacle occlusion), thus resulting in a decreased degree of connectivity of the graph $\calG$ (edge $\{u,\,v\}$ is lost).
The same will happen as the inter-distance $\ell_{uv}$ deviates too much from the desired $\ell_0$ because of the term $\beta_{uv}$. Finally, the term $\alpha_{uv}$ will force \emph{complete disconnection} of vertexes $u$ and $v$ from the other vertexes and therefore a complete loss of connectivity for the graph $\calG$ whenever a collision with another agent is approached.\footnote{As for collision with obstacles, an equivalent behavior is automatically obtained from weights $\gamma^b_{uv}$, see again~\citep{RoFrSeBu:13} for a full explanation. {Also note that, because of the definition of weights $W_{uv}$, one has $\calN_u\subseteq \calS_u$ but $\calS_u\not\subset \calN_u$.}}

We now recall from Corollary \ref{cor:rigid_connectedness} that infinitesimal rigidity implies graph connectivity. Therefore, any decrease in the degree of graph connectivity due to the weights $\mc{W}_{uv}$ vanishing will also result in a decrease of rigidity of the weighted framework $(K_{{n}},p,\mc{W})$ (in particular, rigidity is obviously lost for a disconnected graph).
By maintaining $\lambda_7>0$ (in the context of weighted frameworks) over time, it is then possible to preserve formation rigidity while, at the same time, explicitly considering and managing the above-mentioned sensing and communication constraints and requirements. %involving inter-agent and obstacle collision avoidance and formation control.

\begin{rem}\label{rem:ell_0}
We note that the purpose of the weight $\beta_{uv}$ in~(\ref{eq:weights}) is to embed a basic level of \emph{formation control} into the rigidity maintenance action: indeed, every neighboring pair will try to keep the desired distance $\ell_0$ thanks to the shape of the weights $\beta_{uv}$. More complex formation control behaviors could be obtained by different choices of functions $\beta_{uv}$ (e.g., for maintaining given relative positions).  Furthermore, formation shapes can be uniquely specified owing to the infinitesimal rigidity property of the configuration.
\end{rem}

\begin{rem}\label{rem:weights}
We further highlight the following properties whose explicit proof can be found in~\citep{RoFrSeBu:13}: the chosen weights $\mc{W}_{uv}$ are functions of only \emph{relative distances} to other agents and obstacles, while their gradients with respect to the agent position $p_u$ (resp.~$p_v$) are functions of \emph{relative positions} expressed in a common reference frame. Furthermore, $\mc{W}_{uv}=\mc{W}_{vu}$ and $\frac{\partial \mc{W}_{uv}}{\partial p_u}=0$, $\forall v\notin \calN_u$. Finally, the evaluation of weights $\mc{W}_{uv}$ and of their gradients can be performed in a decentralized way by agent $u$ (reps.~$v$) by only resorting to local information and $1$-hop communication.
\end{rem}

As shown the next developments, these properties will be instrumental for expressing the gradient of the rigidity eigenvalue as a function of purely relative quantities with respect to only $1$-hop neighbors.

\subsection{The Gradient of the Rigidity Eigenvalue}\label{subsec:gradient_rigidity}

We now present an explicit characterization of the gradient of the rigidity eigenvalue with respect to the agent positions, as used in the control (\ref{eq:gradient_final}).
We first recall that the rigidity eigenvalue can be expressed as
$$ \lambda_{7} = {{\bf v}_7^T\mc{R}{\bf v}_7},$$ 
where ${\bf v}_7$ is the \emph{normalized} rigidity eigenvector associated with $\lambda_7$.  For notational convenience, we consider the permuted rigidity eigenvector $P{\bf v}_7 = \leftm{ccc} ({\bf v}^x)^T & ({\bf v}^y)^T & ({\bf v}^z)^T \rightm^T $, where $P$ is defined in Theorem \ref{thm:eigs_symrigid}. For the remainder of the work, we drop the subscript and reserve the bold font ${\bf v}$ for the rigidity eigenvector.   Note that in fact, the rigidity eigenvalue and eigenvector are state-dependent, and therefore also time-varying when the formation is induced by the spatial orientation of a mobile team of robots, or due to the action of state-dependent weights on the sensing and communication links.

We can now exploit the structure of the symmetric rigidity matrix for weighted frameworks.  Using the form of the symmetric rigidity matrix given in (\ref{rigidity_weightedlap}), we define $\tilde{Q}(p(\mc{V})) = (I_3 \otimes W)Q(p(\mc{V}))(I_3 \otimes W)$ as a generalized weight matrix, and observe that
\beas
P\mc{R}P^T &=&\left(I_3 \otimes E(\mc{G})\right)\tilde{Q}(p(\mc{V})) \left(I_3 \otimes E(\mc{G})^T\right). 
\eeas
The elements of $\tilde{Q}(p(\mc{V}))$ are entirely in terms of the relative positions of each agent and the weighting functions defined on the edges as in (\ref{eq:weights}).

The rigidity eigenvalue can now be expressed explicitly as
\begin{equation}
{\scriptsize\begin{split}
& \lambda_7 = \sum_{(i,\, j)\in \calE} \mc{W}_{ij} \left((p^x_i - p^x_j)^2({\bf v}^x_i-{\bf v}^x_j)^2  +  (p^y_i - p^y_j)^2({\bf v}^y_i-{\bf v}^y_j)^2 + \right. \\
&  (p^z_i - p^z_j)^2({\bf v}^z_i-{\bf v}^z_j)^2   + 2 (p^x_i - p^x_j)(p^y_i-p^y_j)({\bf v}^x_i-{\bf v}^x_j)({\bf v}^y_i-{\bf v}^y_j)   +  \\
& \left.   2(p^x_i - p^x_j)(p^z_i-p^z_j)({\bf v}^x_i-{\bf v}^x_j)({\bf v}^z_i-{\bf v}^z_j)  + \right. \\
& 2 \left. (p^y_i - p^y_j)(p^z_i-p^z_j)({\bf v}^y_i-{\bf v}^y_j)({\bf v}^z_i-{\bf v}^z_j) \right) =\sum_{(i,\, j)\in\calE} \mc{W}_{ij} S_{ij}.
\end{split}}\label{eq:exp_lambda7}
\end{equation}

From~(\ref{eq:exp_lambda7}), one can then derive a closed-form expression for $\frac{\partial \lambda_7}{\partial p^s_i}$, $s\in\{x,\,y,\,z\}$, i.e., the gradient of $\lambda_7$ with respect to each agent's position. In particular, by exploiting the structure of the terms $S_{ij}$ {and the properties of the employed weights $\mc{W}_{ij}$ (see, in particular, the previous Remark~\ref{rem:weights}),
it is possible to reduce $\frac{\partial \lambda_7}{\partial p^x_i}$ to the following \emph{sum over the neighbors}},
\begin{equation}
{\scriptsize\begin{split}
& \dfrac{\partial \lambda_7}{\partial p^x_i} = \sum_{j\in\calN_i} \mc{W}_{ij} \left(  2 (p^y_i-p^y_j)({\bf v}^x_i-{\bf v}^x_j)({\bf v}^y_i-{\bf v}^y_j)   + \right. \\
& \left.  2(p^x_i - p^x_j)({\bf v}^x_i-{\bf v}^x_j)^2   +  2(p^z_i-p^z_j)({\bf v}^x_i-{\bf v}^x_j)({\bf v}^z_i-{\bf v}^z_j)  \right) +\dfrac{\partial \mc{W}_{ij}}{\partial p^x_i} S_{ij},
\end{split}}\label{eq:exp_lambda7_bis}
\end{equation}
and similarly for the $y$ and $z$ components.

The gradient~(\ref{eq:exp_lambda7_bis}) possesses the following key feature: it is a function of relative quantities, in particular of $(i)$ relative components of the eigenvector ${\bf v}$, $(ii)$ relative distances, and $(iii)$ relative positions with respect to \emph{neighboring agents} {(see, again, Remark~\ref{rem:weights} for what concerns weights $\mc{W}_{ij}$),}
thus allowing for a distributed computation of its value once these quantities are locally available. The next sections~\ref{sec:relpos_est} and \ref{sec:rigidity_est} will  detail two estimation schemes able to recover all these relative quantities %needed in~(\ref{eq:exp_lambda7_bis}) 
by resorting to only \emph{measured distances} with respect to $1$-hop neighbors owing to the infinitesimal rigidity of the group formation.

\subsection{The Control Architecture}

The explicit description of the gradient of the rigidity eigenvalue in (\ref{eq:exp_lambda7_bis}) motivates the general control architecture for the implementation of the rigidity maintenance action in (\ref{eq:gradient_final}).  We observe that each agent requires knowledge of the rigidity eigenvalue, appropriate components of the rigidity eigenvector, and relative positions with respect to neighboring agents in a common reference frame.  As already mentioned, all these quantities are inherently global quantities, and thus a fully distributed implementation of (\ref{eq:gradient_final}) must include appropriate estimators for recovering these parameters in a distributed manner.
 
\begin{figure}[t]
    \begin{center}
    	\scalebox{.35}{\includegraphics{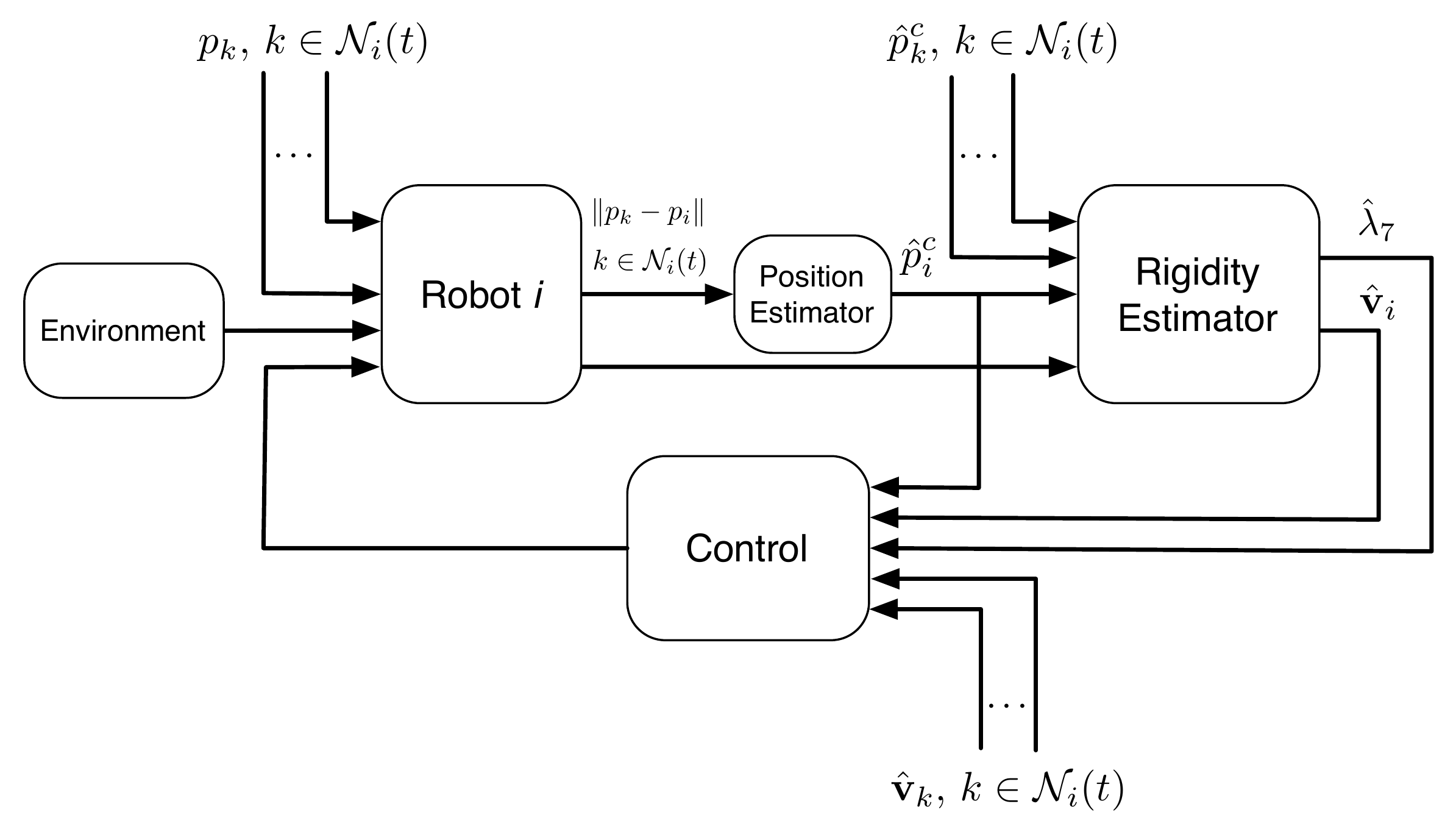}}
    \end{center}
    \caption{Control architecture for distributed rigidity maintenance. }\label{fig:rigidity_bd}
\end{figure}

As a preview of the next sections in this work, Figure~\ref{fig:rigidity_bd} depicts the general architecture needed by each agent %This architecture also motivates the next sections in this work.  Namely, 
to implement the rigidity maintenance control action~(\ref{eq:gradient_final}):
\begin{enumerate}
\item  exploiting measured distances with respect to its $1$-hop neighbors, and owing to the formation rigidity, each agent distributely estimates {relative positions in} a common reference frame, labeled as the \emph{position estimator} in the figure. This block is fully explained in section \ref{sec:relpos_est};

\item the output of the position estimator is then used by each agent to perform a distributed estimation of the rigidity eigenvalue ($\hat{\lambda}_7$) and of the relative components of the eigenvector ($\hat{{\bf v}}$), labeled as the \emph{rigidity estimator} in the figure.  This procedure is explained in section \ref{sec:rigidity_est}; 

\item thanks to these estimated quantities (relative positions, $\hat{\lambda}_7$ and $\hat{{\bf v}}$), each agent can finally implement the control action~(\ref{eq:gradient_final}) in a distributed way for maintaining infinitesimal rigidity of the formation during the group motion (while also coping with the various constraints and requirements embedded into weights $\mc{W}$). Maintaining infinitesimal rigidity guarantees in turn convergence of the position estimator from measured distances of step 1), %, i.e., to infer relative positions from relative distances, 
and thus closes the `estimation-control loop.'
\end{enumerate}

{We finally note that the proposed control architecture also implicitly assumes the  %also requires certain assumptions related to the 
\emph{initial} spatial configuration of the agents (i.e., their positions $p(\mc{V})$ at time 0) to be infinitesimally rigid (with, in particular, a $\lambda_7>\lambda_7^{\mathrm{min}}$).  This assumption on the group initial condition is formally stated below.
\begin{assumption}\label{assumption:infringed}
The initial spatial configuration of the agents, $p(\mc{V})$ at time $t=0$, is infinitesimally rigid with $\lambda_7 > \lambda_7^{\mathrm{min}}$.
\end{assumption}
The purpose of requiring a minimum level of rigidity ($ \lambda_7^{\mathrm{min}}$) is discussed in greater detail in Section \ref{sec:experiment}.
}
\section{Decentralized Estimation of\\ Positions in a Common Frame}\label{sec:relpos_est}

As explained, evaluation of the gradient control~(\ref{eq:exp_lambda7_bis}) requires that each agent has access to the \emph{relative positions} of its neighboring agents. %, as determined by the state-dependent edge weights defined in Definition \ref{def:neigh}.  
A main focus of this work, however, is to achieve rigidity maintenance using only \emph{relative distance} measurements.  In this section, we leverage the infinitesimal rigidity of the formation to estimate the relative position with respect to a \emph{common reference point}, $p_c$, shared by all agents.  In particular, each agent $i$, with $i=1\ldots {n}$, will be able to compute an estimate $\hat{p}_{i,c}$ of its relative position $p_{i,c} = p_i - p_c$ to this common point.  By exchanging their estimates over $1$-hop communication channels, two neighboring agents $i$ and $j$ can then build an estimate $\hat p_{j,c}-\hat p_{i,c}$ of their actual relative position $p_j-p_j$ in a common reference frame.
Notice that both the graph (i.e., neighbor sets, edges, etc.) and the robot positions are time-varying quantities. However, in this section we omit dependency on time for the sake of conciseness.

We also note that this common reference point does not need to be stationary, i.e., it can move over time.  In the following, we choose the point $p_c$ to be attached to a \emph{special agent} in the group, determined \emph{a priori}. This agent will be denoted with the index $i_c$ and, in the remainder of this section, we set $p_c = p_{i_c}$. We now proceed to describe a distributed scheme able to recover an estimation of the relative position $p_{i,c}=p_i-p_{i_c}$ for any agent in the group by exploiting the measured relative distances and the rigidity property of the formation.

To achieve this estimation, we first introduce additional assumptions on the capabilities of the special agent $i_c$.
While all agents other than $i_c$ are able to measure only the \emph{relative distance} to their neighbors, the special agent $i_c$ is required to be endowed with an additional sensor able to also measure, at any time $t$, the \emph{relative position} (i.e., distance \emph{and} bearing angles) of at least $2$ non-collinear neighbors;\footnote{Formation rigidity implies presence of at least $2$ non-collinear neighbors for each agent~\citep{1970-Lam}.} % \blue{AF: do she have a citation for that? \citep{?}} {DZ: I think Laman's paper has a statement for $\reals^2$ in this direction (\emph{On Graphs and Rigidity of Plane Skeletal Structures}), but need to look for similar in in $\reals^3$ }};
 these two sensed neighbors will be denoted with the indexes ($\iota(t),\,\kappa(t))\in\calN_{i_c}(t)$.  
\begin{rem}\label{rem:special_agents}
{We stress that the agent indexes $\iota(t)$ and $\kappa(t)$ are \emph{time-varying}; indeed, contrarily to the special agent $i_c$,  $\iota(t)$ and $\kappa(t)$ are not preassigned to any particular agent in the multi-robot team.  Therefore  the special agent $i_c$ only needs to measure its relative positions $p_{\iota(t)}-p_{i_c}$ and $p_{\kappa(t)}-p_{i_c}$ with respect to \emph{any} two agents within its neighborhood ($\iota$ and $\kappa$ are effectively arbitrary), %, i.e.,
with the points $p_{i_c}$, $p_{\iota(t)}$ and $p_{\kappa(t)}$ being non-collinear $\forall t\geq t_0$.
We believe this assumption is not too restrictive in practice, as it only require the presence of at least %it is both reasonable and practical to have at least 
one robot equipped with a range plus bearing sensor while all the remaining ones can be equipped with simple range-only sensors.}
\end{rem}
In the following we omit for brevity the dependency upon the time $t$ of the quantities $\iota$ and $\kappa$.

In order to perform the distributed estimation of $p_{i,c}=p_i-p_c$, $\forall i\in\{1,\ldots, {n}\}$  we follow the approach presented in~\citep{2010-CalCarWei}, with some slight modifications dictated by the nature of our problem. 
Consistently with our notation, we define $\hat p=\leftm{ccc}\hat p_{1,c}^{T}& \ldots & \hat p_{{n},c}^{T} \rightm^T\in\mathbb{R}^{3{n}}$. For compactness, we also denote by $\ell_{ij}$ the measured distance  $\|p_j-p_i\|$, as introduced in Definition \ref{def:neigh}. 
We then consider the following least squares estimation error:
\begin{align}
\begin{aligned}
e(\hat p) &=  \frac{1}{4}\sum_{\{i,j\}\in\calE}\left(\|\hat p_{j,c}-\hat p_{i,c}\|^2 - \ell_{ij}^2\right)^2 + \frac{1}{2}\|\hat p_{i_c,c}\|^2 + \\
&+ \frac{1}{2}\|\hat p_{\iota,c}- (p_\iota - p_{i_c})\|^2 + \frac{1}{2}\|\hat p_{\kappa,c}- (p_\kappa - p_{i_c})\|^2.
\end{aligned}\label{eq:est_error}
\end{align}
Notice that the quantities $\ell_{ij}$, $p_\iota - p_{i_c}$, and $p_\kappa - p_{i_c}$ are measured while all the other quantities represent local estimates of the robots. 

The nonnegative error function $e(\hat p)$ is zero if and only if:
\begin{itemize}
\item $\|\hat p_{j,c}-\hat p_{i,c}\|$ is equal to the measured distance $\ell_{ij}$ for all the pairs $\{i,j\}\in\calE$;
\item $\|\hat p_{i_c,c}\|=0$;
\item $\hat p_{\iota,c}$ and $\hat p_{\kappa,c}$ are equal to the measured relative positions $p_\iota - p_{i_c}$ and $p_\kappa - p_{i_c}$, respectively.
\end{itemize}

Note that the estimates $\hat p_{i_c,c}$, $\hat p_{\iota,c}$ and $\hat p_{\kappa,c}$ could be directly set to $0$, $(p_\iota - p_{i_c})$, and $(p_\iota - p_{i_c})$, respectively, since the first quantity is known and the last two are measured. Nevertheless, we prefer to let the estimator obtaining these values via a `filtering action' for the following reasons: first, the estimator provides a relatively simple way to filter out noise that might affect the relative position measurements; secondly, implementation of the rigidity maintenance controller only requires that $(\hat p_{j,c}-\hat p_{i,c}) \to (p_j- p_i)$, 
which is achieved if $\hat p_{j,c}\to p_j-\hat p_{i_c,c}$ and $\hat p_{i,c}\to p_i-\hat p_{i_c,c}$ for \emph{any} common value of $\hat p_{i_c,c}$. Therefore any additional hard constraint on $\hat p_{i_c,c}$ (e.g., $\hat p_{i_c,c}\equiv 0$) might unnecessarily over-constrain the estimator.

Applying a first-order gradient descent method to $e(\hat p)$, we finally obtain the following \emph{decentralized} update rule for the $i$-th agent ($i \neq i_c$):
\begin{align}\small
\begin{aligned}
\dot{\hat {p}}_{i,c}&=-\parder{e}{\hat p_{i,c}} = \sum_{j\in\calN_i} (\|\hat p_{j,c}-\hat p_{i,c}\|^2 - \ell_{ij}^2) (\hat p_{j,c}-\hat p_{i,c}) - 
\\
&\delta_{ii_c}\hat p_{i,c} - \delta_{i\iota}\left(\hat p_{\iota,c} - (p_\iota - p_{i_c})\right) - \delta_{i\kappa}\left(\hat p_{\kappa,c} - (p_\kappa - p_{i_c})\right),
\end{aligned}\label{eq:err_gradient}
\end{align}
where $\delta_{ij}$ is the well known Kronecker's delta.\footnote{$\delta_{ij}=0$ if $i\neq j$ and $\delta_{ij}=1$ otherwise.}
The estimator~\eqref{eq:err_gradient} is clearly decentralized since:
\begin{itemize}
\item $\ell_{ij}$ is locally measured by agent $i$;
\item $\hat p_{i,c}$ is locally available to agent $i$; 
\item $\hat p_{j,c}$ can be transmitted using one-hop communication from agent $j$ to agent $i$, for every $j\in \calN_i$;
\item  $(p_\iota - p_{i_c})$ and $(p_\kappa - p_{i_c})$ are measured by agent $i_c$ and can be  transmitted using one-hop communication to agents $\iota$ and $\kappa$ respectively.
\end{itemize}
In order to show the relation between the proposed decentralized position estimator scheme and the infinitesimal rigidity property, one can  restate~\eqref{eq:err_gradient} in matrix form as
\begin{align}
\dot{\hat {p}} = -\mc{R}(\hat{p})\hat{p} +R(\hat{p})\ell + \Delta^c
\label{eq:err_gradient_matrix}
\end{align}
where $\mc{R}(\hat{p})$ and $R(\hat {p})$ are the symmetric  rigidity matrix and the rigidity matrix computed with the estimated positions, $\ell \in \mathbb{R}^{|\calE|}$ is a vector whose entries are $\ell_{ij}^2$, $\forall \{i,j\}\in\calE$, and $\Delta^c\in \mathbb{R}^{|\calE|}$ contains the remaining terms of the right-hand-side of~\eqref{eq:err_gradient}. 

\begin{prop}
If the framework is (infinitesimally) rigid then the vector of true values $p-(\ones_{{n}}\otimes p_c) =\leftm{ccc} (p_1-p_c)^T &\cdots & (p_{{n}}-p_c)^T \rightm^T$ is an isolated local minimizer of $e(\hat p)$. Therefore, there exists an $\epsilon > 0$ such that, for all initial conditions satisfying $\|\hat p(0)- p-(\ones_{{n}} \otimes p_c)\| < \epsilon$, the estimation $\hat p$ converges  to $p-(\ones_{{n}}\otimes p_c)$. 
\end{prop}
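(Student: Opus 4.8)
The plan is to prove that $p^\star := p - (\ones_{{n}}\otimes p_c)$ is a \emph{strict} (hence isolated) local minimizer of $e(\hat p)$ and then read the update rule~\eqref{eq:err_gradient_matrix} as a gradient flow whose equilibrium $p^\star$ is locally asymptotically stable. First I would check that $p^\star$ annihilates every summand of~\eqref{eq:est_error}: each edge residual $\|\hat p_{j,c}-\hat p_{i,c}\|^2-\ell_{ij}^2$ vanishes since $p_j-p_i$ realizes the measured length $\ell_{ij}$, while $\hat p_{i_c,c}=p_{i_c}-p_c={\bf 0}$ and $\hat p_{\iota,c},\hat p_{\kappa,c}$ match the two measured relative positions. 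As $e\ge 0$, the point $p^\star$ is a global minimizer and $\nabla e(p^\star)={\bf 0}$; what remains is to certify that it is \emph{isolated}, which I would do by showing the Hessian $\nabla^2 e(p^\star)$ is positive definite.

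Next I would exploit the sum-of-squares structure. Writing $g_{ij}(\hat p)=\|\hat p_{j,c}-\hat p_{i,c}\|^2-\ell_{ij}^2$, one has $\nabla^2 e=\tfrac12\sum_{\{i,j\}\in\mc{E}}\big(\nabla g_{ij}\nabla g_{ij}^T+g_{ij}\nabla^2 g_{ij}\big)+D$, where $D$ is the constant positive-semidefinite matrix from the three anchoring terms, equal to $I_3$ on the coordinate blocks of agents $i_c,\iota,\kappa$ and zero elsewhere. At $p^\star$ the curvature terms drop out because $g_{ij}(p^\star)=0$, and the key observation is that $\nabla g_{ij}$ evaluated at $p^\star$ is exactly twice the (transposed) row of the rigidity matrix $R(p)$ associated with edge $\{i,j\}$: its vertex-$i$ block is $2(p_i-p_j)$ and its vertex-$j$ block is $2(p_j-p_i)$. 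Summing over edges therefore gives $\nabla^2 e(p^\star)=2R(p)^TR(p)+D=2\mc{R}+D$.

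The main step, and the one I expect to be the real obstacle, is to show $2\mc{R}+D\succ 0$. For any $v$ we have $v^T(2\mc{R}+D)v=2\|R(p)v\|^2+\|v_{i_c}\|^2+\|v_\iota\|^2+\|v_\kappa\|^2$, so this vanishes only if $v\in\ker R(p)$ \emph{and} $v$ is zero on the blocks of $i_c,\iota,\kappa$. By infinitesimal rigidity (Lemma~\ref{thm:rigidity} together with Definition~\ref{def:inf_rigidity}), $\ker R(p)$ consists exactly of the trivial motions, i.e. $v_i=\omega\times p_i+v_0$ for some $\omega,v_0\in\reals^3$ (equivalently, the span of the six eigenvectors of Theorem~\ref{thm:eigs_symrigid}). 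Imposing $v_{i_c}=v_\iota=v_\kappa={\bf 0}$ and subtracting the anchor equation for $i_c$ gives $\omega\times(p_\iota-p_{i_c})={\bf 0}$ and $\omega\times(p_\kappa-p_{i_c})={\bf 0}$; since $p_{i_c},p_\iota,p_\kappa$ are non-collinear, the two difference vectors are linearly independent, which forces $\omega={\bf 0}$ and then $v_0={\bf 0}$, i.e. $v={\bf 0}$. This is precisely where both hypotheses enter: infinitesimal rigidity pins $\ker R(p)$ down to the six roto-translations, and the non-collinearity of the special agent's two bearing measurements removes all six of them. Hence $\nabla^2 e(p^\star)\succ 0$ and $p^\star$ is an isolated local minimizer.

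Finally, to obtain convergence I would read~\eqref{eq:err_gradient_matrix} as the gradient flow $\dot{\hat p}=-\nabla e(\hat p)$ and apply Lyapunov's indirect method: the Jacobian of the right-hand side at $p^\star$ is $-\nabla^2 e(p^\star)=-(2\mc{R}+D)\prec 0$, so $p^\star$ is a locally (in fact exponentially) asymptotically stable equilibrium, which furnishes the required $\epsilon>0$ and the domain of attraction $\{\|\hat p(0)-p^\star\|<\epsilon\}$. Equivalently, one can take $V(\hat p)=e(\hat p)$ as a Lyapunov function: it is positive definite about $p^\star$ on a neighborhood on which the strict minimum isolates it, $\dot V=-\|\nabla e\|^2\le 0$, and $\nabla e$ vanishes only at $p^\star$ there, so LaSalle's invariance principle yields convergence on a suitable sublevel set.
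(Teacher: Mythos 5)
Your proof is correct, and it is in fact considerably more self-contained than what the paper itself provides. The paper does not really prove this proposition: it notes that the estimator \eqref{eq:err_gradient_matrix} is formally identical to the formation controller of \citep{2009-KriBroFra} (and similar to the scheme of \citep{2010-CalCarWei}), defers the stability and convergence discussion to those references, and confines itself to the one-sentence remark that the isolated-minimizer property ``is a consequence of the definition of infinitesimal rigidity and of the non-collinearity assumption'' on $i_c$, $\iota$, $\kappa$. Your argument supplies exactly the content behind that sentence: the Hessian identity $\nabla^2 e(p^\star)=2\mc{R}+D$, obtained because the curvature terms $g_{ij}\nabla^2 g_{ij}$ vanish at the true configuration and because $\nabla g_{ij}(p^\star)$ is twice the corresponding (transposed) row of $R(p)$ --- this is the step that ties the least-squares error to the paper's symmetric rigidity matrix --- followed by the positive-definiteness argument in which a vector annihilating both $R(p)$ and the three anchor blocks must be a trivial motion $v_i=\omega\times p_i+v_0$, whose parameters $\omega$ and $v_0$ are then killed by the linear independence of $p_\iota-p_{i_c}$ and $p_\kappa-p_{i_c}$. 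This is precisely where both hypotheses (infinitesimal rigidity and non-collinearity) enter, matching the paper's claim. Your concluding step, reading \eqref{eq:err_gradient} as the gradient flow $\dot{\hat p}=-\nabla e(\hat p)$ and invoking either the indirect Lyapunov method (the linearization $-(2\mc{R}+D)$ is Hurwitz, giving local exponential stability) or LaSalle on a compact sublevel set, is standard and sound. In short: the paper's route buys brevity by outsourcing the analysis to cited works; your route buys a proof that stands on its own within the paper's framework, at the cost of the explicit Hessian computation.
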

We point out that the estimator in the form (\ref{eq:err_gradient_matrix}) is identical to the formation controller proposed in \citep{2009-KriBroFra}.  Consequently, we refer the reader to this work for a discussion on the stability and convergence properties of this model.  A similar estimation scheme is also proposed in \citep{2010-CalCarWei}. We briefly emphasize that the property of having the true value of relative positions $p-(\ones_{{n}}\otimes p_c)$ as an isolated local minimizer of~(\ref{eq:est_error}) is a consequence of the definition of infinitesimal rigidity and of the non-collinearity assumption of the agents $i_c$, $\iota$, and $\kappa$. 

We finally note that, in general, the rate of convergence of a gradient descent method is known to be slower than other estimation methods. However, we opted for this method since is its directly amenable to a distributed implementation and requires only first-order derivative information.

\section{Distributed Estimation of the \\Rigidity Eigenvalue and Eigenvector }\label{sec:rigidity_est}

As seen in section \ref{sec:relpos_est}, when the multi-robot team possesses the infinitesimal rigidity property, it is possible to distributedly estimate the relative positions in a common reference frame for each agent.
However, the proposed distributed rigidity maintenance control action (\ref{eq:gradient_final}) requires knowledge of some additional global quantities that are explicitly expressed in the expressions (\ref{eq:exp_lambda7_bis}) and (\ref{eq:gradient_final}).  In particular, each agent must know also the current value of the rigidity eigenvalue and certain components of the rigidity eigenvector. In this section we propose a distributed estimation scheme inspired by the distributed connectivity maintenance solution proposed in \citep{2010-YanFreGorLynSriSuk} for obtaining the rigidity eigenvalue and eigenvector. 

For the reader's convenience, we first provide a brief summary of the \emph{power iteration method} for estimating the eigenvalues and eigenvectors of a matrix.  We then proceed to show how this estimation process can be distributed by employing \emph{PI consensus filters} and by suitably exploiting the structure of the symmetric rigidity matrix.

\subsection{Power Iteration Method}\label{subset:PI}

The power iteration method is one of a suite of iterative algorithms for estimating the dominant eigenvalue and eigenvector of a matrix.  Following the same procedure as in \citep{2010-YanFreGorLynSriSuk}, we employ a continuous-time variation of the algorithm that will compute the smallest non-zero eigenvalue and eigenvector of the symmetric rigidity matrix.  

The discrete-time power iteration algorithm is based on the following iteration,
$$ x^{(k+1)} = \frac{Ax^{(k)}}{\|Ax^{(k)}\|} =   \frac{A^kx^{(0)}}{\|A^kx^{(0)}\|}.$$
Under certain assumptions for the matrix $A$ (i.e., no repeated eigenvalues), the iteration converges to the eigenvector associated to the largest eigenvalue of the matrix.  

To adapt the power iteration to compute the rigidity eigenvector and eigenvalue, we leverage the results of Theorem \ref{thm:eigs_symrigid} and consider the iteration on a \emph{deflated} version of the symmetric rigidity matrix, i.e. $\tilde{\mc{R}} = I-TT^T-\alpha \mc{R}$ for some small enough $\alpha > 0$.  {The power iteration method estimates the largest eigenvalue of a matrix.  As all the eigenvalues of the symmetric rigidity matrix are non-negative, the largest eigenvalue of the deflated version $\tilde{\mc{R}}$ will correspond to $1-\alpha \lambda_7$, and thus can be used to estimate $\lambda_7$.  The constant $\alpha$ ensures the  matrix $\tilde{\mc{R}}$ is positive semi-definite.}The columns of the matrix $T \in \reals^{3{n} \times 6}$ contain the eigenvectors corresponding to the zero eigenvalues of $\mc{R}$, for example, as characterized in Theorem \ref{thm:eigs_symrigid}.
Note that the power iteration applied to the matrix $\tilde{\mc{R}}$ will compute the eigenvector associated with the rigidity eigenvalue.\footnote{Assuming the rigidity eigenvalue is unique and the framework is infinitesimally rigid (i.e., the rigidity eigenvalue is positive). We will discuss the implications of this assumption later.}

The continuous-time counterpart of the power iteration algorithm now takes the form \citep{2010-YanFreGorLynSriSuk} 
\bea \label{pi_ct}
\hspace{-7pt}\dot{\hat{{\bf v}}}(t) &\hspace{-8pt}=&\hspace{-8pt} -\left(k_1TT^T\hspace{-2pt}+\hspace{-1pt}k_2 \mc{R}\hspace{-1pt} +\hspace{-1pt} k_3\left(\hspace{-0pt}\tfrac{\hat{{\bf v}}(t)^T\hat{{\bf v}}(t)}{3{n}} \hspace{-2pt}-\hspace{-2pt} 1\right)I\right) \hat{{\bf v}}(t) ,
\eea
where $\hat{{\bf v}}$ is the \emph{estimate} of the rigidity eigenvector, and the constants $k_1,k_2,k_3>0$ are chosen to ensure the trajectories converge to the rigidity eigenvector.{\footnote{{Note that the constant $\alpha$ used to describe the deflated symmetric rigidity matrix is effectively replaced by $k_2$ in this formulation.}}}  We present here the main result and refer the reader to \cite{2010-YanFreGorLynSriSuk} for details of the proof, noting that the proof methodologies are the same for the system (\ref{pi_ct}) as that proposed in \cite{2010-YanFreGorLynSriSuk}.

\begin{thm}\label{thm:poweriter}
Assume that the {weighted framework $(\mc{G}, p, \mc{W})$ with }symmetric rigidity matrix $\mc{R}$ {is infinitesimally rigid and} has distinct non-zero eigenvalues, and let ${\bf v}$ denote the rigidity eigenvector.  Then for any initial condition $\hat{{\bf v}}(t_0) \in \reals^{3{n}}$ such that ${\bf v}^T\hat{{\bf v}}(t_0) \neq 0$, the trajectories of (\ref{pi_ct}) converge to the subspace spanned by the rigidity eigenvector, i.e., $\lim_{t \rightarrow \infty} \hat{{\bf v}}(t) = \gamma {\bf v}$ for $\gamma \in \reals$, if and only if the gains $k_1,k_2$ and $k_3$ satisfy the following conditions: 
\begin{enumerate}
	\item \label{thm_itm:1} $k_1,k_2,k_3 > 0$, 
	\item \label{thm_itm:2} $k_1 > k_2 \lambda_7$,
	\item \label{thm_itm:3} $k_3 > k_2 \lambda_7$.
\end{enumerate}
Furthermore, for any choice of constants $k_1,k_2,k_3>0$, the trajectories of (\ref{pi_ct}) remain bounded and satisfy
$$\|\hat{{\bf v}}(t)\| \leq \max \left\{\|\hat{{\bf v}}(t_0)\|, \sqrt{3{n}}\right\}, \; \forall \, t \geq t_0.$$
In particular, the trajectory converges to the rigidity eigenvector with 
$$ \lim_{t \rightarrow \infty } \|\hat{{\bf v}}(t)\| = \sqrt{3{n}\left(1-\frac{k_2}{k_3}\right) \lambda_7 }.$$
\end{thm}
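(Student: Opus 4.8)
The plan is to diagonalize the flow~(\ref{pi_ct}) in an orthonormal eigenbasis of $\mc{R}$, which turns the vector dynamics into a family of scalar ODEs coupled only through a single common gain. Let $\{w_1,\ldots,w_{3{n}}\}$ be an orthonormal eigenbasis with $\mc{R}w_i = \lambda_i w_i$, ordered so that $\lambda_1 = \cdots = \lambda_6 = 0$ (with $w_1,\ldots,w_6$ spanning the null space, i.e.\ the columns of $T$ from Theorem~\ref{thm:eigs_symrigid}), $w_7 = {\bf v}$ is the rigidity eigenvector, and $0 < \lambda_7 < \lambda_8 < \cdots < \lambda_{3{n}}$ by the distinct-eigenvalue hypothesis. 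Writing $\hat{{\bf v}}(t) = \sum_i c_i(t)w_i$ and using $TT^T w_i = w_i$ for $i \leq 6$ and $TT^T w_i = {\bf 0}$ for $i \geq 7$, projection of~(\ref{pi_ct}) onto each $w_i$ yields the decoupled scalar equations $\dot c_i = -(k_1 + g(t))c_i$ for $i \leq 6$ and $\dot c_i = -(k_2\lambda_i + g(t))c_i$ for $i \geq 7$, where $g(t) = k_3\big(\|\hat{{\bf v}}(t)\|^2/(3{n}) - 1\big)$ is the common nonlinear gain.

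Next I would establish convergence of the \emph{direction}. The crucial observation is that $g(t)$ cancels in the ratios $r_i = c_i/c_7$, producing autonomous \emph{linear} dynamics: $\dot r_i = (k_2\lambda_7 - k_1)r_i$ for $i \leq 6$ and $\dot r_i = k_2(\lambda_7 - \lambda_i)r_i$ for $i \geq 8$. Condition~(\ref{thm_itm:2}), $k_1 > k_2\lambda_7$, together with the strict ordering $\lambda_i > \lambda_7$ for $i \geq 8$, makes every such rate strictly negative, so all $r_i \to 0$ exponentially whenever $c_7(t_0)\neq 0$---which is precisely the stated initial condition ${\bf v}^T\hat{{\bf v}}(t_0) \neq 0$. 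Hence the direction of $\hat{{\bf v}}$ aligns with $\pm{\bf v}$, giving convergence to the subspace spanned by the rigidity eigenvector.

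For boundedness I would work with $\rho(t) = \|\hat{{\bf v}}(t)\|^2$. Differentiating and using that $k_1 TT^T + k_2\mc{R} \succeq 0$ (by Condition~(\ref{thm_itm:1})) gives $\dot\rho = -2\hat{{\bf v}}^T(k_1 TT^T + k_2\mc{R})\hat{{\bf v}} - 2k_3(\rho/(3{n}) - 1)\rho \leq -2k_3(\rho/(3{n}) - 1)\rho$, so $\dot\rho < 0$ whenever $\rho > 3{n}$. This confines the trajectory to $\|\hat{{\bf v}}(t)\| \leq \max\{\|\hat{{\bf v}}(t_0)\|, \sqrt{3{n}}\}$ for all $t \geq t_0$, independently of the gain inequalities. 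To pin down the limiting magnitude, I would impose the equilibrium condition on the reduced flow: with $\hat{{\bf v}} = \gamma{\bf v}$, setting $\dot{\hat{{\bf v}}} = {\bf 0}$ forces $k_2\lambda_7 + k_3(\gamma^2/(3{n}) - 1) = 0$, which solves for the nonzero value of $\gamma^2$ and hence for $\lim_{t\to\infty}\|\hat{{\bf v}}(t)\|$.

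Finally I would rule out collapse to the origin. Near $\rho = 0$ one has $g(t) \approx -k_3$, so the $c_7$-equation reads $\dot c_7 \approx (k_3 - k_2\lambda_7)c_7$; Condition~(\ref{thm_itm:3}), $k_3 > k_2\lambda_7$, makes this growth rate positive, so the origin repels in the ${\bf v}$-direction and the nonzero equilibrium is the one selected (it is also exactly the positivity requirement for $\gamma^2$ in the magnitude computation). The main obstacle is precisely the nonlinear coupling through $g(t)$: the magnitude and direction dynamics are intertwined, and the whole argument hinges on the cancellation of $g(t)$ in the ratio variables (decoupling the direction) together with the scalar comparison for $\rho$ (controlling the magnitude). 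Assembling these two reductions into a single convergence statement, and verifying that the invariant region excludes the origin under Conditions~(\ref{thm_itm:2})--(\ref{thm_itm:3}), is the technical heart; it mirrors the argument of~\cite{2010-YanFreGorLynSriSuk}, the only difference being that here $\mc{R}$ plays the role of the graph Laplacian and its null-space structure is supplied by Theorem~\ref{thm:eigs_symrigid}.
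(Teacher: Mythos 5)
Your proposal is correct and takes essentially the same approach the paper relies on: the paper gives no self-contained proof of this theorem, deferring entirely to the power-iteration analysis of \citep{2010-YanFreGorLynSriSuk} with the remark that the methodology carries over to $\mc{R}$, and your spectral-decomposition argument (ratio variables cancelling the common nonlinear gain, scalar comparison for the magnitude, equilibrium selection under conditions 2--3) is precisely that methodology adapted to the symmetric rigidity matrix. As a minor bonus, your equilibrium computation yields $\gamma^2 = 3n\left(1 - \tfrac{k_2}{k_3}\lambda_7\right)$, which is consistent with Corollary~\ref{cor:rig_eig_est} and indicates that the parenthesization of the limiting norm displayed in the theorem statement should be read in exactly this way.
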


\begin{rem}\label{rem:time-varying}
The power iteration proposed in (\ref{pi_ct}) assumes that the symmetric rigidity matrix is \emph{static}.  However, in a dynamic setting the parameters of the rigidity matrix are a function of the state of the robots in a multi-robot system, and both the symmetric rigidity matrix and the expression of its null space are inherently \emph{time-varying}.  While the proof provided in \citep{2010-YanFreGorLynSriSuk} does not explicitly address the time-varying case, our experience suggests that the dynamics of (\ref{pi_ct}) is able to track even a time-varying rigidity eigenvector, so long as the dynamics of the robots are slower than the estimator.  The speed of convergence of (\ref{pi_ct}), of course, is also tunable by the constants $k_i$.
\end{rem}
\begin{rem}\label{rem:lambda8}
Another important subtlety of the dynamics~(\ref{pi_ct}) is the requirement that the rigidity eigenvalue is \emph{unique}.  %Indeed, it remains an open problem to provide a characterization of rigid frameworks that guarantee uniqueness of the rigidity eigenvalue. 
When the rigidity eigenvalue is not unique, the associated eigenvector can belong to (at least) a two-dimensional subspace $\calL$, so that~(\ref{pi_ct}) can not be expected to converge to a unique eigenvector but rather to an equilibrium point in $\calL$ (see, e.g.,~\citep{2010-YanFreGorLynSriSuk}). This can pose difficulties in real-world conditions since non-idealities such as noise in measuring the agent states (used in evaluating the symmetric rigidity matrix $\mc{R}$), and discretization when numerically integrating~(\ref{pi_ct}), can make the equilibrium point for~(\ref{pi_ct}) in $\calL$ to abruptly vary over time, thus preventing a successful convergence of the estimation of ${\bf v}$. {}

\end{rem}

\subsection{A Distributed Implementation}\label{sec:distr_est}

The results of section \ref{subset:PI} provide a continuous-time estimator for estimating the rigidity eigenvalue and eigenvector of the symmetric rigidity matrix.  The estimator given in (\ref{pi_ct}), however, is a \emph{centralized} implementation.  Moreover, certain parameters used in (\ref{pi_ct}) are expressed using a common reference frame (i.e., the quantity $TT^T$, see Theorem \ref{thm:eigs_symrigid} and Remark \ref{rem:rigmtx_eigenvector}) or require each robot to know the entire estimator state (i.e., the quantity $\hat{{\bf v}}(t)^T\hat{{\bf v}}(t)$ in (\ref{pi_ct})). 
We propose in this sub-section a distributed implementation for the rigidity estimator that overcomes these difficulties, in particular by leveraging the results of Section \ref{sec:relpos_est}.  In the same spirit as the solution proposed in \citep{2010-YanFreGorLynSriSuk}, we make use of the \emph{PI average consensus filter} \citep{Freeman2006} to distributedly compute the necessary quantities of interest, and strongly exploit the particular structure of the symmetric rigidity matrix.

Our approach to the distribution of (\ref{pi_ct}) is to exploit both the built-in distributed structure (i.e., the symmetric rigidity matrix $\mc{R}$) and the reduction of the other parameters to values that all agents can obtain via a distributed algorithm.  In this direction, we now proceed to analyze each term in (\ref{pi_ct}) and discuss the appropriate strategies for implementing the estimator in a distributed fashion.

Concerning the first term $TT^T\hat{{\bf v}}$, Theorem \ref{thm:eigs_symrigid} provides an analytic characterization of the eigenvectors associated with the zero eigenvalues of the symmetric rigidity matrix (assuming the graph is infinitesimally rigid).  To begin the analysis, we explicitly write out the matrix $T$ and examine the elements of the matrix $TT^T$.   Following the comments of Remark \ref{rem:rigmtx_eigenvector}, we express the null-space vectors in terms of \emph{relative positions} to an arbitrary point $p_c {=\leftm{ccc} p_c^x & p_c^y & p_c^z\rightm} \in \reals^3$; in particular, the point $p_c$ will be the special agent $i_c$ described in Section~\ref{sec:relpos_est}.  %{say that $T$ contains a linear combination of the eigenvectors detailed in Th.~\ref{thm:eigs_symrigid}...}

\beas
T &\hspace{-9pt}=&\hspace{-9pt} \leftm{cccccc}  \ones_{{n}} & {\bf 0} & {\bf 0} &p^y-p_c^y \ones_{{n}} & p^z-p_c^z \ones_{{n}}  & {\bf 0} \\ {\bf 0} & \ones_{{n}} & {\bf 0} & p_c^x \ones_{{n}} - p^x & {\bf 0} &p^z-p_c^z \ones_{{n}} \\ {\bf 0} & {\bf 0} & \ones_{{n}}& {\bf 0} & p_c^x \ones_{{n}} - p^x & p_c^y \ones_{{n}} - p^y\rightm
\eeas
For the remainder of this discussion, we assume that all agents have access to their state in an estimated coordinate frame relative to the point $p_{i_c}$, the details of which were described in Section \ref{sec:relpos_est}.

\begin{figure*}[!t]
\normalsize
\bea\label{TT}
TT^T = \leftm{ccc} \ones_{{n}} \ones_{{n}}^T+p^{y,c} (p^{y,c})^T +  p^{z,c} (p^{z,c})^T & - p^{y,c} (p^{x,c})^T & -p^{z,c} (p^{x,c})^T \\ -p^{x,c} ( p^{y,c})^T & \ones_{{n}} \ones_{{n}}^T+ p^{x,c} (p^{x,c})^T + p^{z,c} (p^{z,c})^T  & -  p^{z,c} (p^{y,c})^T \\
- p^{x,c} (p^{z,c})^T & -p^{y,c} (p^{z,c})^T &\ones_{{n}} \ones_{{n}}^T+ p^{x,c} (p^{x,c})^T + p^{y,c} (p^{y,c})^T \rightm
\eea
\hrulefill
\vspace*{4pt}
\end{figure*}

To simplify notations, we write as in Section \ref{sec:relpos_est}, for example, $p^{y,c}= p^y - p_c^y\ones_{{n}}$, and $p_{i,c} = p_i - p_c$.
Following our earlier notation, we also partition the vector $\hat{{\bf v}}$ into each coordinate, $\hat{{\bf v}}^x$, $\hat{{\bf v}}^y$, and $\hat{{\bf v}}^z$.  Let ${{\bf avg}}(r)$ denote the average value of the elements in the vector $r\in \reals^n$, i.e. ${{\bf avg}}(r) = \tfrac{1}{n} \ones_{{n}}^Tr$.  Then it is straightforward to verify that
\begin{align}
\ones_{{n}} \ones_{{n}}^T \hat{{\bf v}}^k(t) &= {n} \, {{\bf avg}}(\hat{{\bf v}}^k(t)) \ones_{{n}}, \; k  \in \{x,y,z\} \label{avg_rigvec}\\
p_{i,c}(p_{j,c})^T \hat{{\bf v}}^k(t) &= {n}\, {{\bf avg}}(p_{j,c} \circ \hat{{\bf v}}^k)p_{i,c}, \;  i,j,k \in \{x,y,z\}, \label{avg_relpos}
\end{align}
where `$\circ$' denotes the element-wise multiplication of two vectors.

This characterization highlights that, in order to evaluate the term $TT^T\hat{{\bf v}}$, each agent must compute the average amongst all agents of a certain value that is a function of the current state of the estimator and of the positions in some common reference frame whose origin is the point $p_c$.  It is well known that the \emph{consensus protocol} can be used to distributedly compute the average of a set of numbers \citep{2010-MesEge}.  The speed at which the consensus protocol can compute this value is a function of the connectivity of the underlying graph and the weights used in the protocol.  In this framework, however, a direct application of the consensus protocol will not be sufficient.  Indeed, it is expected that each agent will be physically moving, leading to a time-varying description of the matrix $TT^T$ (see Remark \ref{rem:time-varying}).  Additionally, the underlying network is also dynamic as sensing links between agents are inherently state dependent.

The use of a \emph{dynamic} consensus protocol introduces additional tuning parameters that can be used to ensure that the distributed average calculation converges faster than the underlying dynamics of each agent in the system, as well as the ability to track the average of a time-varying signal.  We employ the following \emph{PI average consensus filter} proposed in \citep{Freeman2006},

\begin{align} \label{pi_consensus}
\leftm{c} \dot{z}(t) \\ \dot{w}(t) \rightm &= \leftm{cc} -\gamma I_{{n}} - K_PL(\mc{G}(t)) & K_IL(\mc{G}(t)) \\ -K_IL(\mc{G}(t))& 0 \rightm \leftm{c} z(t) \\ w(t) \rightm \nonumber \\
&\quad\; + \leftm{c}\gamma I_{{n}} \\ 0 \rightm u(t) \\
y(t) &= \leftm{cc} I_{{n}} & 0 \rightm \leftm{c} z(t) \\ w(t) \rightm.
\end{align} 
The parameters $K_P, K_I \in \reals $ and $\gamma \in \reals$ are used to ensure stability and tune the speed of the filter.  An analysis of the stability and performance of this scheme with time-varying graphs is given in \citep{Freeman2006}.  Figure \ref{fig:TT_bd} provides a block diagram representation of how the PI consensus filters are embedded into the calculation of $TT^T\hat{{\bf v}}(t)$ (in only the $x$-coordinate).

\begin{figure}[!t]
    \begin{center}
    	\scalebox{.3}{\includegraphics{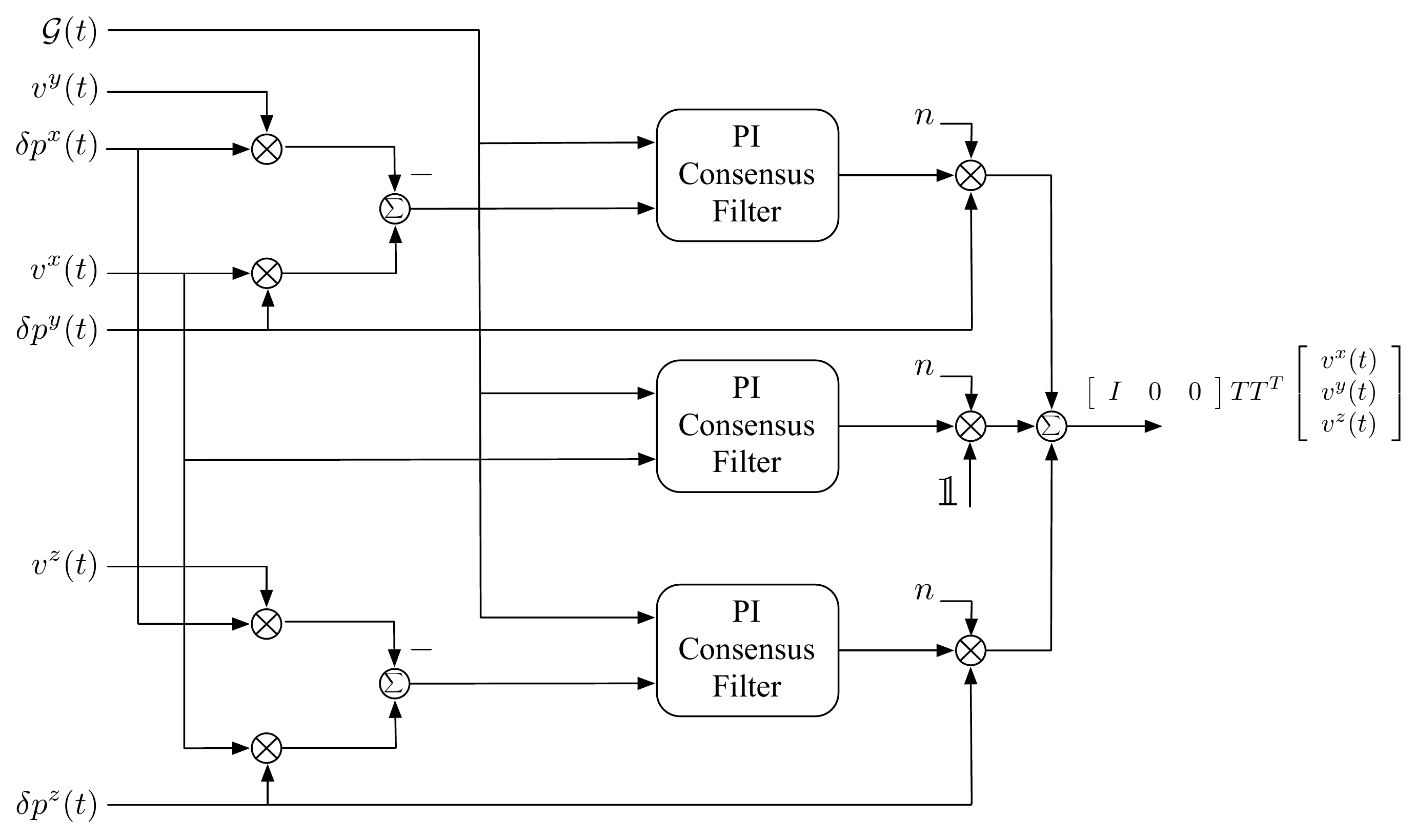}}
    \end{center}
    \caption{Block diagram showing PI consensus filters in calculation of $TT^T\hat{{\bf v}}(t)$.}\label{fig:TT_bd}
\end{figure}

As for the second term in~(\ref{pi_ct}), as shown in $\S$\ref{subsec:rigidity_eig} the symmetric rigidity matrix is by construction a distributed operator.  The term $\mc{R}\hat{{\bf v}}(t)$ can be computed using only information exchanged between neighboring agents, as determined by the sensing graph.  

The final term in~(\ref{pi_ct}) is a normalization used to drive the eigenvector estimate to the surface of a sphere of radius $\sqrt{3{n}}$.  Using the same analysis as above, it can be verified that 
\bea
 \left( \frac{\hat{{\bf v}}(t)^T\hat{{\bf v}}(t)}{3{n}} -1 \right) \hat{{\bf v}}(t) = \left({{\bf avg}}(\hat{{\bf v}}(t)\circ \hat{{\bf v}}(t)) - 1 \right) \hat{{\bf v}}(t). \label{avg_rigvec_sq}
 \eea
This quantity can therefore be distributedly computed using an additional PI consensus filter.

Using the result of Theorem \ref{thm:poweriter} and the PI consensus filters, each agent is also able to estimate the rigidity eigenvalue.
\begin{cor}\label{cor:rig_eig_est}
Let $\overline{{\bf v}}_i^{2}(t)$ denote the output of the PI consensus filter for estimating the quantity ${{\bf avg}}(\hat{{\bf v}}(t) \circ \hat{{\bf v}}(t))$ for agent $i$.  Then agent $i$'s estimate of the rigidity eigenvalue, $\hat{\lambda}_7^i$, can be obtained as
$$\hat{\lambda}_7^i = \frac{k_3}{k_2}\left(1- \overline{{\bf v}}_i^{2}(t)\right).$$
\end{cor}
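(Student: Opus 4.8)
The plan is to read off the claimed estimator as a direct algebraic consequence of the steady-state behaviour established in Theorem~\ref{thm:poweriter}, combined with the identity~(\ref{avg_rigvec_sq}) that relates the consensus-filter target to the norm of the eigenvector estimate. The derivation is essentially a single substitution, so the work lies in correctly assembling the two convergence facts.

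First I would record what the PI consensus filter of agent $i$ actually computes. By construction $\overline{{\bf v}}_i^{2}(t)$ is agent $i$'s distributed estimate of ${{\bf avg}}(\hat{{\bf v}}(t)\circ\hat{{\bf v}}(t))$, and the convergence analysis of the filter~(\ref{pi_consensus}) (see~\citep{Freeman2006}) guarantees that, once the estimator state $\hat{{\bf v}}$ has settled, every agent's output tracks this true average. Using the identity already noted in~(\ref{avg_rigvec_sq}), the average equals exactly $\frac{1}{3{n}}\hat{{\bf v}}(t)^T\hat{{\bf v}}(t)=\frac{1}{3{n}}\|\hat{{\bf v}}(t)\|^2$, so that $\overline{{\bf v}}_i^{2}(t)\to \frac{1}{3{n}}\|\hat{{\bf v}}\|^2$. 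Next I would pin down the steady-state norm of $\hat{{\bf v}}$ from Theorem~\ref{thm:poweriter}: at the equilibrium $\hat{{\bf v}}=\gamma{\bf v}$ of~(\ref{pi_ct}) one has $TT^T{\bf v}=0$ (since ${\bf v}$ is orthogonal to the null space spanned by the columns of $T$) and $\mc{R}{\bf v}=\lambda_7{\bf v}$, so setting $\dot{\hat{{\bf v}}}=0$ leaves the scalar condition $k_2\lambda_7 + k_3\big(\tfrac{\gamma^2}{3{n}}-1\big)=0$, whence $\|\hat{{\bf v}}\|^2=\gamma^2 = 3{n}\big(1-\tfrac{k_2}{k_3}\lambda_7\big)$. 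Substituting this into the filter limit gives $\overline{{\bf v}}_i^{2}\to 1-\tfrac{k_2}{k_3}\lambda_7$, and solving for $\lambda_7$ yields $\lambda_7=\tfrac{k_3}{k_2}\big(1-\overline{{\bf v}}_i^{2}\big)$. Defining $\hat{\lambda}_7^i:=\tfrac{k_3}{k_2}\big(1-\overline{{\bf v}}_i^{2}(t)\big)$ therefore produces an asymptotically exact, and fully distributed, estimate of the rigidity eigenvalue at every agent.

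Since the algebra is short, the real content is in verifying that the two convergence facts are compatible, and I expect this to be the main obstacle. The power iteration~(\ref{pi_ct}) and the PI consensus filter are run concurrently and both are driven by a \emph{time-varying} $\mc{R}$ (cf.\ Remarks~\ref{rem:time-varying} and~\ref{rem:lambda8}), so a fully rigorous justification would require a timescale-separation argument ensuring the filter tracks the average faster than $\hat{{\bf v}}$ evolves and faster than the formation itself moves. I would therefore state the result as an asymptotic identity at the common equilibrium, invoking the gain conditions \ref{thm_itm:1}--\ref{thm_itm:3} of Theorem~\ref{thm:poweriter} together with the filter stability from~\citep{Freeman2006}, and note that the eigenvalue-uniqueness caveat of Remark~\ref{rem:lambda8} is inherited here as well.
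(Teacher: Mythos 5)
Your proposal is correct and follows essentially the same route as the paper, which states the corollary as an immediate consequence of Theorem~\ref{thm:poweriter} and the identity~(\ref{avg_rigvec_sq}): the PI filter output tracks $\mbox{{\bf avg}}(\hat{{\bf v}}\circ\hat{{\bf v}})=\tfrac{1}{3n}\|\hat{{\bf v}}\|^2$, the asymptotic norm from the theorem is substituted in, and the relation is inverted for $\lambda_7$. Your equilibrium computation $\gamma^2=3n\bigl(1-\tfrac{k_2}{k_3}\lambda_7\bigr)$ is the right one, and it correctly disambiguates the limit formula in Theorem~\ref{thm:poweriter} (where $\lambda_7$ belongs inside the parenthesis, multiplying $k_2/k_3$, as the corollary's inversion requires); your closing caveats on timescale separation and eigenvalue uniqueness mirror the paper's own remarks.
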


In summary, each agent implements the following filters:
\begin{itemize}
	\item Estimation of a common reference frame using (\ref{eq:err_gradient}).
	\item Estimation of the rigidity eigenvector using (\ref{pi_ct}).
	\item A PI-Consensus filter for tracking the average of the estimate of the rigidity eigenvector, (\ref{avg_rigvec}).
	\item A PI-Consensus filter for tracking the quantity described in (\ref{avg_relpos}).
	\item A PI-Consensus filter for tracking the average of the square of the rigidity eigenvector estimate, (\ref{avg_rigvec_sq}).
\end{itemize}
For completeness, we now present the full set of filters that each robot executes in (\ref{fullest1})-(\ref{fullest6b}).  These equations are written only for the $x$-coordinate associated with all the quantities.  Observe, however, that the filters needed for the $y-$ and $z-$coordinates do not require additional integrators, as similar filters can be vectorized (for example, the PI filters can be combined as in (\ref{pi_consensus})).   For the readers convenience, a summary of the notations and variable definitions used in (\ref{fullest1})-(\ref{fullest6b}) is provided in Table \ref{tab:notation}.

\begin{figure*}
\normalsize
\begin{align}
\dot{\hat{{\bf v}}}_i^x &= -k_1{n}\left(\overline{{\bf v}}_i^x +z_i^{xy}(t)\hat{p}_{i,c}^y+ z_i^{xz}\hat{p}_{i,c}^z(t)\right)  -k_2 \sum_{j \in \mc{N}_i(t)} W_{ij}\left(\hat{{\bf v}}_i^x(t)-\hat{{\bf v}}_j^x  \right)  - k_3\left(\overline{{\bf v}}_i^x-1  \right)\hat{{\bf v}}_i^x \label{fullest1}\\
\dot{\hat {p}}_{i,c}&= \sum_{j\in\calN_i(t)} (\|\hat p_{j,c}-\hat p_{i,c}\|^2 - \ell_{ij}^2) (\hat p_{j,c}-\hat p_{i,c}) - \delta_{ii_c}\hat p_{i,c} - \delta_{i\iota}\left(\hat p_{\iota,c} - (p_\iota - p_{i_c})\right) - \delta_{i\kappa}\left(\hat p_{\kappa,c} - (p_\kappa - p_{i_c})\right) \label{fullest2}\\
\tikzmark{top1}\,\,
\dot{\overline{{\bf v}}}_i^{x} &=\gamma \left(\hat{{\bf v}}_i^x-\overline{{\bf v}}_i^{x}\right) - K_P \sum_{j \in \mc{N}_i} \left(\overline{{\bf v}}_i^{x}-\overline{{\bf v}}_j^{x}(t)\right) +K_I \sum_{j \in \mc{N}_i(t)} \left(\overline{w}_i^{x}-\overline{w}_j^{x}\right) \label{fullest3a}\\
\tikzmark{bot1}\,\,
\dot{\overline{w}}_i^{x} &= -K_I \sum_{j \in \mc{N}_i(t)} \left(\overline{{\bf v}}_i^{x}-\overline{{\bf v}}_j^{x} \right) \label{fullest3b}\\
\tikzmark{top2}\,\,\,
\dot{\overline{{\bf v}}}_i^{2x} &=\gamma \left((\hat{{\bf v}}_i^x)^2-\overline{{\bf v}}_i^{2x}\right) -K_P \sum_{j \in \mc{N}_i(t)} \left(\overline{{\bf v}}_i^{2x}-\overline{{\bf v}}_j^{2x}\right) +K_I \sum_{j \in \mc{N}_i(t)} \left(\overline{w}_i^{2x}-\overline{w}_j^{2x}\right)\label{fullest4a}\\
\tikzmark{bot2}\,\,
\dot{\overline{w}}_i^{2x} &= -K_I \sum_{j \in \mc{N}_i(t)} \left(\overline{{\bf v}}_i^{2x}-\overline{{\bf v}}_j^{2x}\right)  \label{fullest4b}\\
\tikzmark{top3}\,\,\,
\dot{z}_i^{xy} &=\gamma \left(\left(\hat{p}^y\circ \hat{{\bf v}}^x-\hat{ p}^x\circ \hat{{\bf v}}^y  \right)-z_i^{xy}\right) -K_P \sum_{j \in \mc{N}_i(t)} \left(z_i^{xy}-z_j^{xy}\right) + K_I \sum_{j \in \mc{N}_i(t)} \left(w_i^{xy}(t)-w_j^{xy}\right) \label{fullest5a}\\
\tikzmark{bot3}\,\,
\dot{w}_i^{xy}&= -K_I \sum_{j \in \mc{N}_i(t)} \left(z_i^{xy}-z_j^{xy} \right)  \label{fullest5b}\\
\tikzmark{top4}\,\,\,
\dot{z}_i^{xz}&=\gamma \left(\left(\hat{ p}^{z}\circ \hat{{\bf v}}^x-\hat{p}^x\circ \hat{{\bf v}}^z \right)-z_i^{xz}\right) -K_P \sum_{j \in \mc{N}_i(t)} \left(z_i^{xy}-z_j^{xy}\right) + K_I \sum_{j \in \mc{N}_i(t)} \left(w_i^{xy}-w_j^{xy}\right) \label{fullest6a}\\
\tikzmark{bot4}\,\,
\dot{w}_i^{xz} &= -K_I \sum_{j \in \mc{N}_i(t)} \left(z_i^{xz}-z_j^{xz} \right) \label{fullest6b}
\end{align}
\hrulefill
\DrawVerticalBrace[black, thick]{top1}{bot1}%
\DrawVerticalBrace[black, thick]{top2}{bot2}%
\DrawVerticalBrace[black, thick]{top3}{bot3}%
\DrawVerticalBrace[black, thick]{top4}{bot4}%
\end{figure*}

\begin{rem}
Equations (\ref{fullest1})-(\ref{fullest6b}) show that each agent requires a 10-th order dynamic estimator for estimating the rigidity eigenvector and eigenvalue.  This filter is comprised of three PI-Consensus filters, an relative position estimation filter, and the power iteration filter.  An important point to emphasize is the order of the overall filter is \emph{independent} of the number of agents in the ensemble, and thus is a scalable solution. 
\end{rem}

\section{The Rigidity Maintenance Controller} \label{sec:final_control}

The primary focus of this work until now was a detailed description of how the rigidity of a multi-robot formation can be maintained in a distributed fashion.  The basic idea was to follow the gradient of an appropriately defined potential function of the rigidity eigenvalue; this control strategy was presented in (\ref{eq:exp_lambda7_bis}).  The fundamental challenge for the implementation of this control strategy was twofold: on the one hand, rigidity of a formation is an inherently \emph{global} property of the network, and on the other hand, the control law depended on relative position measurements in a \emph{common} reference fame.

A truly distributed solution based on this control strategy requires each agent to estimate a common inertial reference frame and also estimate the rigidity eigenvalue and eigenvector of the formation.  The solution to these estimation problems was presented in Sections~\ref{sec:relpos_est} and \ref{sec:rigidity_est}, with the complete set of filter equations summarized in (\ref{fullest1})--(\ref{fullest6b}).  Note that both estimation strategies implicitly require that the underlying formation is infinitesimally rigid {(see also Assumption~\ref{assumption:infringed})}.  The final step for implementation of the rigidity maintenance controller is then to replace all the state-variables given in~(\ref{eq:exp_lambda7_bis}) with the appropriate estimated states computed by the relative position estimators and rigidity eigenvalue estimators.  The local controller for each agent is thus given as,\footnote{The control is shown in the $x$-coordinate; a similar expression can be obtained for the $y$- and $z$- coordinates.}

\begin{equation}
{\scriptsize\begin{split}
&\xi_i^x = - \dfrac{\partial V(\hat{\lambda}_7^i)}{\partial {\lambda}_7}  \sum_{j\in\calN_i} W_{ij} \left(2(\hat{p}^x_{i,c} - \hat{ p}^x_{j,c})(\hat{{\bf v}}^x_i-\hat{{\bf v}}^x_j)^2   +   \right. \\
& \left.  2 (\hat{ p}^y_{i,c}-\hat{ p}^y_{j,c})(\hat{{\bf v}}^x_i-\hat{{\bf v}}^x_j)(\hat{{\bf v}}^y_i-\hat{{\bf v}}^y_j)   + 2(\hat{ p}^z_{i,c}-\hat{ p}^z_{j,c})(\hat{{\bf v}}^x_i-\hat{{\bf v}}^x_j)(\hat{{\bf v}}^z_i-\hat{{\bf v}}^z_j)  \right) + \\
& \dfrac{\partial W_{ij}}{\partial p^x_i} \hat S_{ij},
\end{split}}\label{eq:exp_lambda7_bis2}
\end{equation}
in conjunction with all the estimation filters of (\ref{fullest1})-(\ref{fullest6b}).

{
\begin{rem}
The interconnection of the relative position estimator, rigidity eigenvalue estimator, and gradient controller leads to a highly non-linear dynamics for which a formal proof analysis is not straightfoward. While we are currently working towards a deeper analysis in this sense, the approach taken in this work is to exploit the typical (although informal) time-scale separation argument commonly found in many robotics applications relying on feedback control from an estimated state (as, e.g., when using an extended Kalman filter). Basically, the estimator dynamics is assumed ``fast enough'' such that its transient behavior can be considered as a second-order perturbation with respect to the robot motion (see also~\citep{2010-YanFreGorLynSriSuk}) for an equivalent assumption in the context of decentralized connectivity maintenance control.
\end{rem}
}

\section{Experimental Results}\label{sec:experiment}

In this section we report some experimental results aimed at illustrating the machinery proposed so far for distributed rigidity maintenance. The experiments involved a total of $N=6$ quadorotor UAVs ($5$ real and $1$ simulated) flying the environment shown in Fig.~\ref{fig:exp_setup}. {A video illustrating the various phases of the experiment {(Multimedia Extension~1)} is attached to the paper.} %and also available at \url{http://antoniofranchi.com/videos/rigiditymaint.html}.}

All the quadrotor UAVs were implementing the rigidity maintenance action~(\ref{eq:exp_lambda7_bis2}) in addition to the estimation filters presented in (\ref{fullest1})-(\ref{fullest6b}). Additionally, for two of the quadrotor UAVs (namely, quadrotors $1$ and $2$) an exogenous bounded velocity term $\xi^*_i\in\calbR^3$ was also added to~(\ref{eq:exp_lambda7_bis2}); this allows for two human operators to independently control the motion of quadrotors $1$ and $2$ during the experiment, so as to steer the whole formation and trigger the various behaviors embedded in the weights $\calW_{uv}$ (formation control, obstacle avoidance, sensing limitations).\footnote{We note that, being $\xi^*_i$ bounded, its effect does not threaten rigidity maintenance since the control action $\xi_i$ in~(\ref{eq:gradient_final}) always results dominant as $V_\lambda(\lambda_7)\to\infty$ if $\lambda_7(t)\to \lambda^{\mathrm{min}}_7$.}

Our experimental quadrotor platform is a customized version of the MK-Quadro\footnote{\url{mikrokopter.de}} implementing the TeleKyb ROS framework\footnote{ros.org/wiki/telekyb} for flight control, experimental workflow management and human inputing. Attitude is stabilized with a fast inner loop that takes advantage of high-rate/onboard accelerometer and gyroscope measurements while the velocity stabilization is achieved by a slower control loop that measures the current velocity thanks to an external motion capture system.
 The motion capture system is also used to obtain relative distance measurements among the robots and the two bearing measurements needed by the special robot $i_c$.
The reader is referred to~\citep{2012f-FraSecRylBueRob} for a detailed description of the quadrotor-based experimental setup.

\begin{figure}
\centering
\includegraphics[width=\columnwidth]{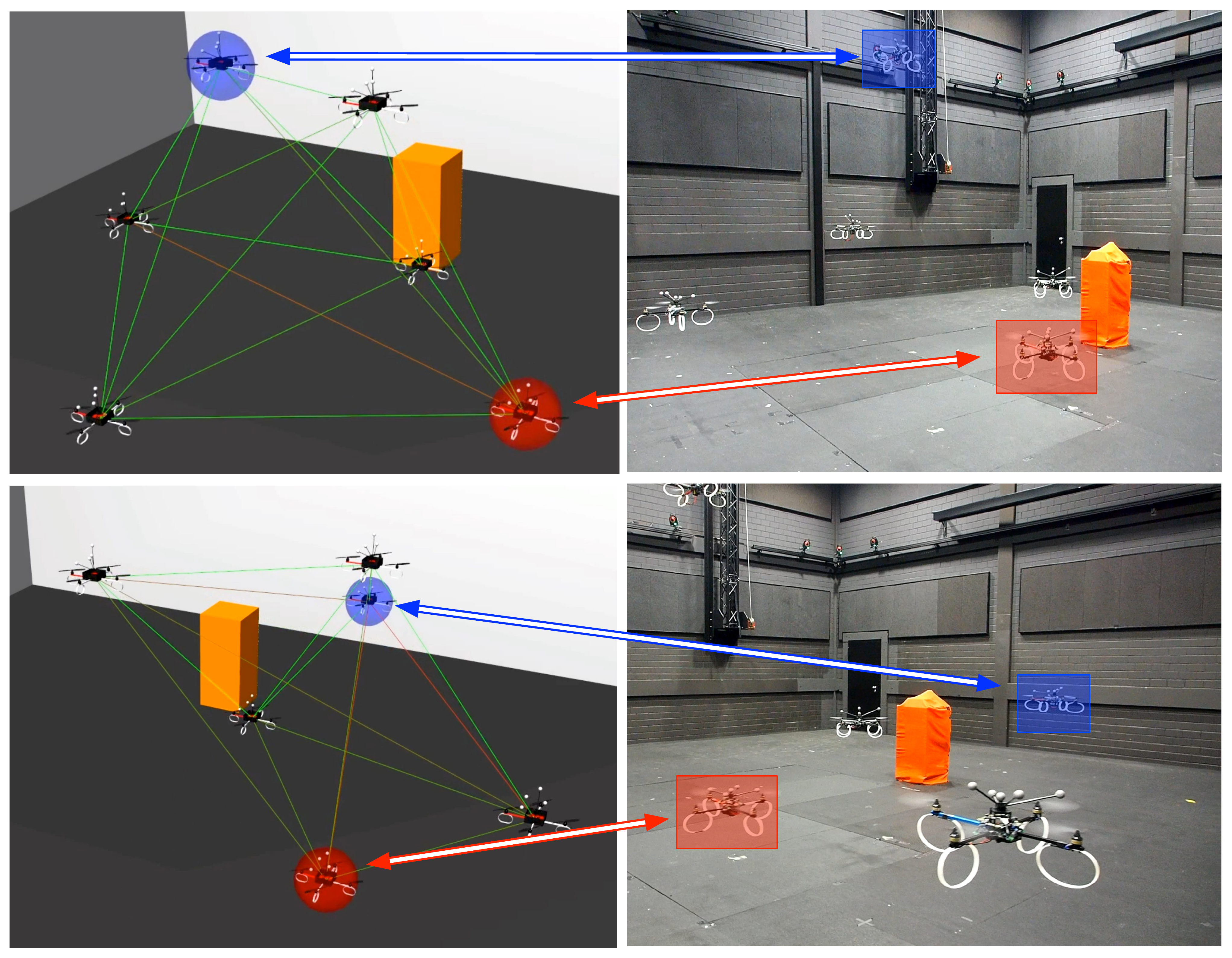}
\caption{Two snapshots of the reported experiment. Left: simulated 3D views showing, in particular, the inter-agent links (red -- almost disconnected link, green -- optimally connected link). Right: corresponding pictures of the experimental setup. The two highlighted quadrotor UAVs are partially controlled by two human operators}
\label{fig:exp_setup}
\end{figure}

We start illustrating the behavior of the relative position estimator described in Sect.~\ref{sec:relpos_est} and upon which all the subsequent steps are based (estimation of $\lambda_7$ and ${\bf v}$ and evaluation of the control action~(\ref{eq:gradient_final})). As explained in Sect.~\ref{sec:relpos_est}, owing to the formation infinitesimal rigidity,
the scheme~(\ref{eq:err_gradient}) allows each agent $i$ to build an estimation $\hat p_{i,c}$ of its relative position $p_i- p_c$ with respect to the agent $i_c$,  with $i_c=1$ in this experiment. %i.e., such that $\hat p_i^c(t)\to (p_i(t)- p_c(t))$. 
Figures~\ref{fig:e_pos_exp2}(a--e) report the behavior of the norm of the estimation errors $\|p_i- p_c-\hat p_{i,c}\|$ for $i=2\ldots 6$ together with their mean values (dashed horizontal black line). It is then possible to verify how the relative position estimation errors keep low values over time, %($Avg(e_{\mathrm{pos}})\simeq 0.056$ [m]) 
thus effectively allowing every agent to recover its correct relative position with respect to $p_c$ from the measured relative distances.
\begin{figure}[!t]
\begin{center}
\subfigure[]{\includegraphics[width=0.45\columnwidth]{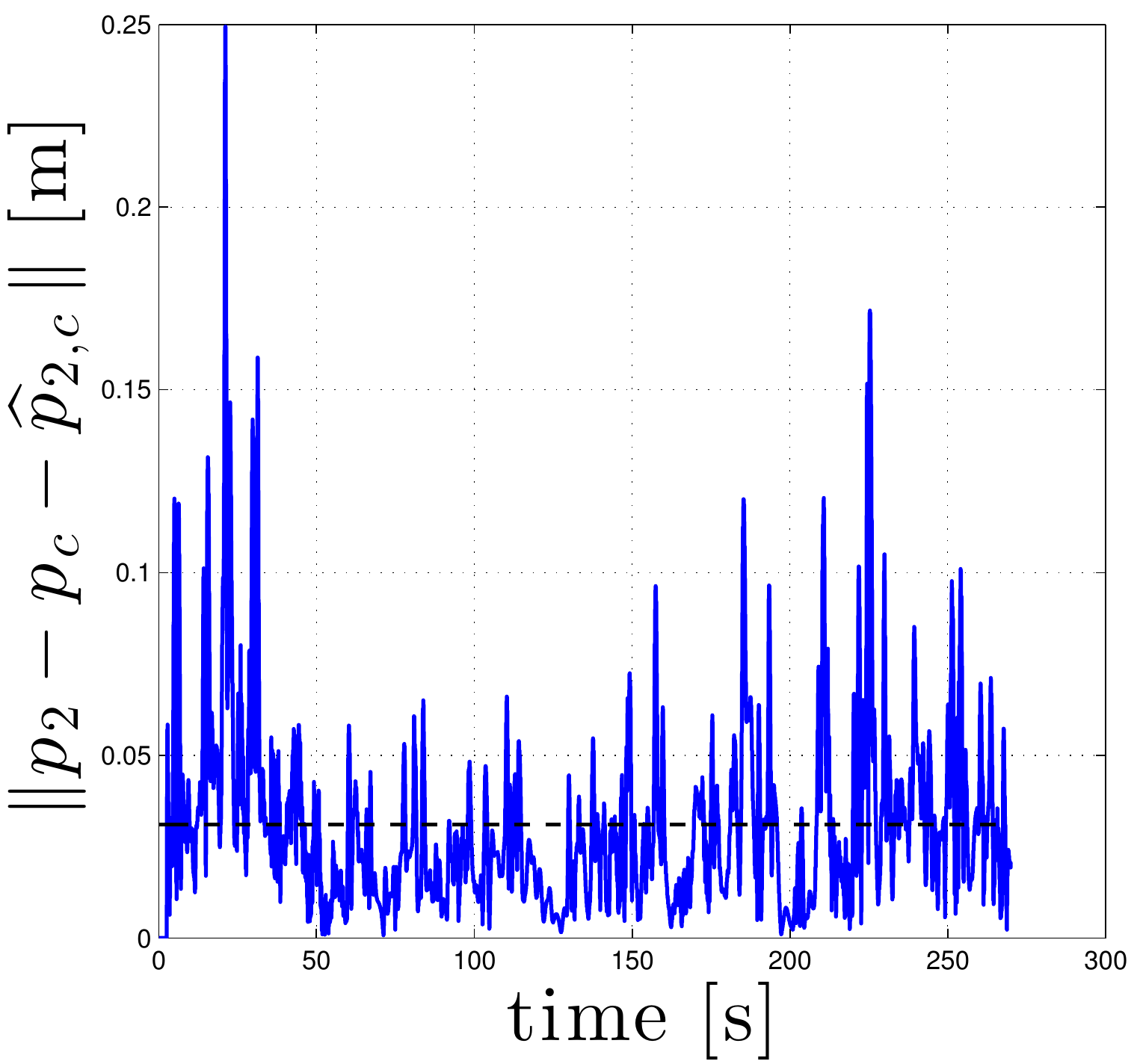}}\quad\subfigure[]{\includegraphics[width=0.45\columnwidth]{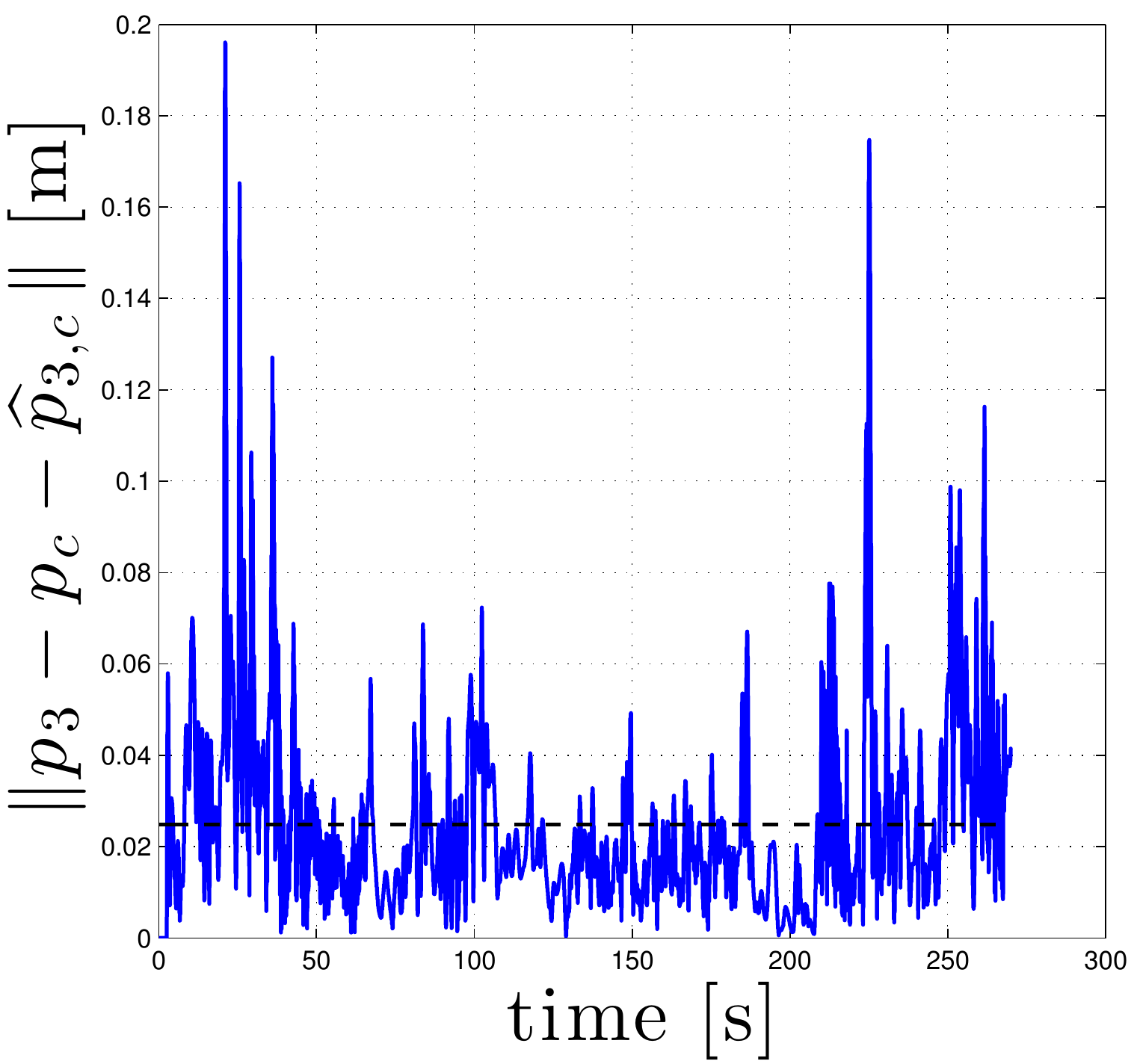}}\\
\subfigure[]{\includegraphics[width=0.45\columnwidth]{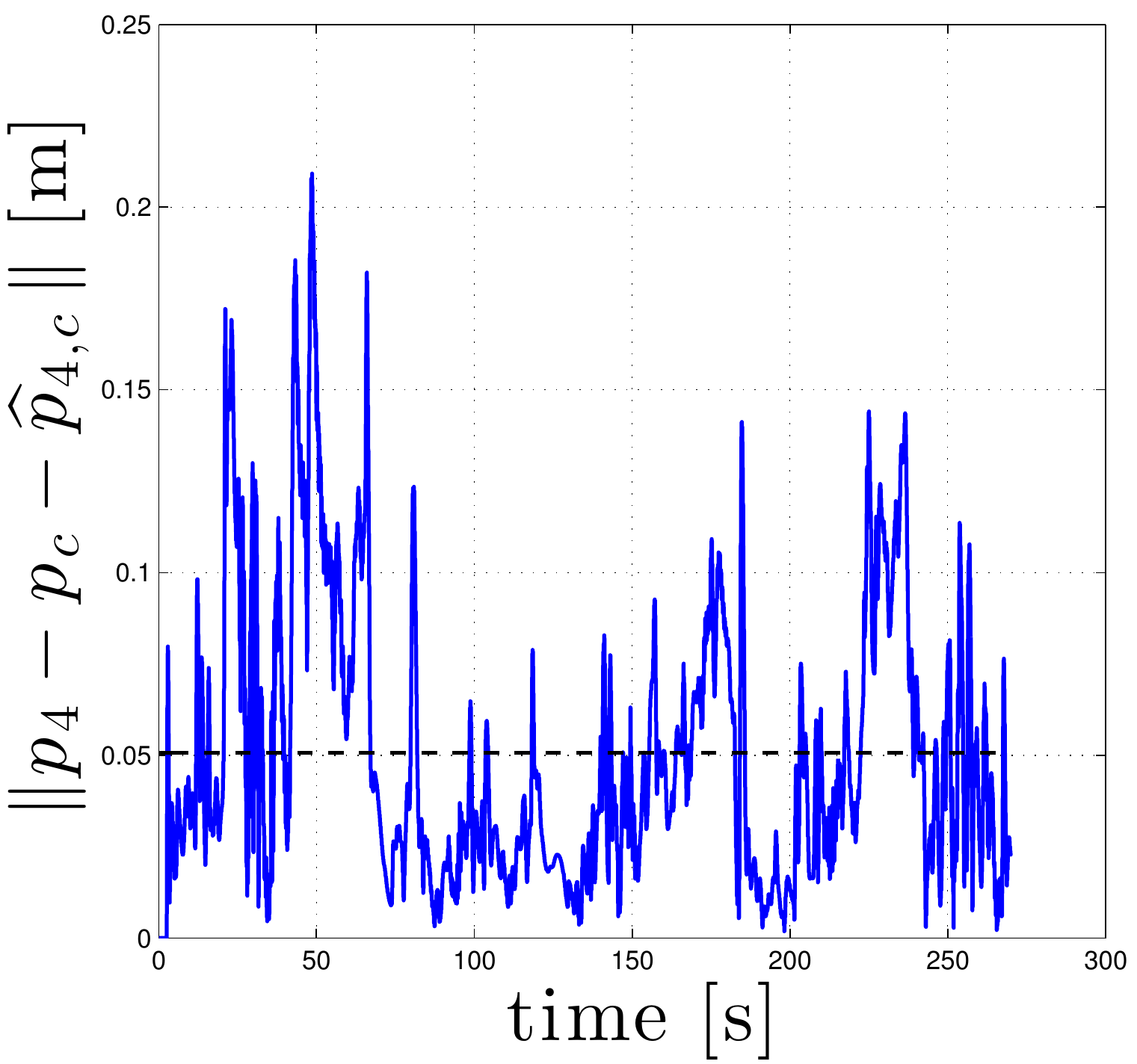}}\quad\subfigure[]{\includegraphics[width=0.45\columnwidth]{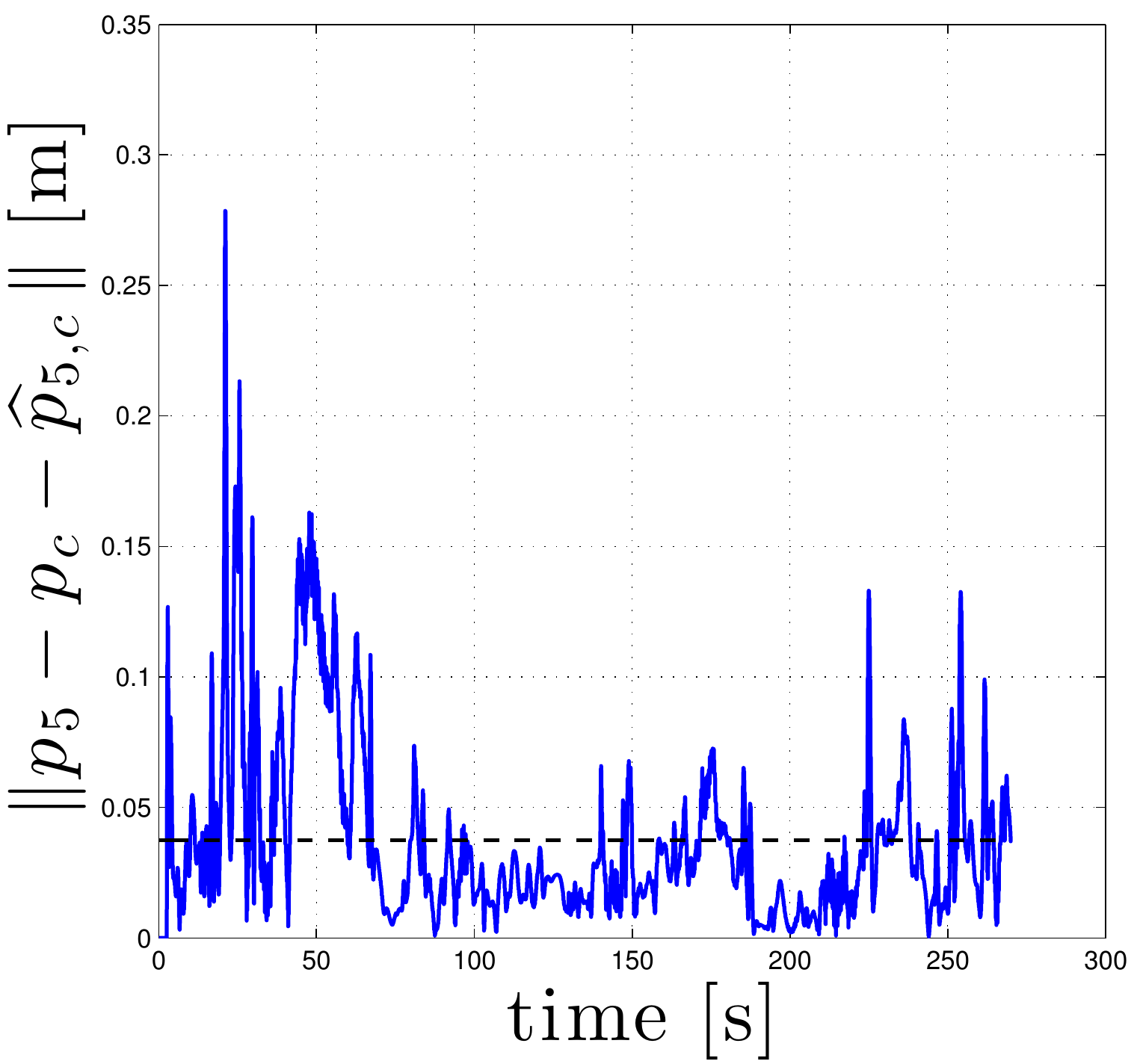}}\\
\subfigure[]{\includegraphics[width=0.45\columnwidth]{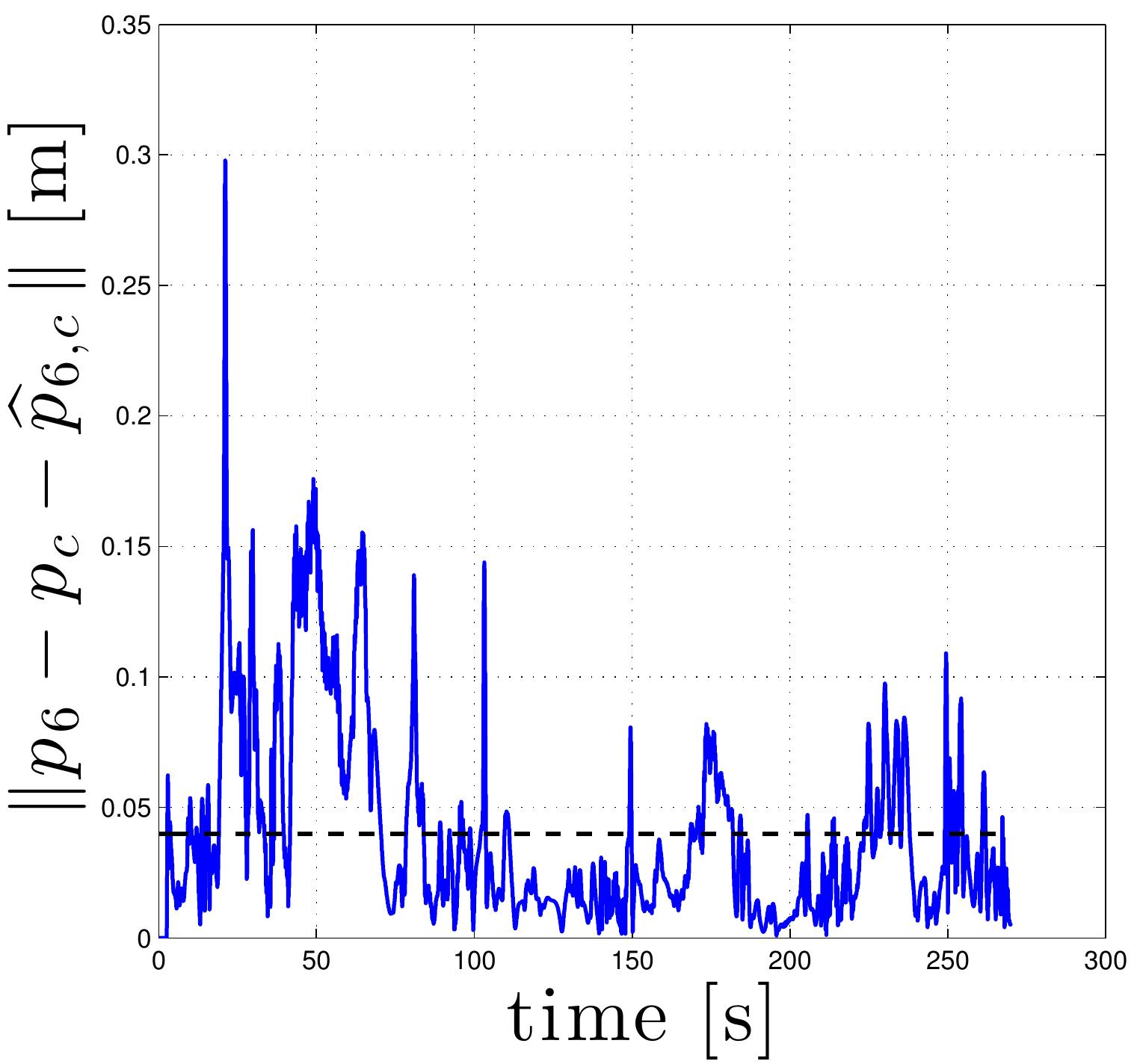}}
  \caption{Behavior of $\|p_i- p_c-\hat p_{i,c}\|$, $i=2\ldots 6$, the norm of the estimation error for the relative positions of agents $2\ldots 6$ w.r.t. agent $i_c=1$. The horizontal dashed black line represents the mean value of each error norm over time. Note how the estimation errors keep a low value during the group motion and thus indicate the ability of each robot to recover its relative position with respect to the robot $i_c=1$ by only exploiting measured distances with respect to its neighbors and the infinitesimal rigidity of the formation}\label{fig:e_pos_exp2}
\end{center}
\end{figure}

As for the rigidity eigenvalue estimation of Sect.~\ref{sec:rigidity_est}, Fig.~\ref{fig:lambda_est_exp2}(a) reports the behavior of $\lambda_7(t)$ (solid blue line), of the $6$ estimations $\hat \lambda^i_7(t)$ (solid colored lines almost superimposed to $\lambda_7(t)$), and of the minimum threshold $\lambda^{\mathrm{min}}_7=7.5$ (horizontal dashed line). From the plot one can verify: $(i)$ the accuracy in recovering the value of $\lambda_7(t)$ (note how the $6$ estimations are almost superimposed on the real value) and $(ii)$ that $\lambda_7(t)>\lambda^{\mathrm{min}}_7$ {at all times apart from few isolated spikes}, implying that \emph{formation rigidity} was maintained during the task execution. As an additional indication of the eigenvalue estimation performance, Fig.~\ref{fig:lambda_est_exp2}(b) shows the total estimation error for the rigidity eigenvalue
\begin{equation}\label{eq:e_lambda}
e_\lambda(t)=\dfrac{\sum^N_{i=1}|\lambda_7(t) - \hat\lambda^i_7(t)|}{N}
\end{equation}
which again confirms the accuracy of the estimation strategy.
\begin{figure}[!t]
\begin{center}
\subfigure[]{\includegraphics[width=0.51\columnwidth]{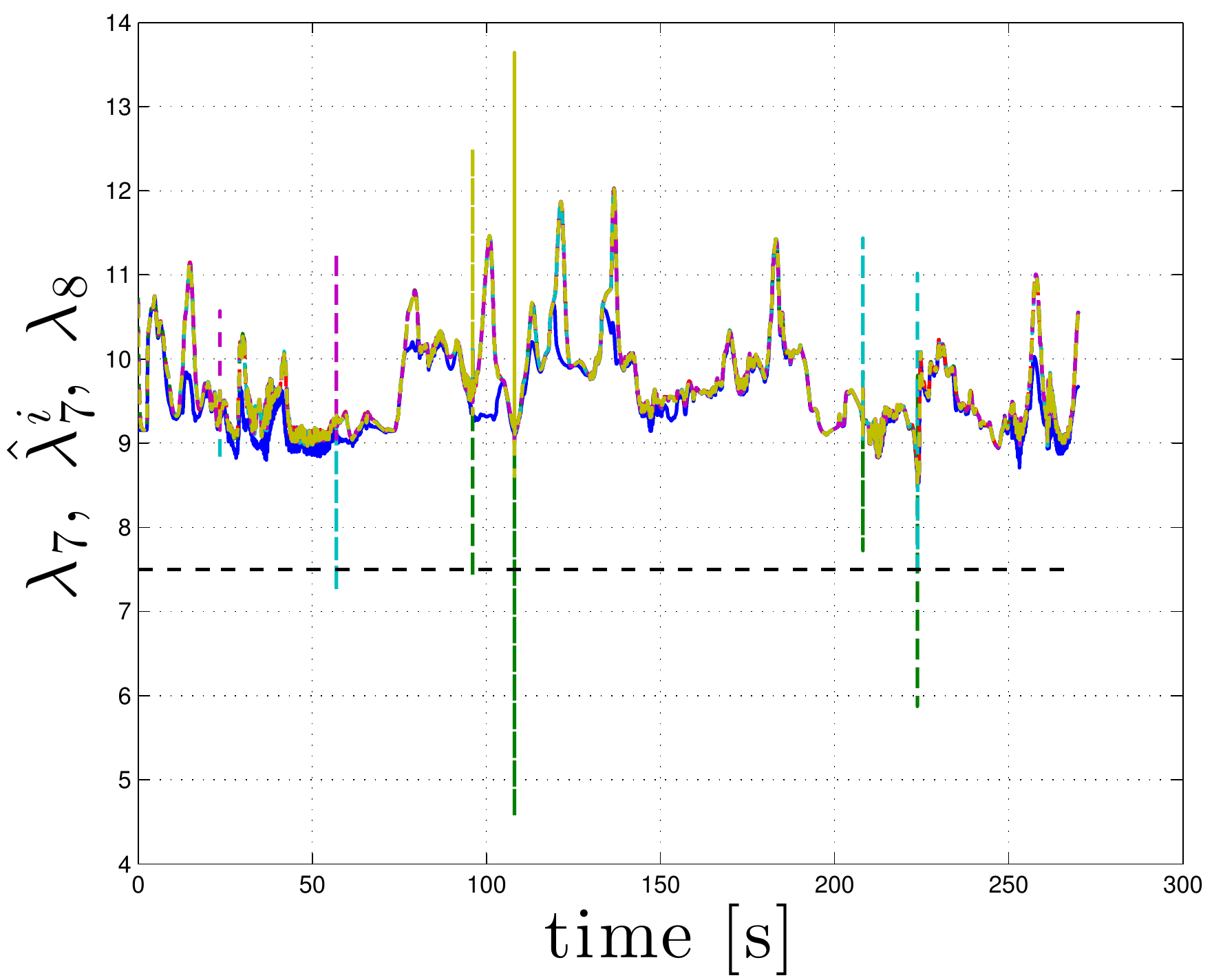}}
\subfigure[]{\includegraphics[width=0.45\columnwidth]{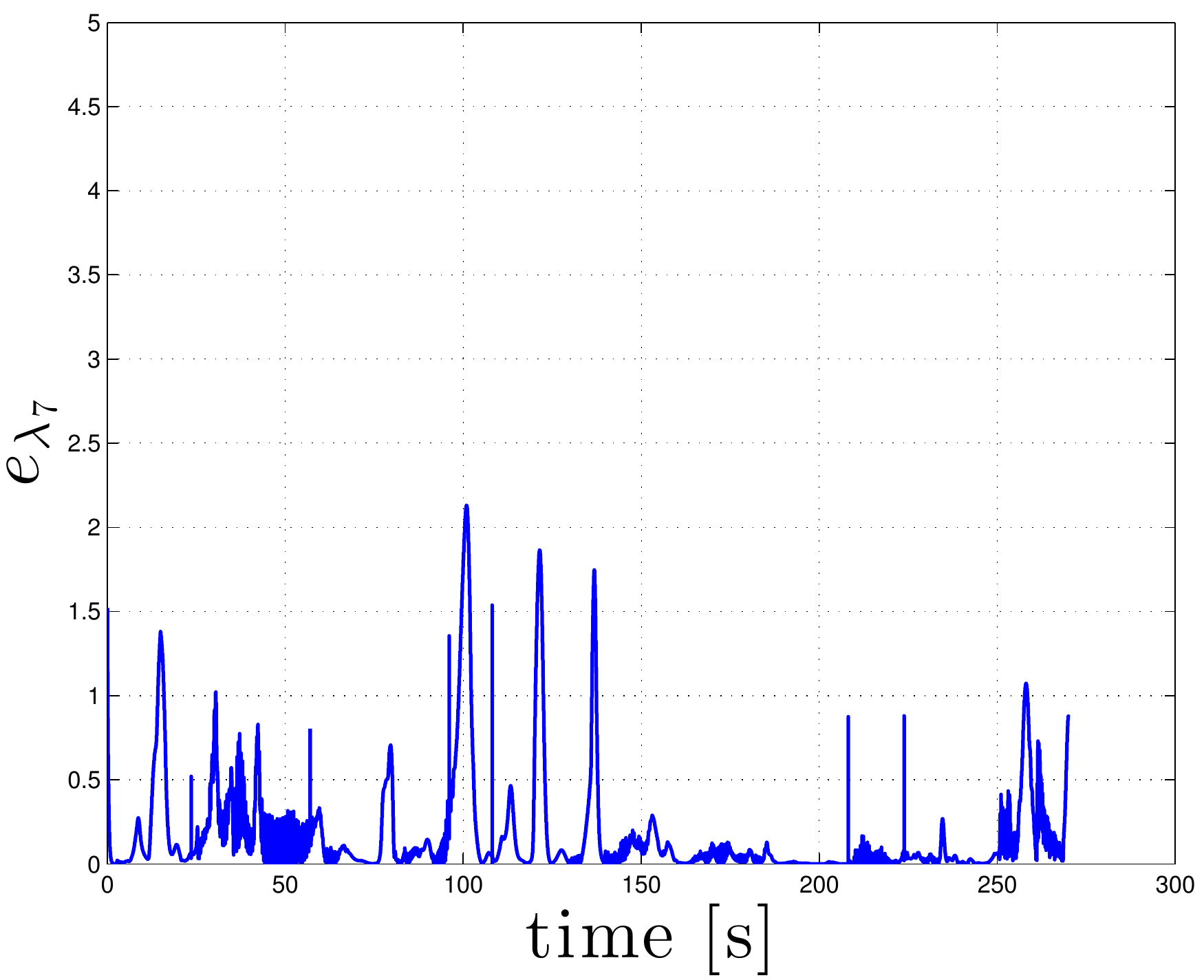}}
  \caption{Left: behavior of $\lambda_7(t)$ (blue line) and the $6$ estimations $\hat\lambda^i_7(t)$ (dashed colored lines) which result almost coincident. Right: behavior of the overall rigidity eigenvalue estimation error $e_\lambda(t)$ as defined in~(\ref{eq:e_lambda})}\label{fig:lambda_est_exp2}
\end{center}
\end{figure}

\begin{figure}[!t]
\begin{center}
\subfigure[]{\includegraphics[width=0.32\columnwidth]{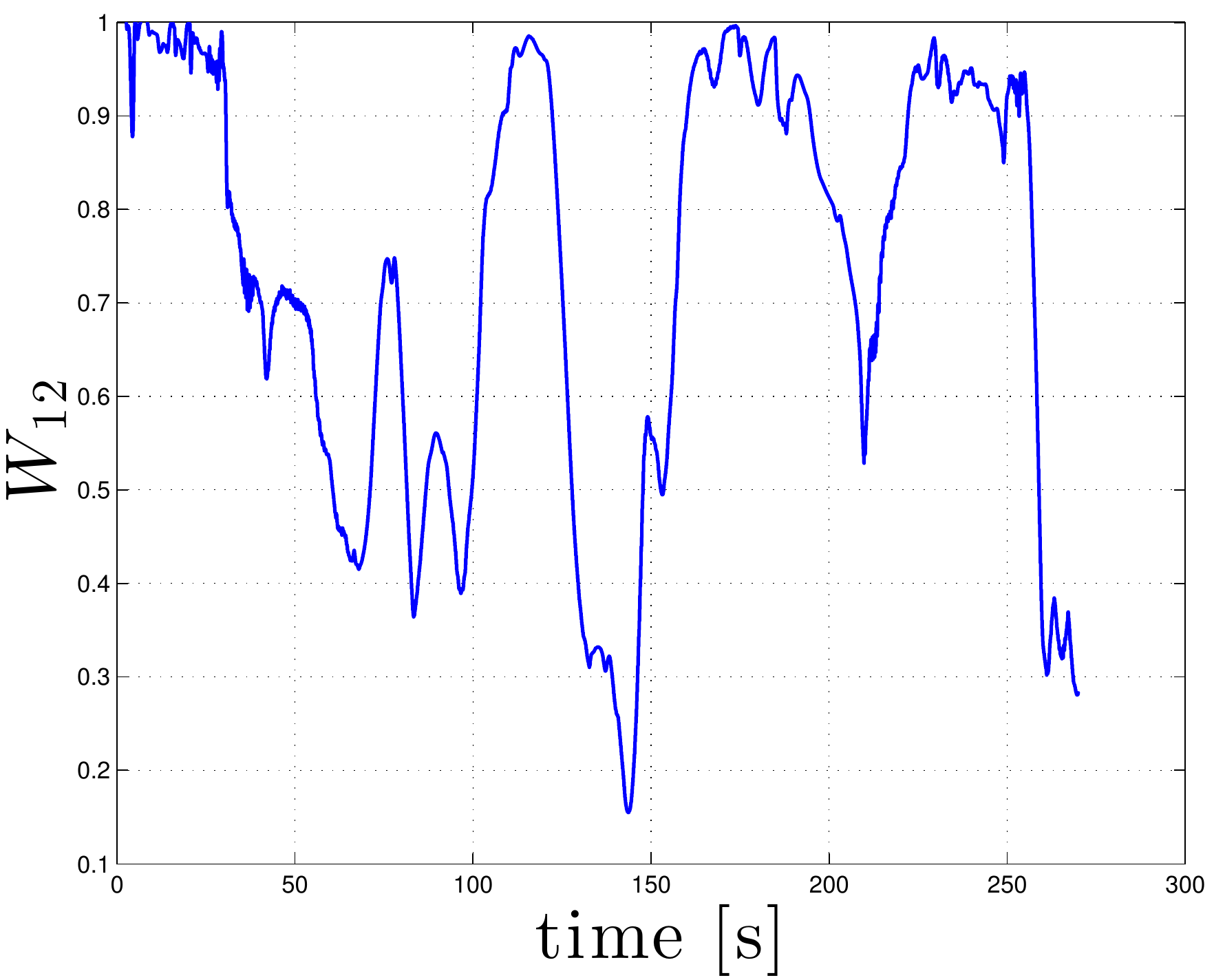}}
\subfigure[]{\includegraphics[width=0.32\columnwidth]{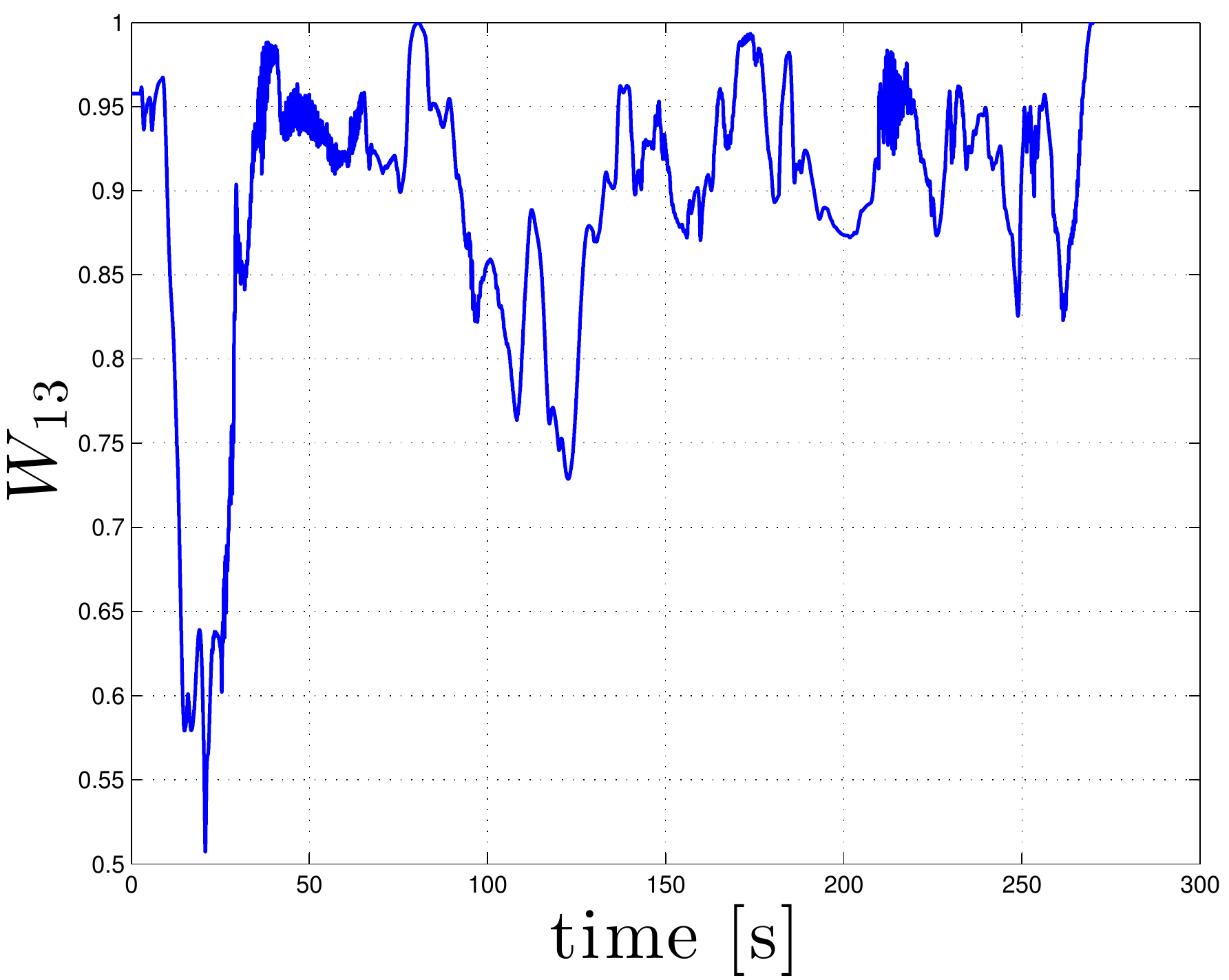}}
\subfigure[]{\includegraphics[width=0.32\columnwidth]{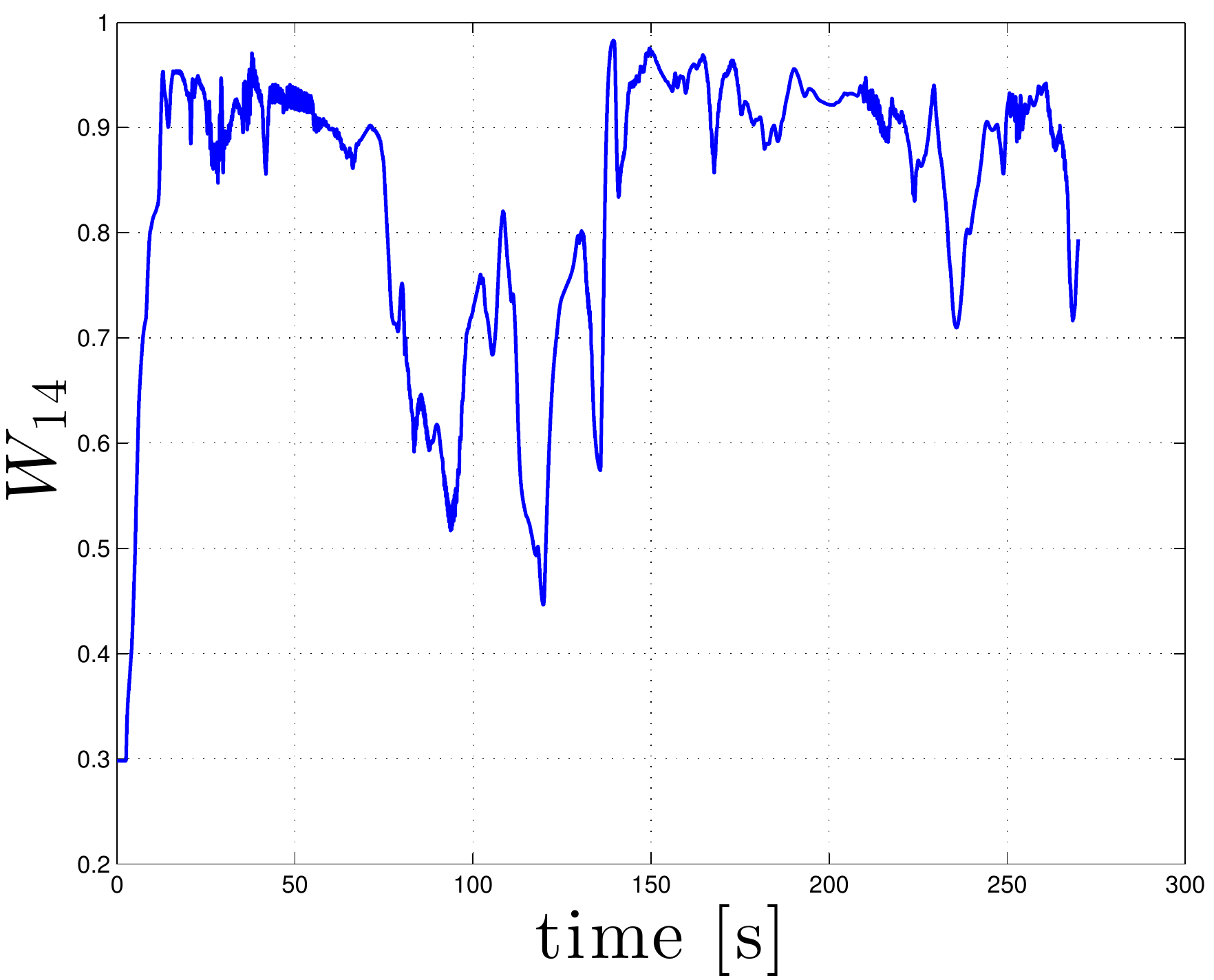}}\\
\subfigure[]{\includegraphics[width=0.32\columnwidth]{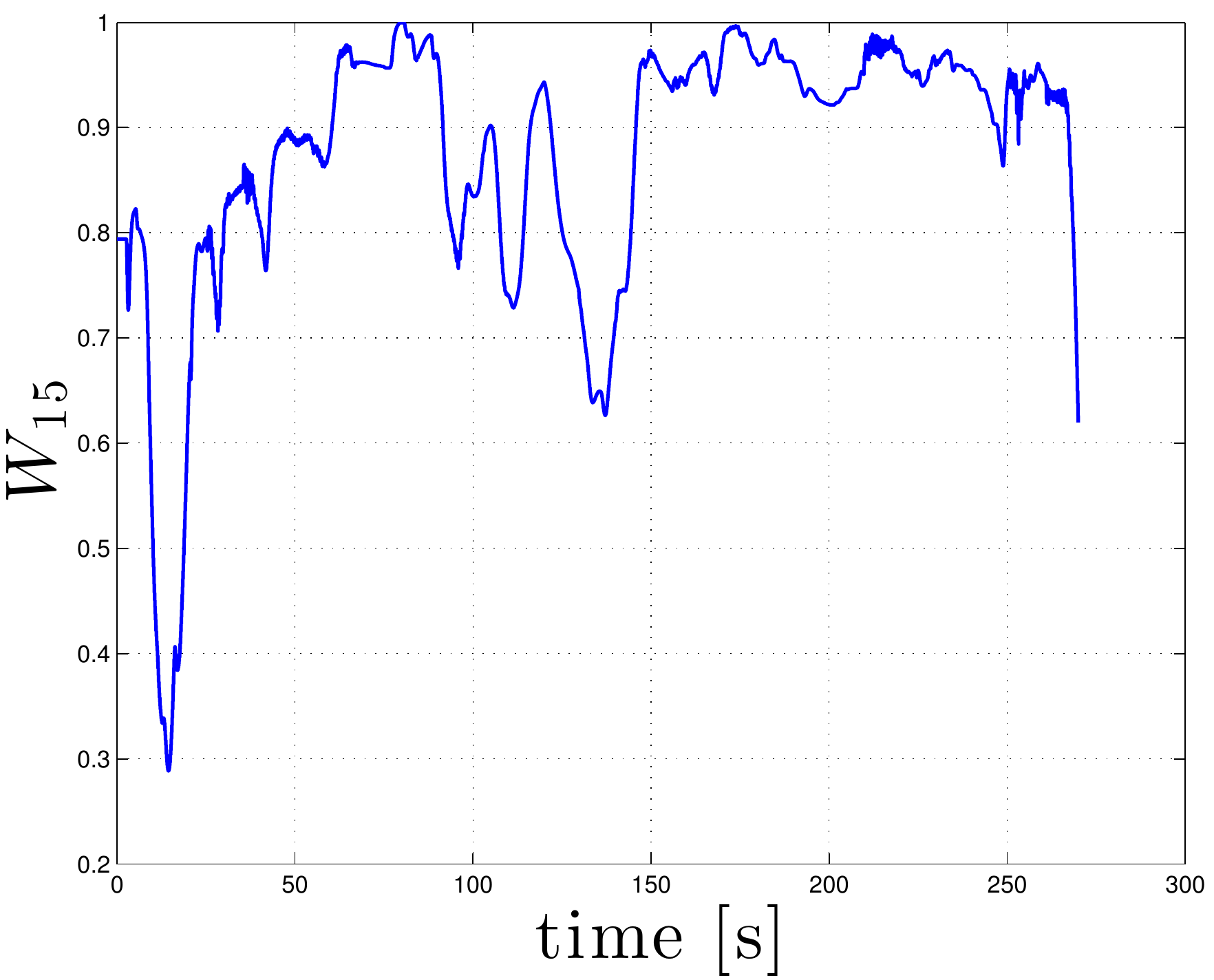}}
\subfigure[]{\includegraphics[width=0.32\columnwidth]{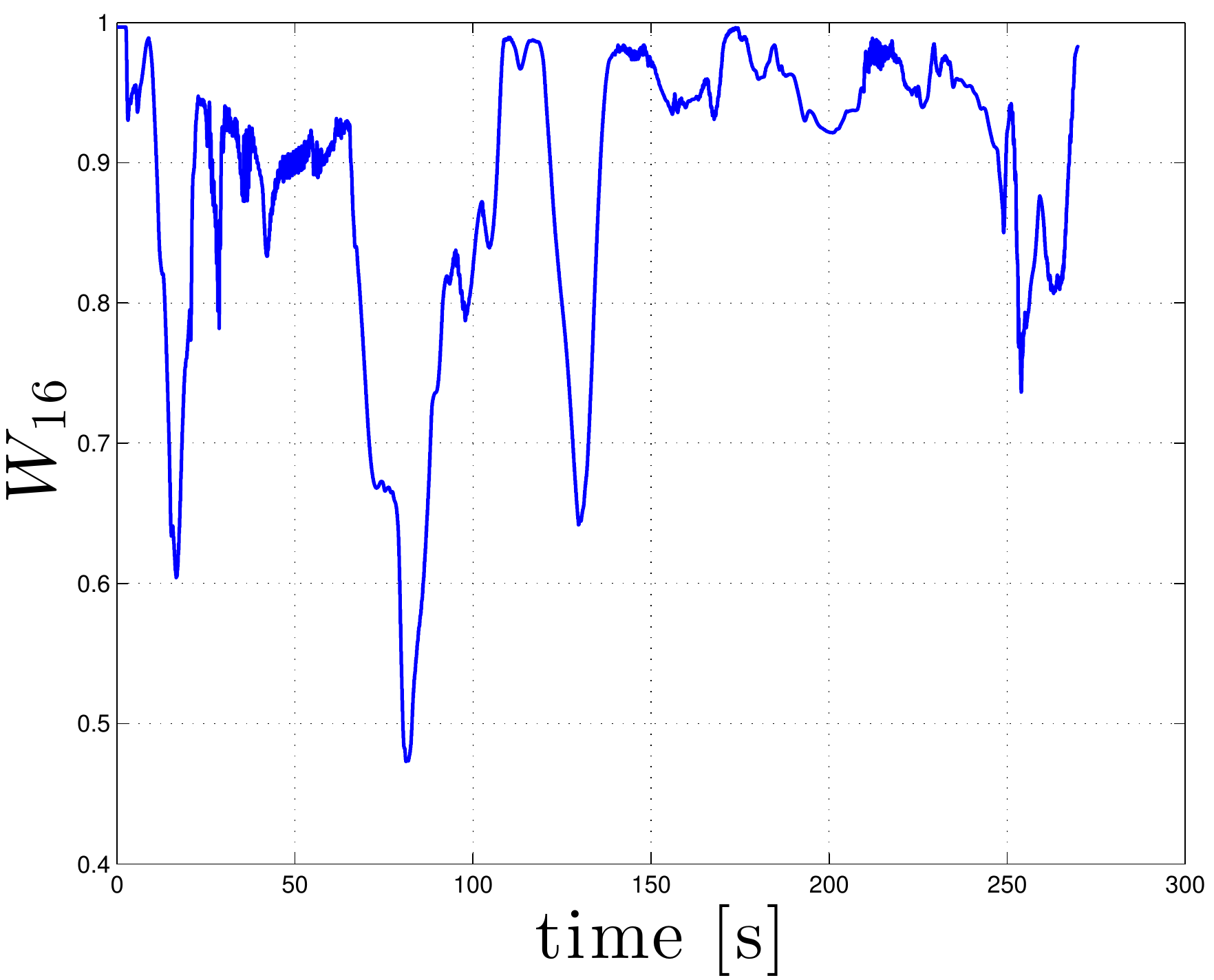}}
\subfigure[]{\includegraphics[width=0.32\columnwidth]{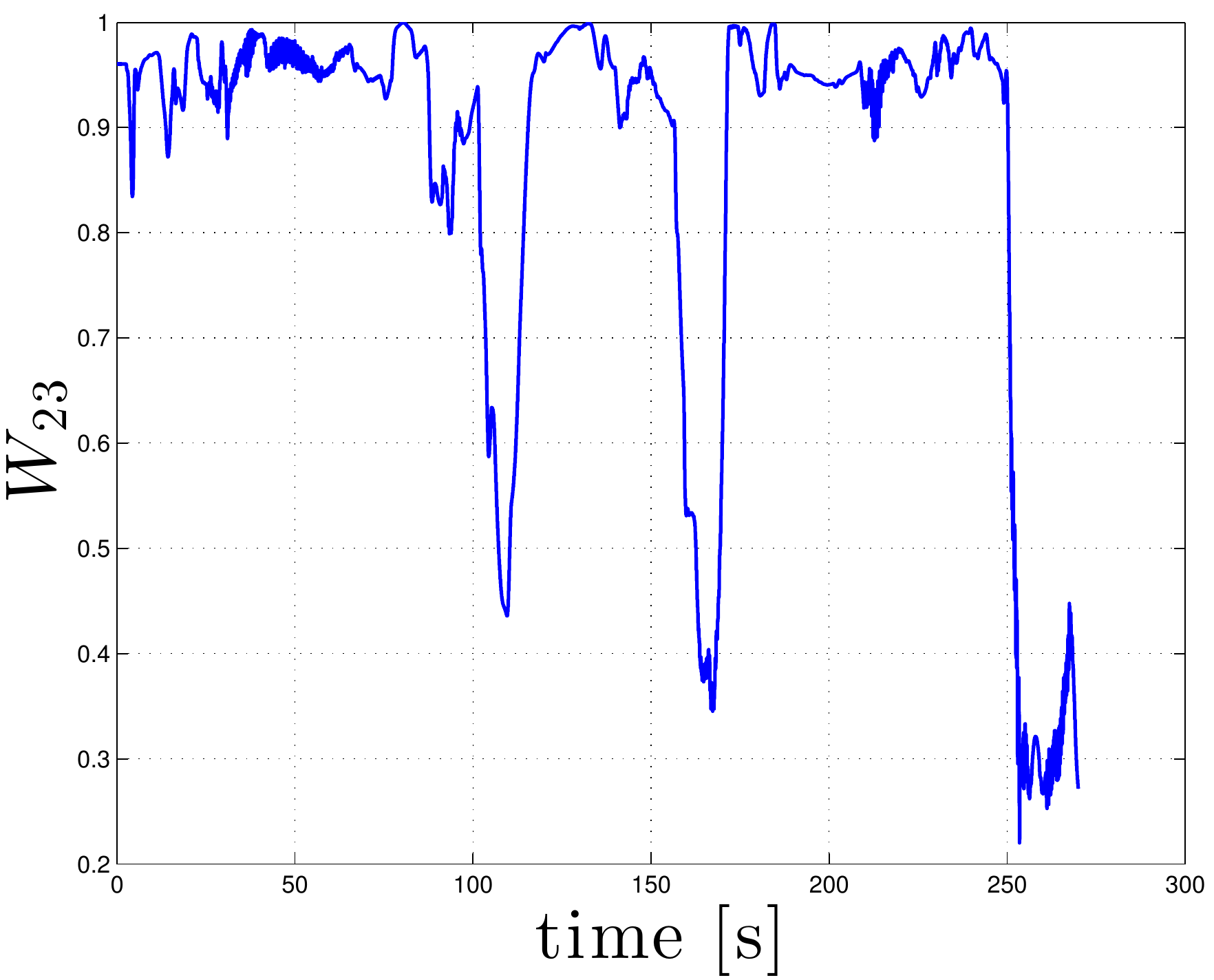}}\\
\subfigure[]{\includegraphics[width=0.32\columnwidth]{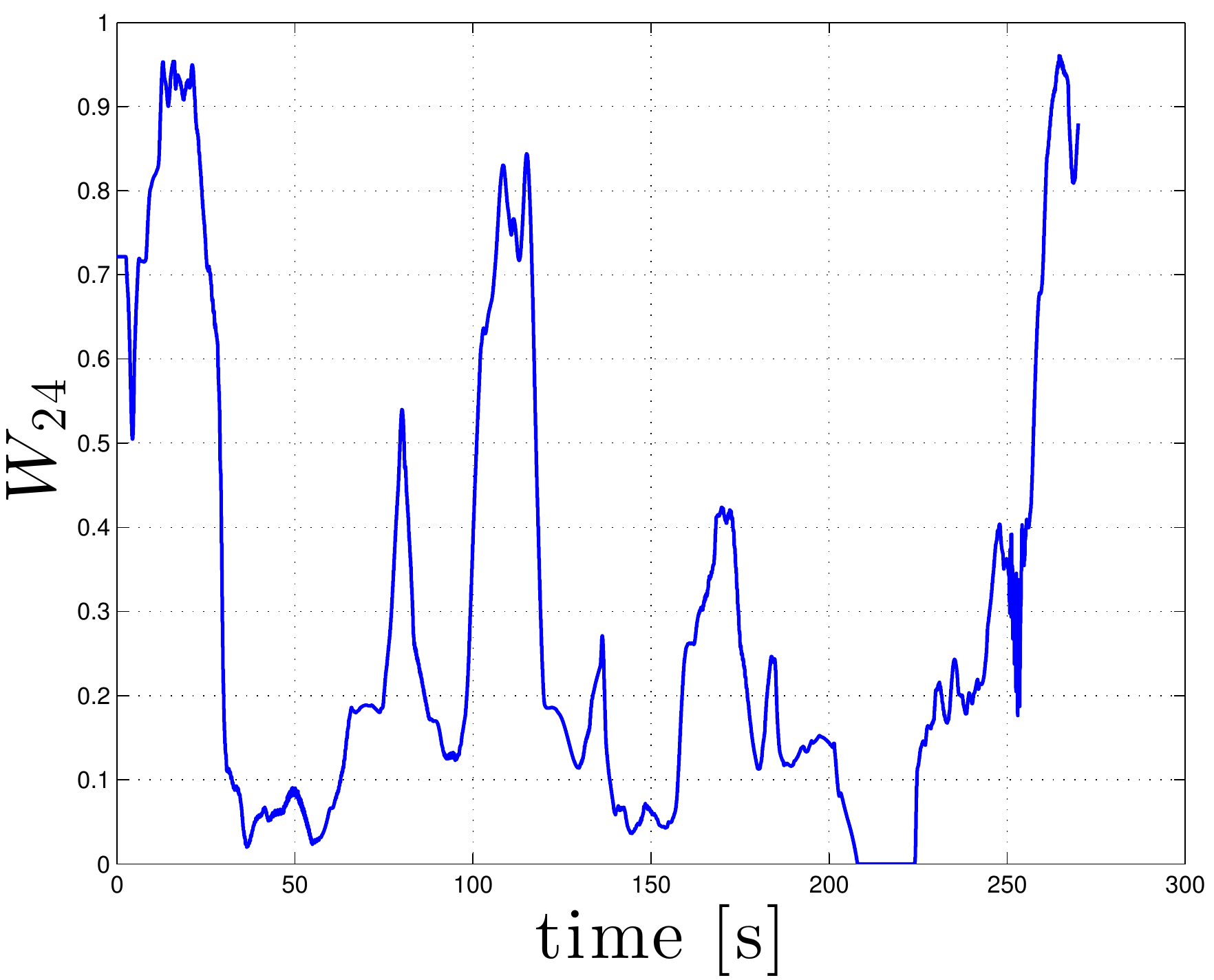}}
\subfigure[]{\includegraphics[width=0.32\columnwidth]{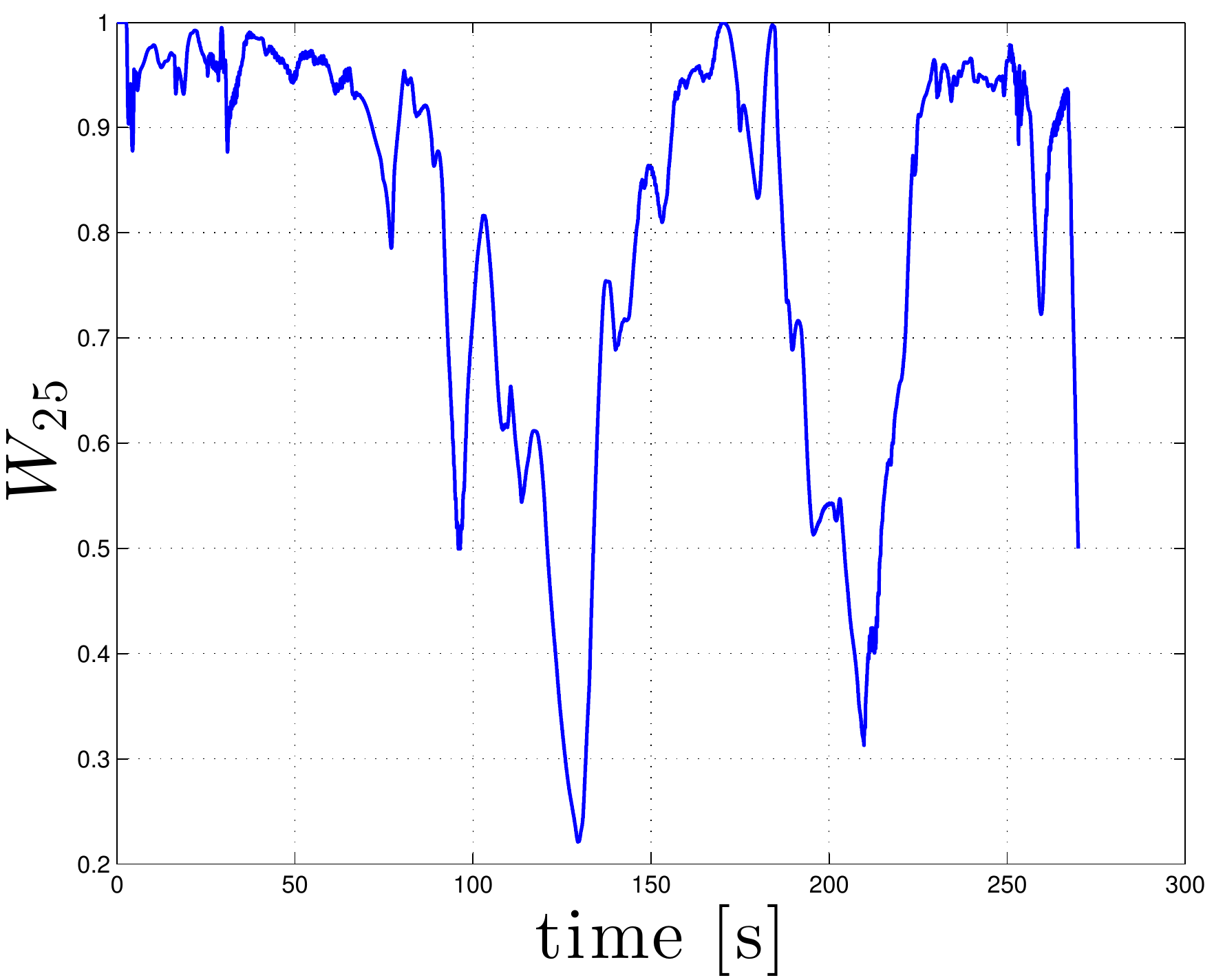}}
\subfigure[]{\includegraphics[width=0.32\columnwidth]{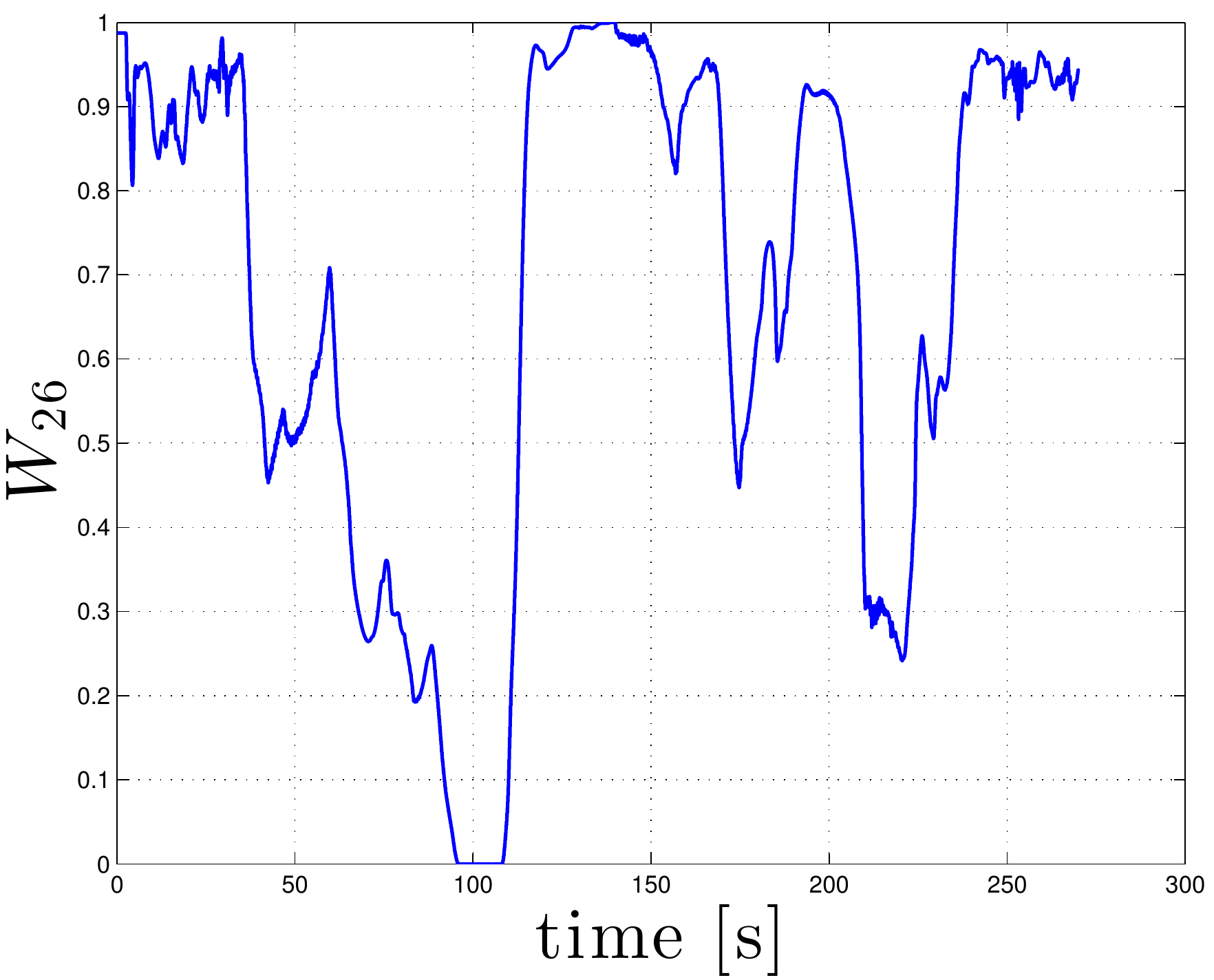}}\\
\subfigure[]{\includegraphics[width=0.32\columnwidth]{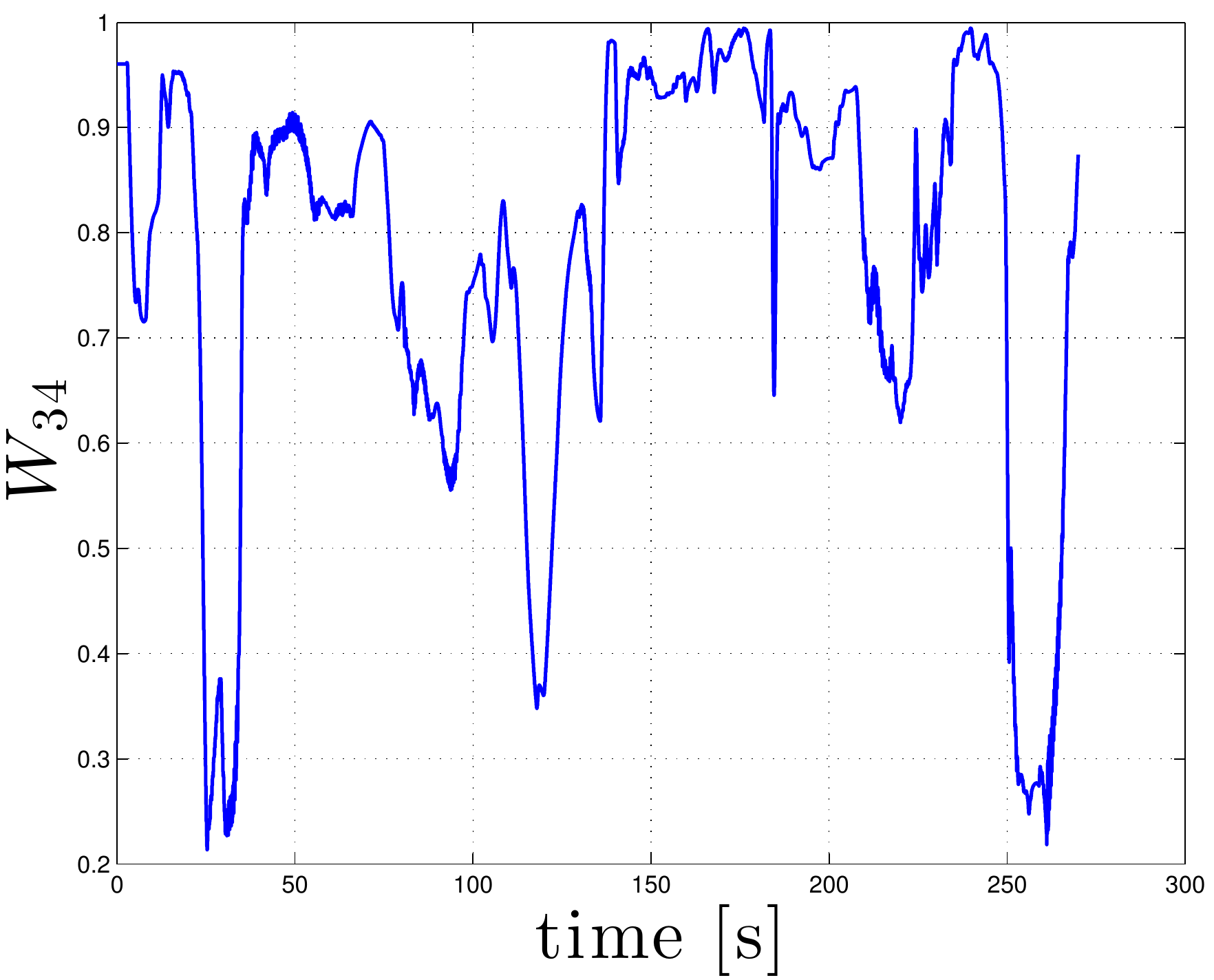}}
\subfigure[]{\includegraphics[width=0.32\columnwidth]{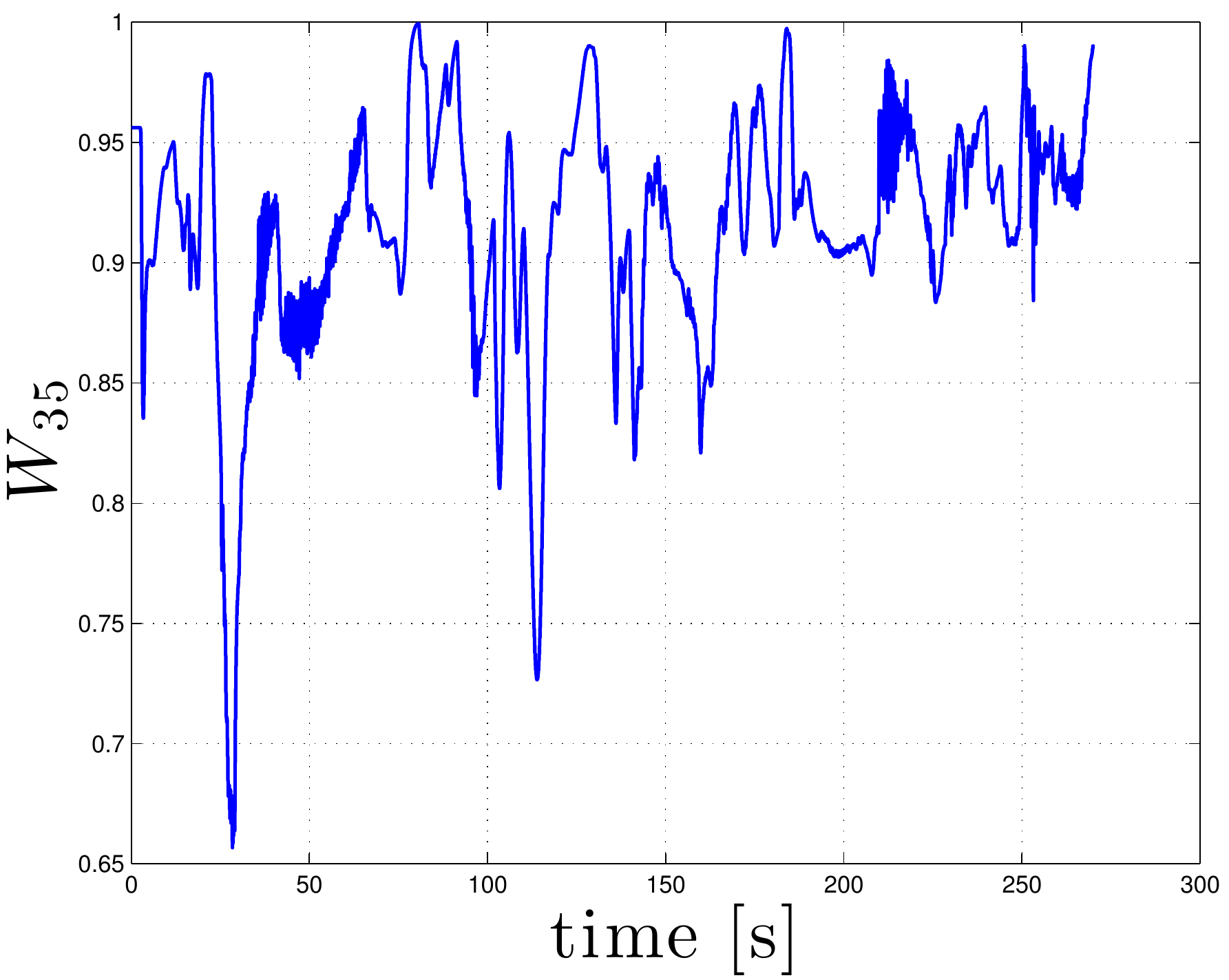}}
\subfigure[]{\includegraphics[width=0.32\columnwidth]{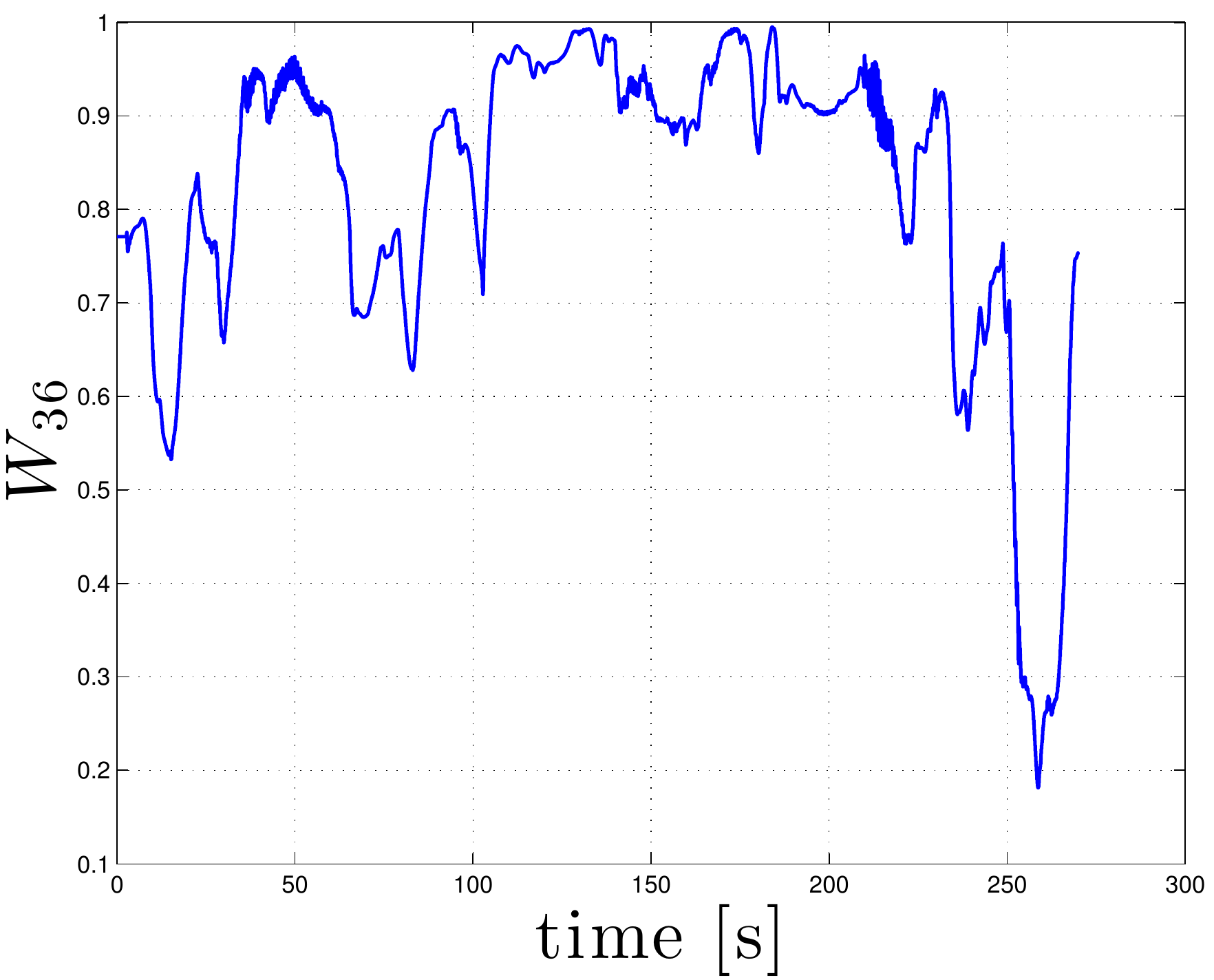}}\\
\subfigure[]{\includegraphics[width=0.32\columnwidth]{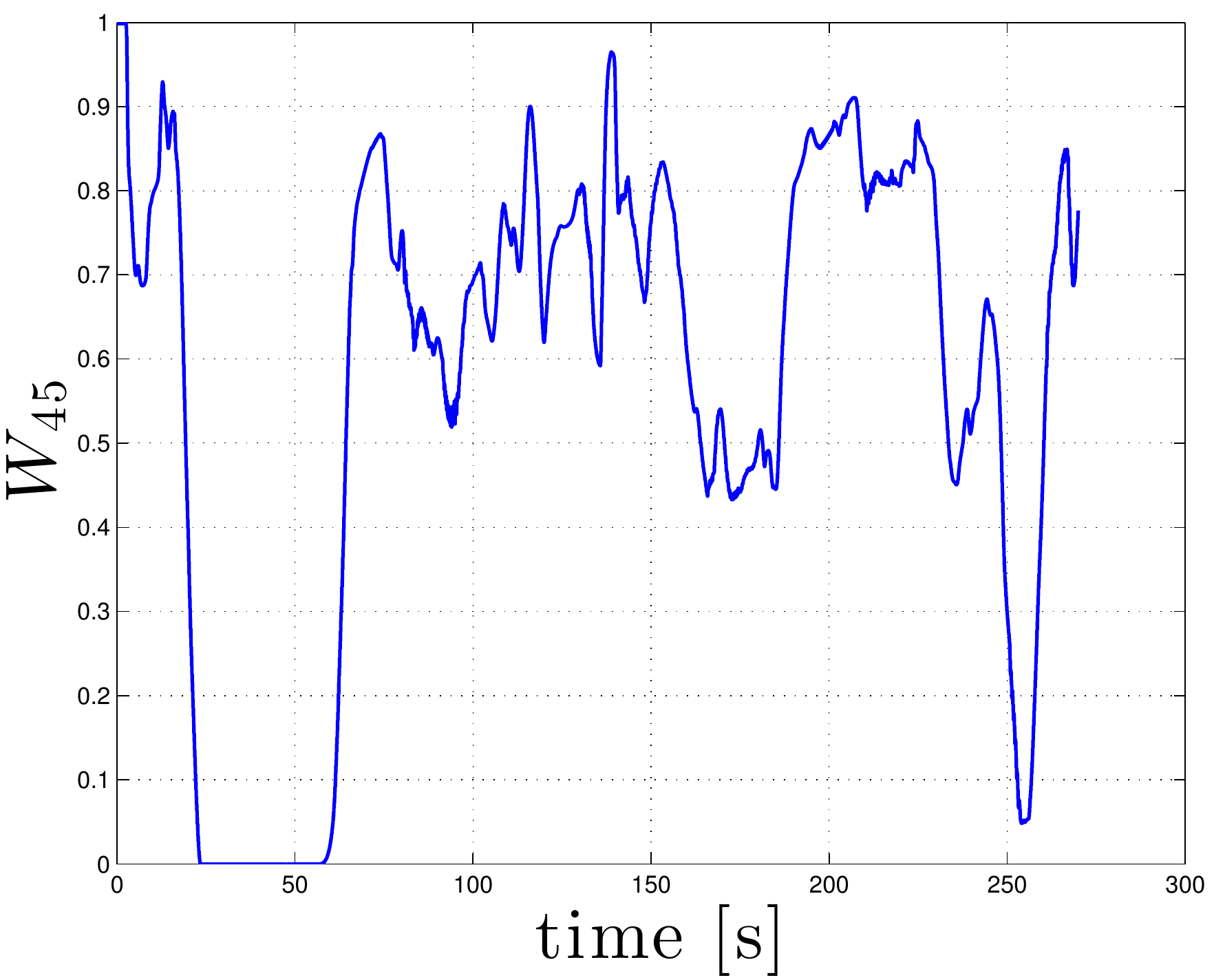}}
\subfigure[]{\includegraphics[width=0.32\columnwidth]{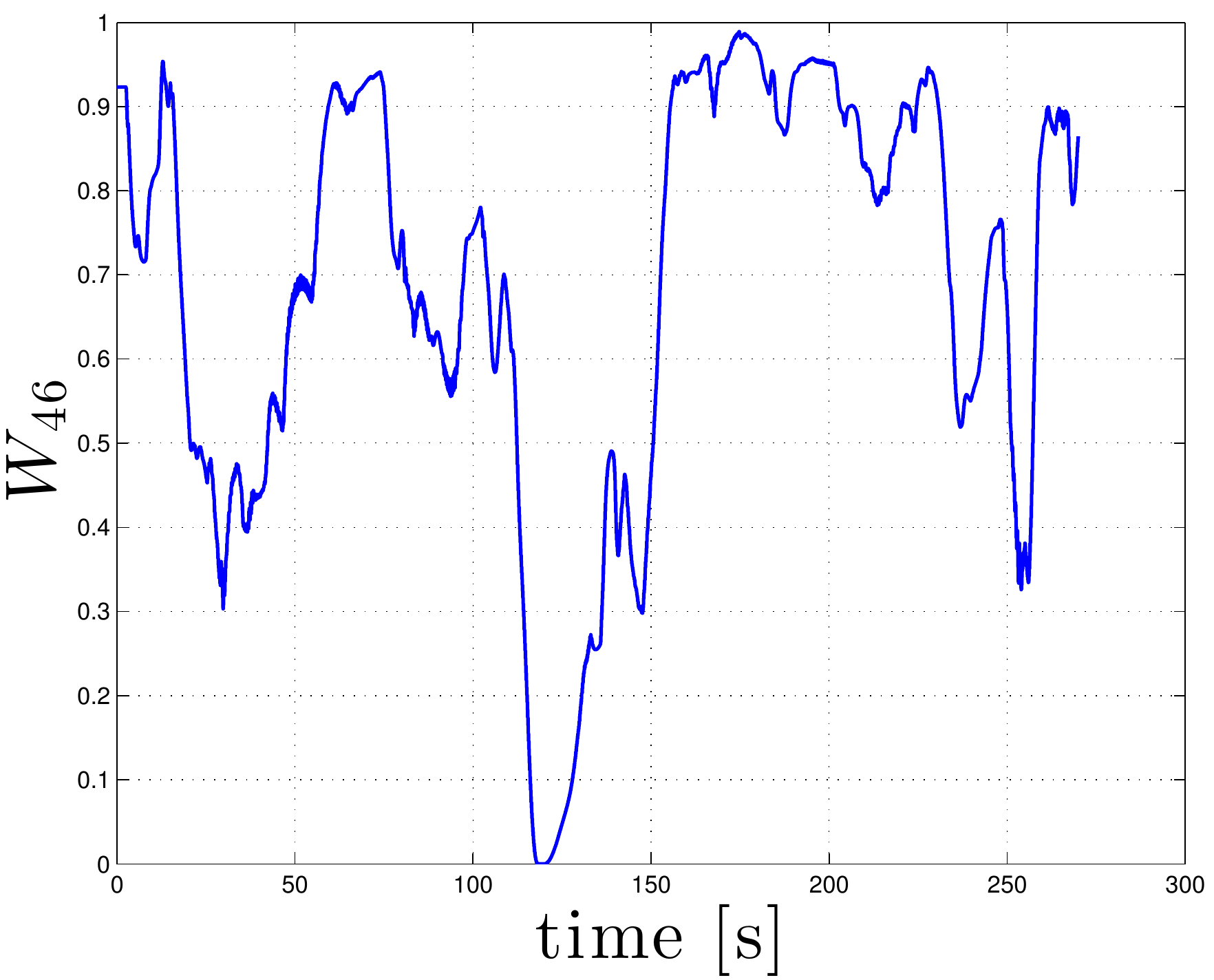}}
\subfigure[]{\includegraphics[width=0.32\columnwidth]{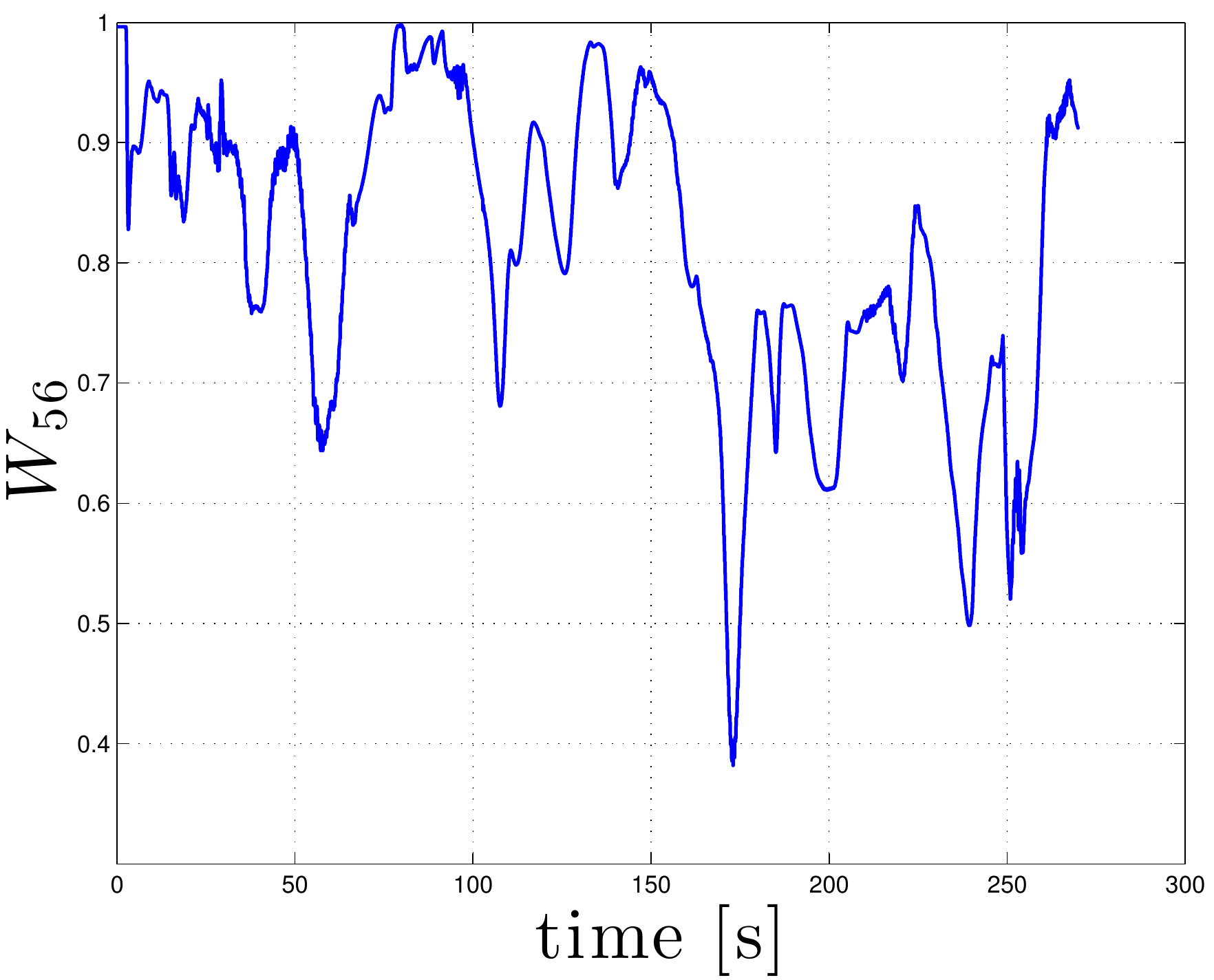}}
 \caption{Behavior of the $15$ weights $W_{uv}(t)$ for all the possible edges of graph $\calG$. Note how the values of weights $W_{uv}(t)$ vary over time because of the sensing/communication constraints and requirements embedded within their definition (see sec.~\ref{subsec:weights}). Some weights (e.g., $W_{24}$ and $W_{45}$) also temporarily vanish indicating loss of the corresponding edge (and, thus, the time-varying nature of graph $\calG$)}\label{fig:W_exp2}
\end{center}
\end{figure}
Figures~\ref{fig:W_exp2}(a--o) report the behavior of the $15$ weights $W_{uv}$ defined in~(\ref{eq:weights}) and associated to the all the possible edges of graph $\calG$ in order to show their \emph{time-varying} nature because of the constraints and requirements listed in Sect.~\ref{subsec:weights}. Note how the value of some weight drops to zero over time (e.g., $W_{45}(t)$ at about $t=25$ [s] or $W_{24}(t)$ at about $t=210$ [s]), thus indicating loss of the corresponding edge. In the same spirit, Fig.~\ref{fig:num_edges} shows the total number of edges {$|\hat\calE|$ of the unweighted graph $\hat\calG$ (i.e., of non-zero weights $W_{uv}$, see Definition~\ref{def:unweighted}) during the group motion. These results highlight the time-varying nature of graph $\calG$ which, as explained in the previous sections, is not constrained to keep a given fixed topology but is free to lose or gain edges as long as infinitesimal rigidity of the formation is preserved.}
\begin{figure}[!t]
\begin{center}
{\includegraphics[width=0.45\columnwidth]{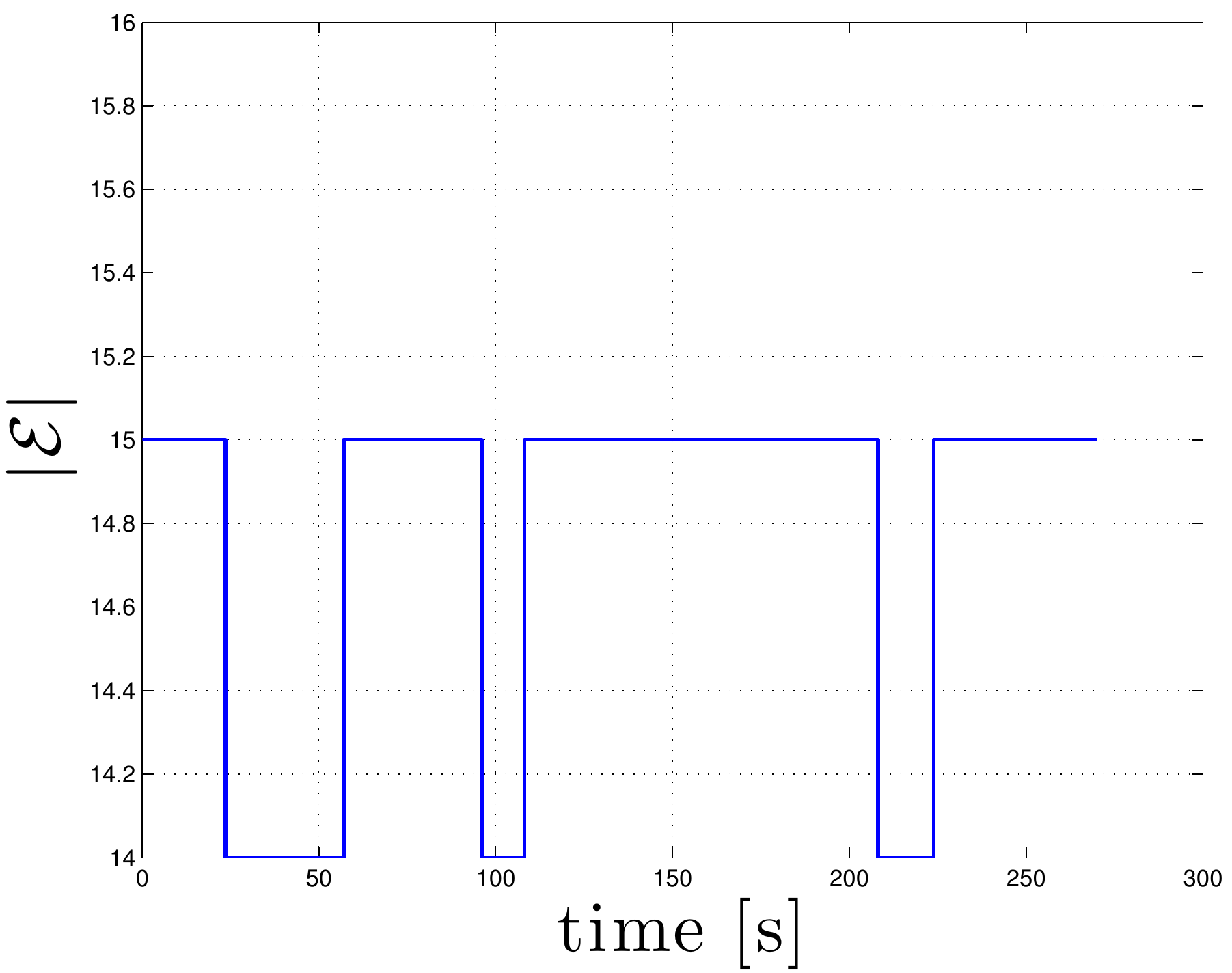}}
  \caption{Total number of edges in the graph $\calG$ during the group motion}\label{fig:num_edges}
\end{center}
\end{figure}

Finally, Figs.~\ref{fig:QC_pos_exp2}(a--f) report the behavior over time of $p_i(t)$ (the $i$-th agent position, solid lines) and of $p_{i,\mathrm{real}}(t)$ (the $i$-th quadrotor position, dashed lines) while tracking the motion of $p_i(t)$. The two position vectors result almost perfectly coincident, thus indicating a successful tracking performance of the quadrotors (and the soundness of our modeling assumptions). As a further confirmation of this fact, the norm of the overall tracking error defined as
\begin{equation}\label{eq:track_error}
e_{\mathrm{track}}(t)=\dfrac{\sum^N_{i=1}\|p_i(t) - p_{i,real}(t)\|}{N}
\end{equation}
is also reported in Fig.~\ref{fig:e_track_err_exp2}.
\begin{figure}[!t]
\begin{center}
\subfigure[]{\includegraphics[width=0.45\columnwidth]{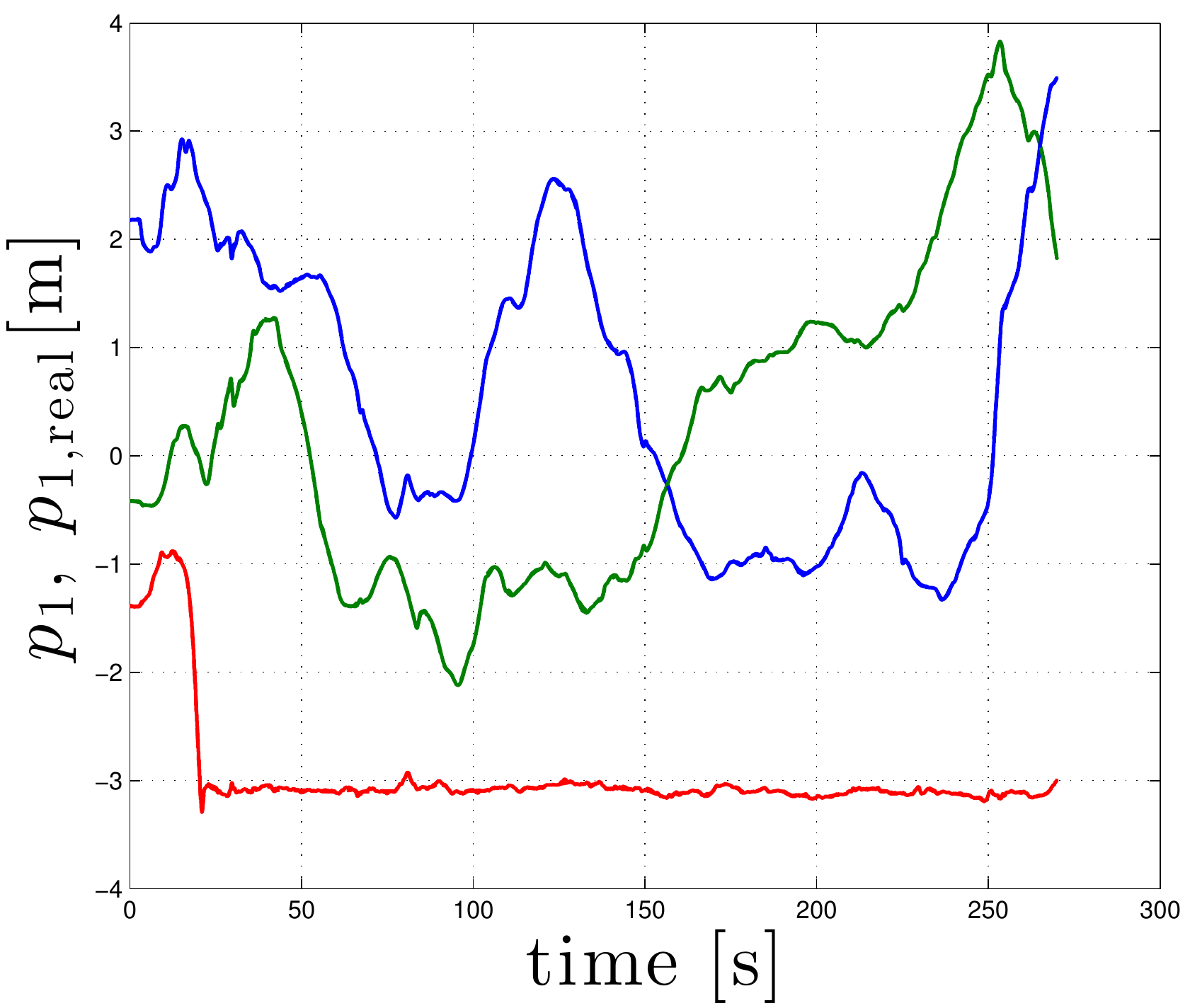}}
\subfigure[]{\includegraphics[width=0.45\columnwidth]{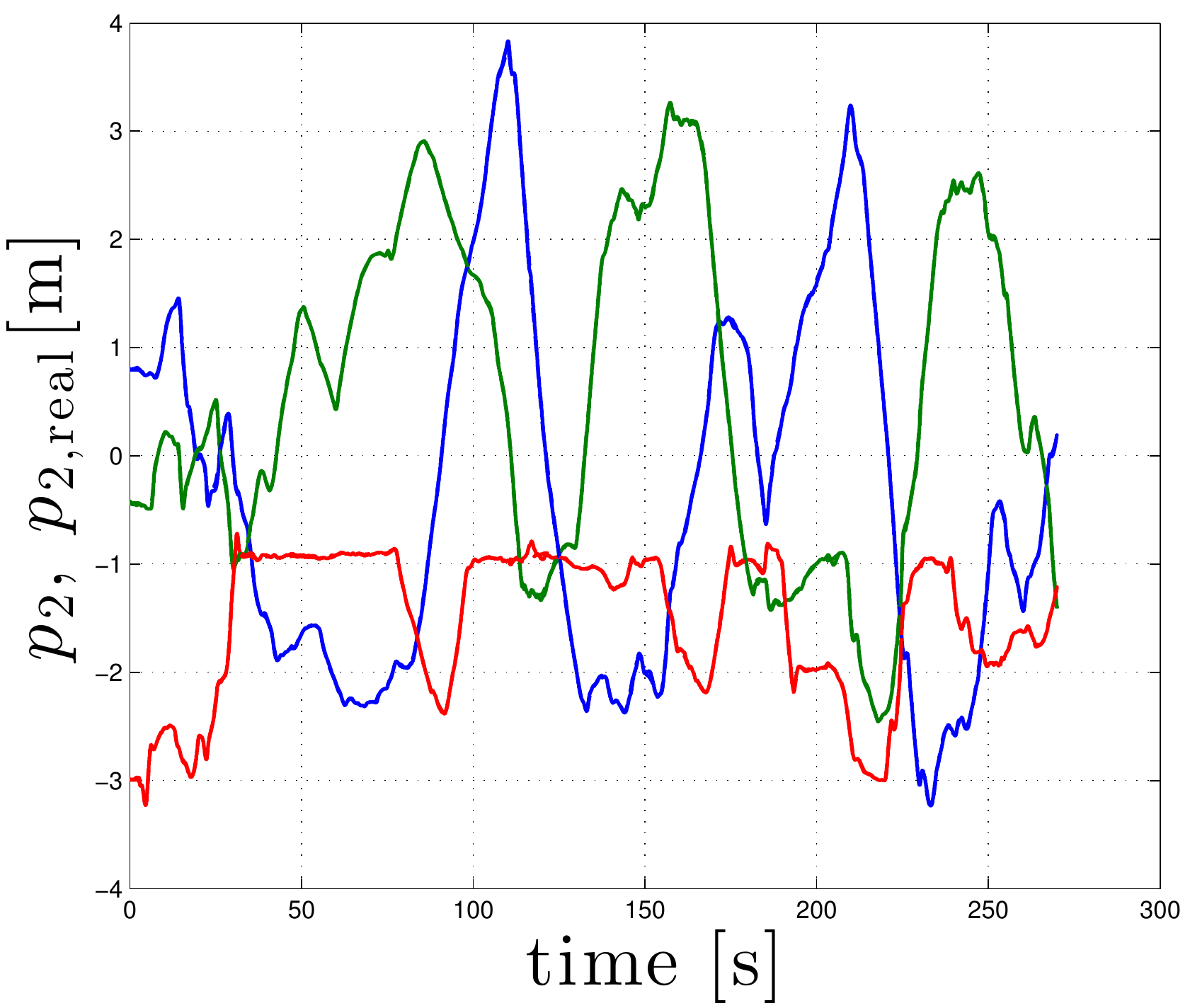}}\\
\subfigure[]{\includegraphics[width=0.45\columnwidth]{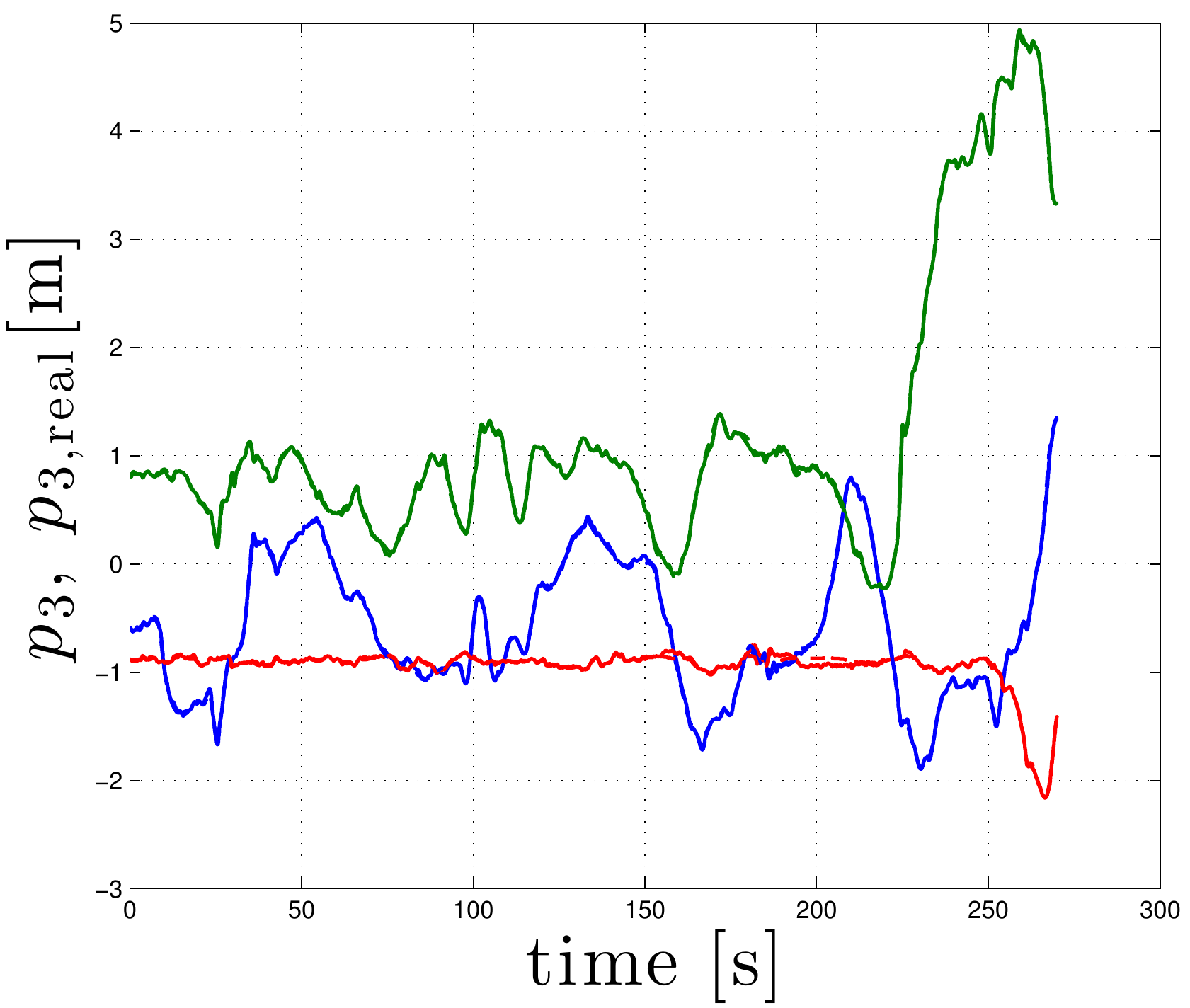}}
\subfigure[]{\includegraphics[width=0.45\columnwidth]{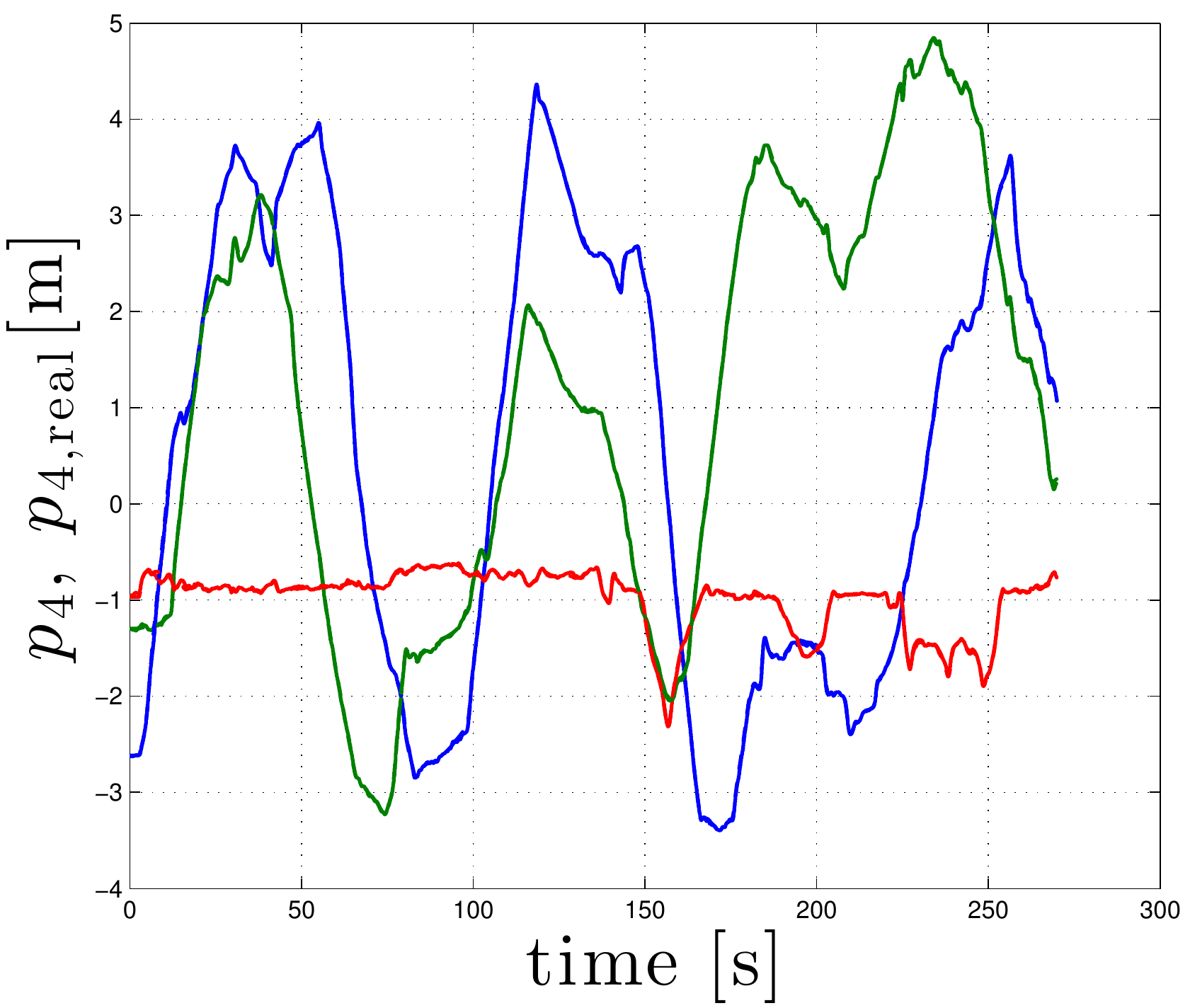}}\\
\subfigure[]{\includegraphics[width=0.45\columnwidth]{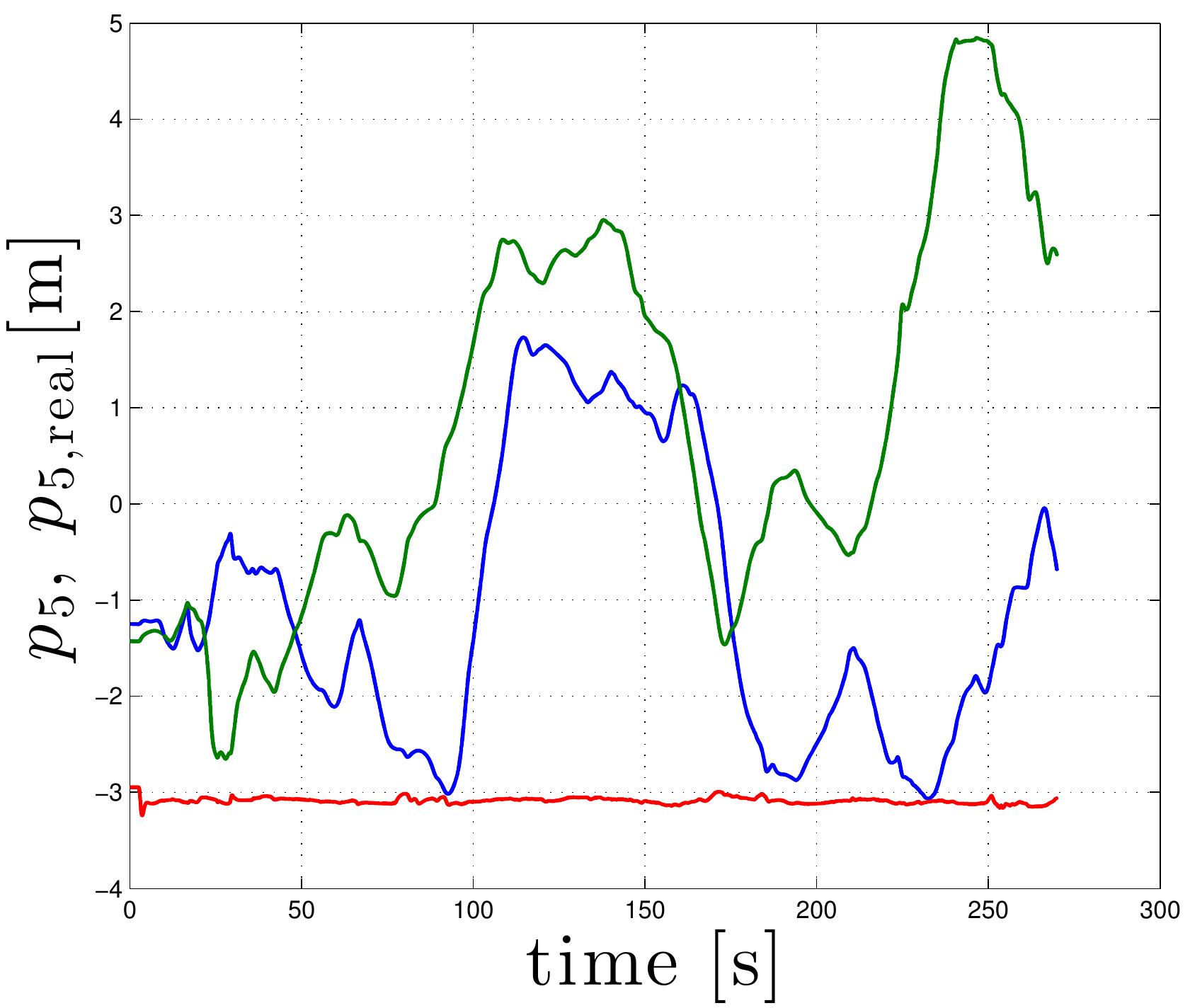}}
\subfigure[]{\includegraphics[width=0.45\columnwidth]{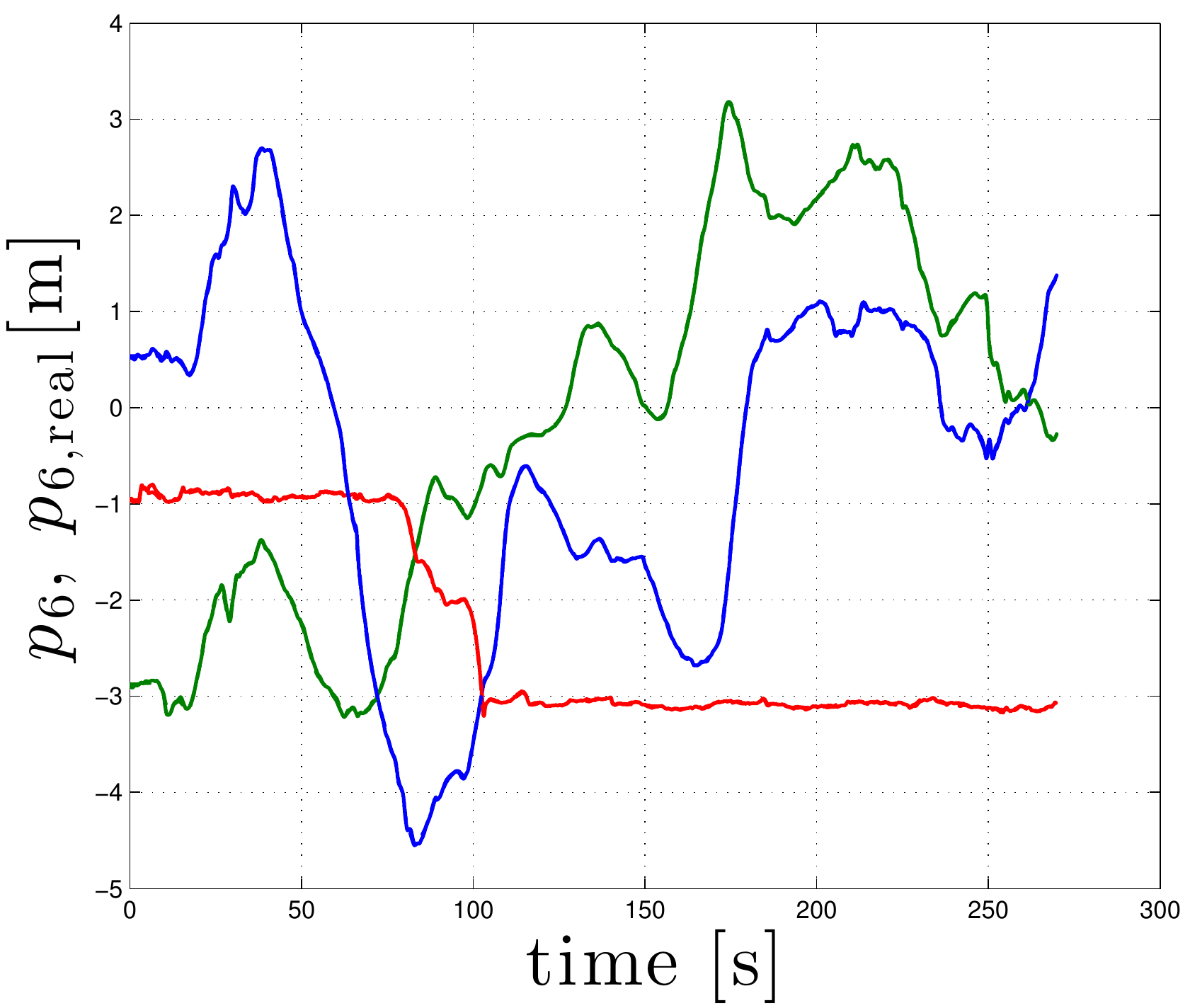}}
  \caption{Figs.~(a--f): behavior of $p_i(t)$ (solid) and $p_{i,real}$ (dashed): these are basically superimposed, showing the accuracy of the quadrotors in tracking the reference trajectory $p_i(t)$. {In the plots the following color code is used: blue/red/green solid/dashed lines correspond to the $x$/$y$/$z$ components of $p_i(t)$ and $p_{i,real}$}}\label{fig:QC_pos_exp2}
\end{center}
\end{figure}
\begin{figure}[!t]
\begin{center}
\includegraphics[width=0.45\columnwidth]{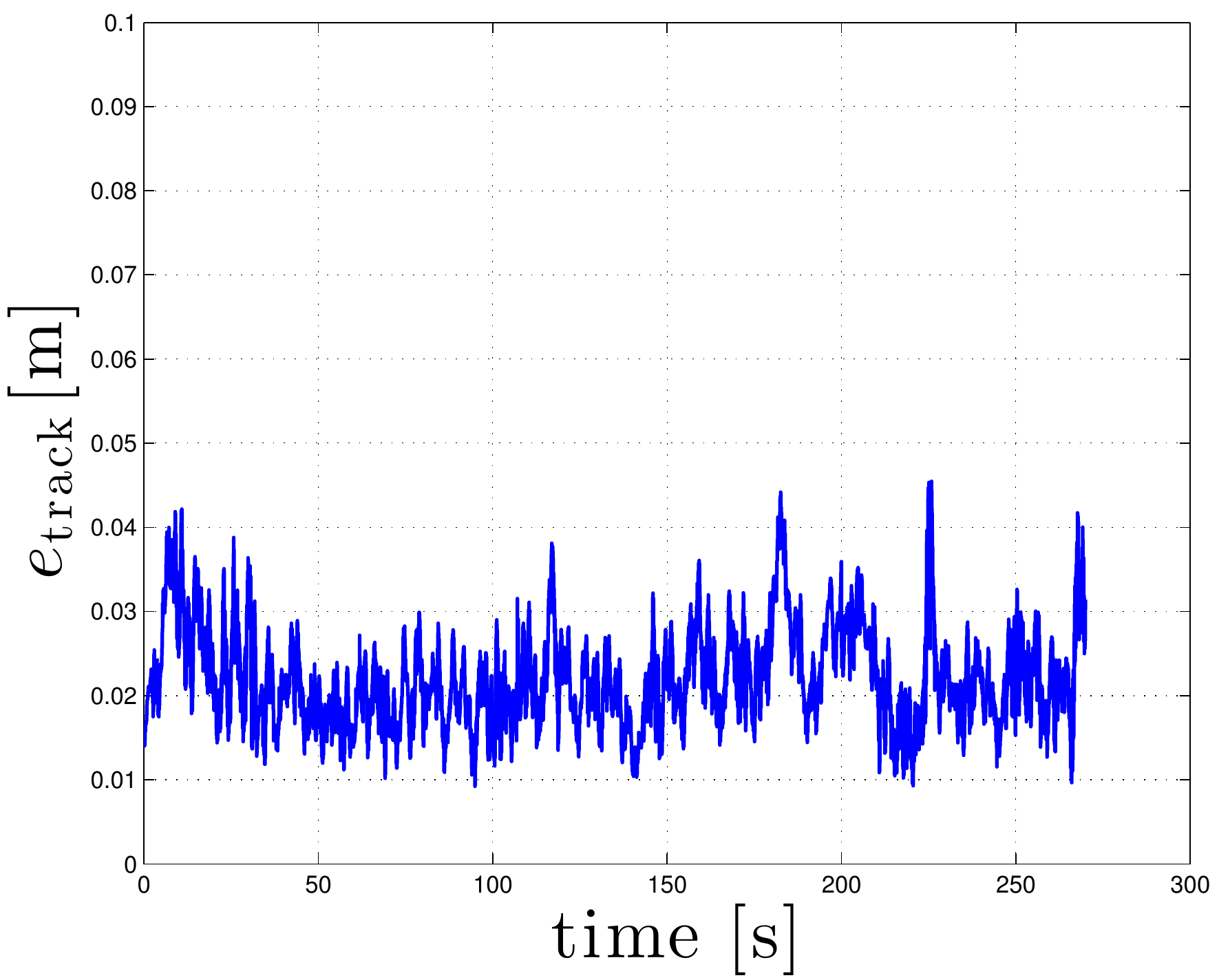}
  \caption{Behavior of the tracking error $e_{\mathrm{track}}(t)$ defined in~(\ref{eq:track_error}) showing again the good tracking performance of the $6$ quadrotors}\label{fig:e_track_err_exp2}
\end{center}
\end{figure}

\section{Concluding Remarks }\label{sec:conclusion}

This work presented a fully distributed solution for the rigidity maintenance control of a multi-robot system.  As discussed in the introduction, rigidity is an important architectural feature for multi-robot systems that enables, for example, formation keeping and localization using only {range-based measurements}.  The main theme of this work, therefore, was the distributed implementation of a number of algorithms for estimation and control in a multi-robot system related to rigidity maintenance.  In particular, we demonstrated how the \emph{rigidity eigenvalue} and eigenvector, used to decide if a formation is infinitesimally rigid, can be distributedly estimated using a suite of estimators based on dynamic consensus filters and the power iteration method for eigenvalue estimation.  The rigidity property also allowed for estimation of a common inertial reference frame using only \emph{range based measurements}, along with one single endowed agent that is able to sense both range and bearing.  The estimation of these quantities were then embedded in a gradient-based distributed control action ensuring each agent moves in a way that guarantees rigidity of the formation is maintained.  This control scheme also explicitly handles a variety of practical multi-robot constraints, including sensing and communication ranges, collision and obstacle avoidance, and line-of-sight requirements.   The validity of the proposed algorithms was demonstrated by a team of $6$ quadrotor UAVs flying in a cluttered environment.

This work also highlighted a number of directions for future research.  In particular, the estimation of the rigidity eigenvalue assumed that there is a separation between the rigidity eigenvalue and the next largest eigenvalue, i.e. $|\lambda_7  - \lambda_8| > 0$.  %While we showed in the experimental results that we can in fact enforce this separation, 
While the reported experimental results showed a large degree of robustness w.r.t.~this effect, there remain both theoretical and practical questions related to this problem. For instance, it would be interesting to complement the rigidity maintenance controller with an additional term meant to maintain a minimum separation among $\lambda_8$ and $\lambda_7$. %{to be adjusted, since we do not show any more how to keep $\lambda_8$ and $\lambda_7$ separated. Maybe we could mention here (briefly in words) how one could do it by using the action on $\lambda_8$ as we did.}  
Another extension is to relax the requirement for having a special agent endowed with additional sensing capabilities (i.e. range and bearing).  This would lead to a distributed solution involving only range measurements for all robots in the ensemble. 

Despite these remaining challenges, this work has successfully demonstrated the power of distributed strategies for multi-robot systems.  Indeed, it is remarkable to observe the behavior of the multi-robot team running many distributed filters to achieve a common global objective.  The refinement of these strategies will no doubt become an important requirement as autonomous multi-robot systems are integrated more into a variety of application domains.

{
\section*{Appendix A: Index to Multimedia Extensions}\label{sec:multimedia}

The multimedia extensions to this article are at:
\url{http://www.ijrr.org.}

\begin{center} 
\begin{tabular}{l|l|p{5cm}}
\hline
 Extension  & Type & Description \\\hline
  1 & Video & { Experiments of rigidity maintenance with a group of UAVs}\\
\hline
\end{tabular}
\end{center}
}

{
\section*{Appendix B:  Rigidity Matrix Example}\label{apdx:rigidity}
The development of the alternative representation of the Rigidity Matrix given in Proposition \ref{prop:alt_rigidity} of the document is aided by a simple example.  To begin, we make some qualitative observations of the rigidity matrix.  For this example we consider a framework in $\reals^2$ with the complete graph on 3 nodes (denoted $K_3$).  The rigidity matrix can be written by inspection as
\beas R(p) &=& \\
&&\hspace{-50pt} \leftm{cccccc}\hspace{-5pt} p^x_1-p^x_2 &\hspace{-5pt} p^y_1-p^y_2 &\hspace{-5pt} p^x_2-p^x_1 &\hspace{-5pt} p^y_2-p^y_1 &\hspace{-5pt} 0 & 0 \hspace{-5pt}\\
				      \hspace{-5pt}p^x_1-p^x_3 &\hspace{-5pt} p^y_1-p^y_3 &\hspace{-5pt} 0 &\hspace{-5pt} 0 &\hspace{-5pt} p^x_3-p^x_1 &\hspace{-5pt} p^y_3-p^y_1 \hspace{-5pt}\\
				   \hspace{-5pt}   0 &\hspace{-5pt} 0 &\hspace{-5pt}\hspace{-5pt} p^x_2-p^x_3 & p^y_2 - p^y_3 &\hspace{-5pt} p^x_3-p^x_2 &\hspace{-5pt} p^y_3-p^y_2\hspace{-5pt}
				    \rightm .
\eeas

For the complete graph and an arbitrary orientation assigned to each edge, the incidence matrix $E(\mc{G})$ can be written as
$$E(\mc{G}) = \leftm{ccc} 1 & 1 & 0 \\ -1 & 0 & 1 \\ 0 & -1 & -1 \rightm.$$

The transpose of the incidence matrix functions as a ``difference" operator.  If the position of each agent is formed into a vector, we have
$$ E(\mc{G})^T \leftm{cc} p^x_1 & p^y_1 \\ p^x_2 & p^y_2 \\ p^x_3 & p^y_3 \rightm = \leftm{cc} p^x_1-p^x_2 & p^y_1 - p^y_2 \\ p^x_1-p^x_3 & p^y_1-p^y_3 \\ p^x_2-p^x_3 & p^y_2-p^y_3 \rightm.$$
The point to illustrate here is that this difference operation between positions is \emph{redundantly embedded inside the rigidity matrix}.  This fact can be made more precise by defining a \emph{directed local graph at node $v_i$} from the graph $\mc{G}$ as in Definition \ref{def:localgraph} in the main text.  Intuitively, the idea is that each node only has some local information about the connectivity of the entire graph; indeed, it only knows of the existence of other nodes that it can sense.  In this way, we can define a sub-graph induced by each node in the graph as follows.

Let $\mc{G}_j = (\mc{V},\mc{E}_j)$ be a sub-graph induced by node $v_j$ such that 
$$\mc{E}_j = \{(v_j,v_i) \, | \, \{v_i,v_j\} \in \mc{E}\}.$$
Here we emphasize that the original graph $\mc{G}$ is undirected, while in the new induced graph $\mc{G}_i$ we assign a direction to the edge such that node $v_j$ is always the tail.  Furthermore, observe that $\cup_j \mc{G}_j = \mc{G}$.\footnote{Here, we assume that the directed edges $(v_i,v_j)$ and $(v_j,v_i)$ are equivalent to the undirected edge $\{v_i,v_j\}$.} This is illustrated in Figure \ref{fig:graphex}.  
\begin{figure}[!t]
\begin{center}
	\subfigure[A graph.]{\scalebox{.49}{\includegraphics{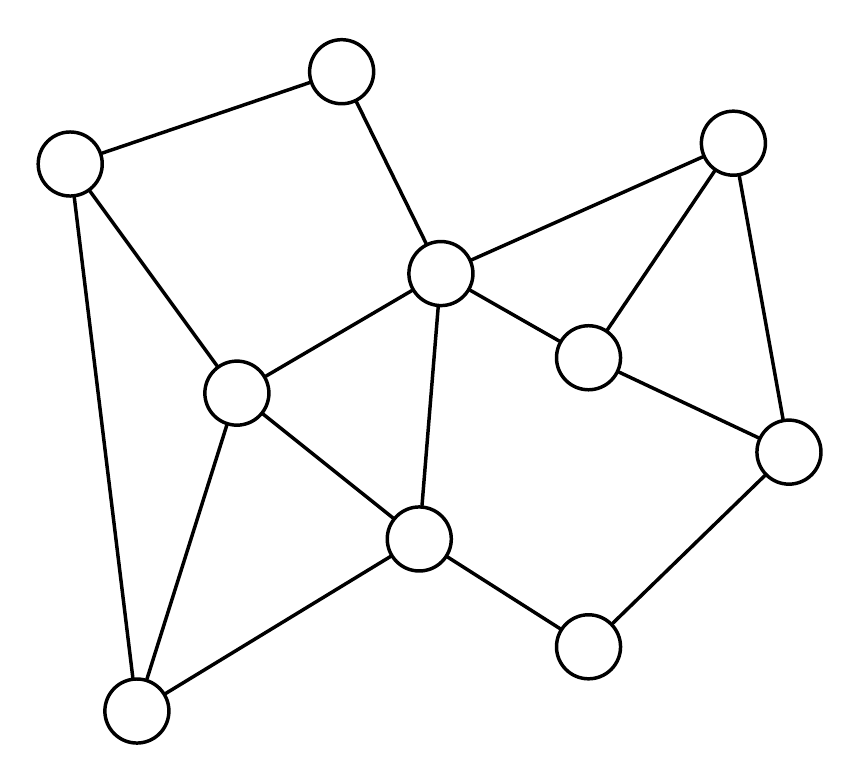}}\label{fig:fullgraph}}
	\subfigure[Local directed graph at a node.]{\scalebox{.49}{\includegraphics{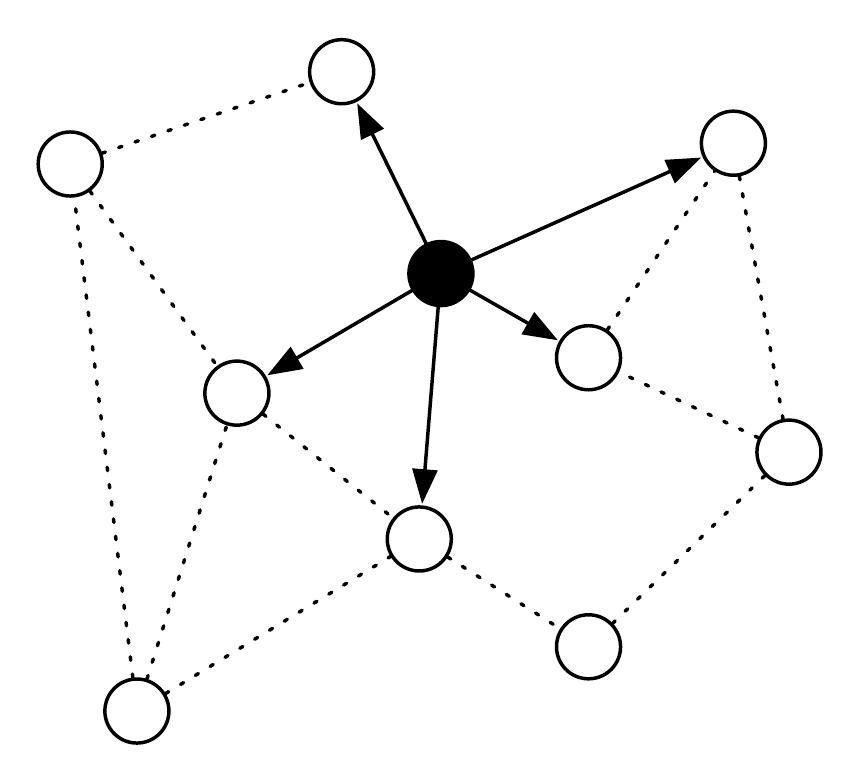}}\label{fig:localgraph}}
  \caption{Example of a directed local graph.} \label{fig:graphex}
\end{center}
\vspace{-20pt}
\end{figure}

To continue with the $K_3$ example, we can write the local incidence matrix for node $v_1$ as
$$ E_l(\mc{G}_1) = \leftm{ccc} 1 & 1 & 0 \\ -1 & 0 & 0 \\ 0 & -1 & 0 \rightm.$$
Note that this matrix is not truly an incidence matrix for the graph $\mc{G}_1$; ``placeholders" for the other edges in the graph $\mc{G}$ are kept.  As a result, the local incidence matrix is defined as $E_l(\mc{G}_j) \in \reals^{|\mc{V}| \times |\mc{E}|}$ to have zero-columns corresponding to the edges not in $\mc{E}_j$.\footnote{This representation also assumes that all the edges have been assigned a label, and this labeling is maintained even for the local graphs (local graphs do not relabel their edges; for example if edge 2 is not in local graph $\mc{G}_j$, then the second column of $E(\mc{G}_j)$ will be zero ).}

Now, consider the local incidence matrix as the difference operator,
$$E_l(\mc{G}_1)^T \leftm{cc} p^x_1 & p^y_1 \\ p^x_2 & p^y_2 \\ p^x_3 & p^y_3 \rightm = \leftm{cc} p^x_1-p^x_2 & p^y_1 - p^y_2 \\ p^x_1-p^x_3 & p^y_1-p^y_3 \\ 0 & 0 \rightm.$$
Note that this is identical to the the first 2 columns of the rigidity matrix $R(p)$.  In fact, this shows that the rigidity matrix can be written entirely in terms of local incidence matrices, as formally stated in Proposition \ref{prop:alt_rigidity} of the main document.

}

\section*{ACKNOWLEDGMENTS}

Part of Heinrich B\"ulthoff's research was supported  by the Brain Korea 21 PLUS Program through the National Research Foundation of Korea funded by the Ministry of Education. Correspondence should be directed to Heinrich~H.~B\"ulthoff.

{\small
\bibliographystyle{plainnat}
\bibliography{./bibCustomAlias,./bibCustom}

\end{document}